\documentclass[12pt]{article}
\usepackage[utf8]{inputenc}
\usepackage{fullpage}
\usepackage{authblk}

\usepackage{verbatim}
\usepackage[T1]{fontenc}
\usepackage[tt=false,type1=true]{libertine}

\usepackage{hyperref}

\usepackage{amsthm}

\usepackage{complexity}
\usepackage{booktabs}
\let\R\relax
\newcommand{\R}{\mathbb{R}}

\usepackage{amsmath}
\usepackage[libertine]{newtxmath}
\usepackage{mathtools}
\usepackage{thm-restate}
\usepackage{bm}
\usepackage{bbm}

\usepackage{authblk}

\usepackage{xcolor}

\usepackage{tikz}
\usetikzlibrary{shapes,arrows,positioning,fit,backgrounds,decorations.pathreplacing,calc}

\usepackage{pgfplots}
\pgfplotsset{compat=1.18}

\renewcommand{\vec}[1]{\bm{#1}}

\newcommand{\N}{\mathbb{N}}

\newcommand{\vx}{\vec{x}}
\newcommand{\vm}{\vec{m}}

\DeclareMathOperator{\FIXP}{\mathsf{FIXP}}

\newcommand{\Tgood}[1][]{\mathcal{T}_{\mathsf{sound}#1}}

\newcommand{\gaplabel}{\textsc{GapLabelCover}}

\newcommand{\revenue}{\mathsf{Revenue}}
\newcommand{\welfare}{\mathsf{LWel}}
\newcommand{\poa}{\mathsf{PoA}}
\newcommand{\opt}{\mathsf{OPT}}
\newcommand{\signal}{\mathsf{sig}}

\newcommand{\msafe}[1][]{m_{\mathsf{safe} #1}}

\newcommand{\defeq}{\coloneqq}

\newclass{\polyAPX}{poly-APX}

\usepackage[american]{circuitikz}

\theoremstyle{plain}
\newtheorem{theorem}{Theorem}[section]
\newtheorem{proposition}[theorem]{Proposition}
\newtheorem{lemma}[theorem]{Lemma}
\newtheorem{claim}[theorem]{Claim}
\newtheorem{corollary}[theorem]{Corollary}
\newtheorem{conjecture}[theorem]{Conjecture}
\theoremstyle{definition}
\newtheorem{definition}[theorem]{Definition}

\theoremstyle{remark}
\newtheorem{remark}[theorem]{Remark}



\newcommand{\declarecolor}[2]{\definecolor{#1}{RGB}{#2}\expandafter\newcommand\csname #1\endcsname[1]{\textcolor{#1}{##1}}}
\declarecolor{White}{255, 255, 255}
\declarecolor{Black}{0, 0, 0}
\declarecolor{Maroon}{128, 0, 0}
\declarecolor{Coral}{255, 127, 80}
\declarecolor{Red}{182, 21, 21}
\declarecolor{LimeGreen}{50, 205, 50}
\declarecolor{DarkGreen}{0, 80, 0}
\declarecolor{Purple}{146, 42, 158}
\declarecolor{Navy}{0, 0, 128}
\declarecolor{LightBlue}{84, 101, 202}
\definecolor{mydarkblue}{rgb}{0,0.08,0.45}

\hypersetup{ %
    pdftitle={},
    pdfkeywords={},
    pdfborder=0 0 0,
    pdfpagemode=UseNone,
    colorlinks=true,
    linkcolor=Maroon,
    citecolor=Navy,
    filecolor=Black,
    urlcolor=Black,
}

\usepackage{natbib}
\usepackage{cleveref}

\crefname{conjecture}{Conjecture}{Conjectures}
\crefname{claim}{Claim}{Claims}

\title{Tight Inapproximability for Welfare-Maximizing Autobidding Equilibria}

\author[1]{Ioannis Anagnostides\thanks{Part of this work was performed while at Google DeepMind.}}
\author[2]{Ian Gemp}
\author[2,3]{Georgios Piliouras}
\author[4]{Kelly Spendlove}

\affil[1]{Carnegie Mellon University}
\affil[2]{Google DeepMind}
\affil[3]{Singapore University of Technology and Design}
\affil[4]{Google}
\affil[ ]{\small \texttt{ianagnos@cs.cmu.edu}, \texttt{imgemp@google.com}, \texttt{gpil@google.com},
\texttt{spendlove@google.com}}

\date{}

\begin{document}

\maketitle
\thispagestyle{empty}

\begin{abstract}
     We examine the complexity of computing welfare- and revenue-maximizing equilibria in autobidding second-price auctions subject to \emph{return-on-spend (RoS)} constraints. We show that computing an autobidding equilibrium that approximates the welfare-optimal one within a factor of $2 - \epsilon$ is $\mathsf{NP}$-hard for any constant $\epsilon > 0$. Moreover, deciding whether there exists an autobidding equilibrium that attains a $1/2 + \epsilon$ fraction of the optimal welfare---unfettered by equilibrium constraints--- is $\mathsf{NP}$-hard for any constant $\epsilon > 0$. This hardness result is tight in view of the fact that the price of anarchy (PoA) is at most $2$, and shows that deciding whether a non-trivial autobidding equilibrium exists---one that is even marginally better than the worst-case guarantee---is intractable. For revenue, we establish a stronger logarithmic inapproximability, while under the projection games conjecture, our reduction rules out even a polynomial approximation factor. These results significantly strengthen the $\mathsf{APX}$-hardness of Li and Tang (AAAI '24). Furthermore, we refine our reduction in the presence of ML advice concerning the buyers' valuations, revealing again a close connection between the inapproximability threshold and PoA bounds. Finally, we examine relaxed notions of equilibrium attained by simple learning algorithms, establishing constant inapproximability for both revenue and welfare.
\end{abstract}

\clearpage
\thispagestyle{empty}
\tableofcontents
\clearpage
\setcounter{page}{1}

\section{Introduction}

The proliferation of autobidding systems in recent years has fundamentally transformed online advertising~\citep{Aggarwal24:Autobidding}. Rather than specifying bids on a granular, per-keyword basis, advertisers can now submit high-level objectives---such as maximizing conversions subject to \emph{return-on-spend (RoS)} or budget constraints---to automated agents that bid on their behalf to optimally meet those objectives. This new paradigm drastically simplifies the interface for the advertisers, while simultaneously enabling more effective real-time decision-making.

While autobidding systems confer tangible benefits, their widespread adoption raises pressing questions concerning market efficiency: when multiple autobidders interact under a particular auction format, what can we expect in terms of the system's behavior? A foundational result by~\citet{Aggarwal19:Autobidding} showed that the \emph{price of anarchy}---a measure of equilibrium suboptimality---with respect to \emph{liquid welfare} under parallel second-price auctions is at most $2$; that is, even the worst-case equilibrium delivers at least half of the optimal liquid welfare. However, when billions of dollars are at stake---global advertising expenditure topped \$1 trillion for the first time in 2024~\citep{WARC24}---an approximation factor of $2$ is hardly reassuring.

In this paper, we examine the computational complexity of finding welfare- and revenue-maximizing autobidding equilibria. The computational lens is especially relevant here in light of the scale and dynamic nature of modern advertising platforms. Barriers to computing high-quality equilibria fundamentally constrain the effectiveness of general-purpose intervention mechanisms. Conversely, the existence of near-optimal approximation algorithms would imply that a system designer could, in principle, steer the autobidders toward high-quality equilibria.

The price of anarchy bound of~\citet{Aggarwal19:Autobidding} already provides a baseline: any autobidding equilibrium guarantees an approximation factor of $2$, even relative to the unconstrained welfare-optimal allocation. Furthermore, \citet{Li24:Vulnerabilities} recently showed that maximizing welfare or revenue over the set of autobidding equilibria is \APX-hard, implying that---subject to $\P \neq \NP$---the problem cannot be approximated within \emph{some} constant factor (that could be very close to 1). This leaves a substantial gap between the known hardness threshold and the baseline guarantee of 2. Our main contribution is to close this gap, establishing tight inapproximability for computing welfare- and revenue-maximizing autobidding equilibria.

Before we present our results, we provide the necessary background on autobidding systems.

\paragraph{Background} We consider a setting comprising $n$ autobidders and an auctioneer allocating $k$ items. Each autobidder $i$ submits a bid $b_{i j}$ for item $j \in [k]$. We denote the values of autobidder $i \in [n]$ by $\{ v_{ij} \}_{j \in [k]}$, which are additive. We assume that autobidders apply \emph{uniform bid scaling}, meaning the bid profile of each bidder is a scalar multiple of their valuation profile: $b_{i j} = m_i \cdot v_{i j}$, for some multiplier $m_i > 0$. As a result, each autobidder optimizes only a single parameter. This restriction is well-motivated: for auction formats that are truthful in the quasi-linear model, uniform bid scaling is known to be optimal~\citep{Aggarwal19:Autobidding}.

The auctioneer takes as input the bids and produces a potentially fractional allocation $\vec{x} \in \R_{\geq 0}^{n \times k}$ satisfying $\sum_{i=1}^n x_{ij} = 1$ for each item $j \in [k]$, along with payments $\vec{p} \in \R^{n \times k}_{\geq 0}$. Throughout this paper, we consider a (parallel) second-price allocation rule, whereby each item is allotted to the highest bidder at a price equal to the second-highest bid; in the case of ties, careful handling is required to guarantee existence of an equilibrium~\citep{Li24:Vulnerabilities}.

\paragraph{RoS constraint} The advertiser (or buyer) provides their autobidder with a \emph{return-on-spend (RoS)} target, denoted by $\tau_i \geq 1$. The goal of each autobidder $i$ is to select bids that maximize the total value $\sum_{j = 1}^k x_{ij} v_{ij}$ subject to the \emph{RoS constraint} parameterized by $\tau_i$:
\begin{equation}
    \label{eq:RoS}
    \max_{\vec{b}_i} \sum_{j=1}^k v_{i j} x_{i j}(\vec{b}) \quad \text{subject to} \quad \tau_i \cdot \sum_{j=1}^k p_{i j}(\vec{b}) \leq \sum_{j=1}^k v_{i j} x_{i j}(\vec{b}).
\end{equation}
By rescaling the values, we can assume without loss of generality that $\tau_i = 1$ for each autobidder $i$. A compelling aspect of our lower bounds is that they hold even in the presence of solely RoS constraints, without any additional budget constraints. In this setting, \emph{liquid welfare} coincides with the usual notion of welfare (\Cref{sec:prels} contains the precise definitions).

An \emph{autobidding equilibrium} (\Cref{def:autobidequil}, introduced by~\citealp{Li24:Vulnerabilities}) describes a stable state where multiple autobidders simultaneously maximize value subject to individual RoS constraints. Throughout this paper, when we use the term ``autobidding equilibrium,'' it is with respect to a second-price auction format. We caution that while we adopt the terminology of~\citet{Li24:Vulnerabilities}, this concept appears under different names in the literature (\emph{e.g.},~\citealp{Filos24:Ppad}).

\subsection{Our results}

We begin with the (liquid) welfare objective, where we establish a tight inapproximability result. 

\begin{theorem}
    \label{theorem:informal-welfare}
For any constant $\epsilon > 0$, it is \NP-hard to compute an autobidding equilibrium that approximates the welfare-optimal one within a factor $2 - \epsilon$.
\end{theorem}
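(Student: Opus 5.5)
We will reduce from \gaplabel{} (Label Cover with perfect completeness and arbitrarily small constant soundness $\delta$, which is \NP-hard by the PCP theorem combined with parallel repetition). Given a bipartite instance $(U, V, E, \Sigma_U, \Sigma_V, \{\pi_e\})$, we build an autobidding instance with RoS-constrained bidders (normalized so $\tau_i = 1$) under uniform bid scaling. The construction has two kinds of items. The first are \emph{high-value items}, one per vertex. Their value can be captured almost fully only if the corresponding bidder wins them. The second are \emph{gadget items} with small value. They are used to generate or absorb slack in the RoS constraints.

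For each vertex $w$ and label $\sigma$ we introduce a ``label bidder'' $(w,\sigma)$. For each edge $e=(u,v)$ we add consistency items on which $(u,\sigma)$ and $(v,\pi_e^{-1}\ldots)$ bidders compete. The prices on these items are chosen so that the following holds. A label bidder whose multiplier is large enough to win its vertex's high-value item pays a second price. That price can be covered by RoS slack only if it also wins consistency items against bidders whose labels agree with it under the projections. A bidder with an inconsistent neighborhood must keep a low multiplier $\msafe$. In that case the high-value item goes to a default ``dummy'' bidder of roughly half the value.

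Calibrating the values so that every equilibrium attains at least $\approx\frac12$ of optimal welfare mirrors the PoA-$2$ bound. We will then establish three properties.

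\textbf{(Completeness.)} If a labeling satisfies all constraints, setting high multipliers for the chosen label bidders and $\msafe$ for all others yields an equilibrium. In this equilibrium every high-value item goes to its high-value bidder, so welfare is close to $\opt$.

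\textbf{(Soundness.)} From any equilibrium we extract a randomized labeling. Each vertex picks, among its label bidders with high multipliers, one with probability proportional to its share of the high-value item. The RoS constraints then force the welfare above the $\frac12$ baseline to be charged to edges whose labels are consistent. Hence welfare at least $(\frac12+\epsilon')\opt$ would imply a labeling satisfying more than a $\delta$ fraction of edges, a contradiction for small $\delta$.

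\textbf{(Existence and the ratio.)} An equilibrium always exists (the all-$\msafe$ profile, with ties resolved per \citet{Li24:Vulnerabilities}). Its welfare is about $\frac12\opt$, so the gap between the best equilibrium in the two cases is $2-\epsilon$.

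The main obstacle is soundness. Equilibria may be fractional, may rely on tie-breaking, and may involve many label bidders per vertex with intermediate multipliers. Bounding welfare therefore requires a careful accounting argument showing that RoS slack cannot be "manufactured" without consistent edges. I would first try to prove that in any equilibrium each bidder's multiplier lies in a small discrete set, by exploiting the second-price structure and gadget prices. After that, the decoding reduces to a clean combinatorial counting. Finally, the additive losses from gadget items, of order $1/\text{poly}$ of the high values, must be kept below $\epsilon$. This is done by scaling the gadget values down polynomially.
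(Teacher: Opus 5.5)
Your top-level plan matches the paper: a gap reduction from label cover with perfect completeness, with auxiliary gadgets scaled down so they do not move the welfare. The welfare mechanism you sketch, however, cannot give a gap near $2$.

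A label bidder valuing its vertex item at $v$ bids at least $v$, because multipliers are at least $1$. So keeping its multiplier low does not hand the item to a dummy valuing it at $v/2$. The dummy can win only by paying at least $v$, so it needs RoS slack of at least $v/2$ earned on other items. You never say where that slack comes from, or why it is present exactly in the soundness case.

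Even granting the slack, the gap is capped at $3/2$. Let $W$ be the equilibrium welfare, $L$ the optimal value of the inefficiently allocated items, and $D$ the value actually realized on them. Then $\opt = W - D + L$, and $L \le \sum_j p_j \le W$ as in the proof of \Cref{prop:poa-ub}. So $D \ge L/2$ gives $\opt \le \tfrac{3}{2} W$: every equilibrium of your instances is already a $3/2$-approximation.

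To approach $2$, three things are needed. The usurper must value the usurped item at essentially zero. It must be funded by slack worth essentially that item's full value. And whether it uses that slack must depend on the labeling. An always-funded dummy that loses is pushed to multiplier $M$ by maximal pacing, so your label bidder would have to pay $Mv/2$ to keep the item.

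This is why the paper places the loss at the edges. The edge autobidder buys an item worth $1+\epsilon$ at price at most $(1+\epsilon)/M$, unless some $m_{e,\sigma}=M$ bids $1$ on it. Only when it buys cheaply does it have the surplus to usurp the incumbent edge item, which is worth $1$ to the incumbent and $(1+\epsilon)/M$ to itself (\Cref{lemma:stealing}). Welfare is then about $|E|$ plus the number of satisfied edges.

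Second, your claim that consistency-item prices ``are chosen so that'' slack exists only with projection-consistent neighbors is the entire reduction, and you leave it unconstructed; your $(v,\pi_e^{-1}\ldots)$ is a placeholder. It also has the wrong polarity for direct competition. In a second-price auction another bidder's high multiplier \emph{raises} the price you pay. An item shared by $(u,\sigma)$ and $(v,\Pi_e(\sigma))$ therefore naturally implements a negation, not the conjunction you need.

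The paper obtains conjunction by composing a \textsf{NAND} gadget with a \textsf{NOT} gadget (\Cref{lemma:nand,lemma:negation}). Each uses reserve prices (simulated by auxiliary bidders) and maps inputs in $[1,1+O(\epsilon)]\cup\{M\}$ to outputs in $[1,1+O(\epsilon)]\cup\{M\}$. Errors are tracked through the composition via conservative extensions (\Cref{lemma:extension}). The edge autobidder's price then encodes $\bigvee_{\sigma}(z_{u,\sigma}\wedge z_{v,\Pi_e(\sigma)})$.

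You also lack a uniqueness mechanism. Nothing prevents several labels of one vertex from having high multipliers and collecting slack on different edges; then your randomized decoding bears no relation to the welfare. The paper's label-assignment gadget (\Cref{lemma:assignment}) handles this with a common item worth $K \gg M$. It forces all but at most one multiplier per vertex to within $O(M|\Sigma|/K)$ of $1$.

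Your intended first step, showing that multipliers lie in a small discrete set, is exactly what these lemmas accomplish. It is the substance of the proof, not a preliminary. (Minor: the all-$\msafe$ profile is generally not an equilibrium, since any bidder left with slack violates maximal pacing; existence should be taken from \Cref{prop:existence}.)
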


This matches the PoA bound of 2, establishing that no non-trivial approximation algorithm exists for this problem.\footnote{Computing a $2$-approximation is in $\FIXP$ by virtue of the membership result of~\citet{Filos24:Ppad}. A problem in $\FIXP$ cannot be \NP-hard unless $\coNP \subseteq \NP_{\mathbb{R}}$, where $\NP_{\mathbb{R}}$ is the class of problems polynomial-time reducible to deciding the existential theory of the reals~\citep{Etessami07:Complexity}. For context, if a problem in $\PPAD$ is \NP-hard then $\NP = \coNP$~\citep{Johnson88:Easy}.} 
In other words, while any autobidding equilibrium guarantees a $2$-approximation on account of the price of anarchy bound of~\citet{Aggarwal19:Autobidding}, we cannot hope to improve upon this factor. For a decision version of this question, we show the following.

\begin{theorem}
    For any constant $\epsilon > 0$, deciding whether there exists an autobidding equilibrium that attains at least a $1/2 + \epsilon$ fraction of the (unconstrained) optimal welfare is \NP-hard.
\end{theorem}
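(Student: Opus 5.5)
We reduce from \gaplabel{} (equivalently, a gap problem produced by the PCP theorem together with parallel repetition). Given a label cover instance and any constant $\delta>0$, it is \NP-hard to tell whether the instance is fully satisfiable or whether every labeling satisfies at most a $\delta$ fraction of the edges. From such an instance we build, in polynomial time, an autobidding instance with RoS targets normalized to $1$. We then show two things. In the YES case, there is an autobidding equilibrium whose welfare is at least $(1-O(\delta))\cdot\opt$, where $\opt$ is the unconstrained optimal welfare. In the NO case, every autobidding equilibrium has welfare at most $(1/2+O(\delta))\cdot\opt$. Taking $\delta$ small relative to $\epsilon$ gives the decision hardness at threshold $1/2+\epsilon$. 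The same gap also yields the $2-\epsilon$ search hardness of \Cref{theorem:informal-welfare}.

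The construction has two layers.

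\textbf{Selection gadgets.} For each vertex of the label cover graph we introduce a group of bidders, one per candidate label. Each such bidder can profitably sustain a high multiplier only if it is the unique ``active'' label of its vertex. Within a vertex, the label bidders compete for a private item of high value. RoS slack earned on that item lets the winner overbid elsewhere, which forces any equilibrium to essentially pick a single label per vertex.

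\textbf{Welfare-carrying items.} For each edge $(u,w)$ and each pair of labels consistent with the edge's projection, we add items that are valued highly by a designated ``efficient'' bidder. These items are also valued at roughly half that amount by a ``safe'' bidder with a large multiplier $\msafe$ (a bidder that wins by default). The efficient bidder can outbid the safe bidder only if it holds enough RoS slack. It gets that slack only when the labels chosen at $u$ and $w$ are consistent, which means the edge is satisfied. Values are scaled so that these edge items dominate $\opt$. Then welfare at equilibrium is about $\tfrac12\opt$ plus $\tfrac12\opt$ times the fraction of satisfied edges.

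The YES direction is a direct construction. Fix a satisfying labeling, set multipliers accordingly, and verify each bidder's RoS constraint and best-response (maximality) condition under second-price payments, including the careful tie-breaking of \Cref{def:autobidequil}.

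The NO direction is the main obstacle. From an arbitrary equilibrium we must extract a labeling, possibly fractional or randomized, whose satisfied-edge fraction upper-bounds the excess welfare above $\opt/2$. The difficulty is that equilibria may mix: several labels per vertex could be partially active, with fractional allocations at ties. The plan is to use RoS budget accounting to bound how much a vertex's label bidders can collectively overbid. We then decode a random label with probability proportional to each bidder's slack. A standard averaging argument shows that, in expectation, the decoded labeling satisfies an edge fraction proportional to the welfare excess. If mixing cannot be controlled this way, the fallback is to use label cover with the projection property and small soundness. A list-decoding argument then tolerates a constant number of active labels per vertex, at a cost of $\delta$ times a constant, which is still negligible.
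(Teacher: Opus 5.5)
\textbf{The edge gadget cannot give a factor-$2$ gap.} Your default winner (the safe bidder) values each contested item at about $v/2$. The efficient bidder always bids at least $v$, since multipliers lie in $[1,M]$. So whenever the safe bidder wins, it pays at least twice its value. Rerun the argument of \Cref{prop:poa-ub} with this information. Let $S'$ be the set of items not won exclusively by a top-value bidder. If every such item keeps at least half its top value, then
\[
\opt - \welfare(\vx) = \sum_{j \in S'} \Bigl( v_{i^*(j) j} - \sum_{i} x_{ij} v_{ij} \Bigr) \le \frac12 \sum_{j \in S'} v_{i^*(j) j} \le \frac12 \sum_{j} p_j \le \frac12 \welfare(\vx).
\]
Hence every equilibrium has welfare at least $\frac23\opt$, up to the negligible gadget terms. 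Your NO-case bound of $(1/2+O(\delta))\opt$ is therefore impossible; this design yields at most a $3/2$ gap.

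Relatedly, the safe bidder needs RoS slack of at least $v/2$ per stolen item, and that slack must come from value it wins elsewhere. So ``edge items dominate $\opt$'' contradicts your own gadget. The factor $2$ needs the structure of the tight PoA instance, which the paper embeds once per edge. The thief (the edge autobidder) values the stolen incumbent edge item at only $(1+\epsilon)/M$. Its slack comes from the edge item of value $1+\epsilon$, which it always wins cheaply and which is itself half of $\opt$ (\Cref{lemma:stealing}).

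\textbf{The encoding of the edge constraint is missing.} You assert without a mechanism that slack appears ``only when the labels at $u$ and $w$ are consistent.'' This requires computing $\bigvee_{\sigma}(z_{u,\sigma} \land z_{w,\Pi_e(\sigma)})$ inside the autobidding system. A single second-price interaction only takes a maximum over competitors' bids. That is an OR of ``high'' signals which \emph{raises} a bidder's price and \emph{removes} its slack.

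The paper uses exactly this polarity. The edge autobidder faces the bidders $m_{e,\sigma}$, each equal to $M$ iff both endpoints chose consistent labels. A satisfied edge drives its price to $1$ and leaves only an $\epsilon$ surplus. An unsatisfied edge leaves a large surplus, which maximal pacing forces it to waste on the incumbent edge item. The conjunctions themselves need the low-stakes \textsf{NAND} and \textsf{NOT} gadgets (\Cref{lemma:nand,lemma:negation}), which use reserve prices and the scaling $\eta$. The separation $[1,1+O(\epsilon)]\cup\{M\}$ is preserved through composition via conservative extensions.

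Your reversed polarity would additionally require negating an OR. It also has a second problem: a safe bidder holding unspent slack sits at $M$ by maximal pacing, which inflates the price the efficient bidder must beat.

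\textbf{The NO-case decoding is unnecessary.} With the right selection gadget you do not need randomized or list decoding. In \Cref{lemma:assignment}, all label bidders of a vertex value one common item at $K \gg M$, so the second-highest multiplier sets that item's price. This forces at most one multiplier to equal $M$ and the rest to be at most $1+(M+|\Sigma|)/K$. A partial labeling is then read off directly, with no randomness. Your description (slack on that item lets the winner overbid elsewhere) misses this price-penalty mechanism.
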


This result is tight: the problem becomes trivial when $\epsilon = 0$ since any autobidding equilibrium delivers at least half of the welfare-optimal allocation. As a result, we have shown that deciding whether there is an autobidding equilibrium that is even marginally better than the worst one is \NP-hard. Adopting the terminology of~\citet{Barman15:Finding}, we can summarize this finding as follows.

\begin{corollary}[Computing non-trivial autobidding equilibria is \NP-hard]
    \label{cor:nontrivial}
    For any constant $\epsilon > 0$, computing an autobidding equilibrium whose welfare is at least a $(1+\epsilon)$ fraction higher than that with the lowest welfare (that is, a non-trivial autobidding equilibrium) is \NP-hard.
\end{corollary}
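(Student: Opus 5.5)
The corollary should follow from the gap reduction behind the decision theorem, provided that reduction has one extra feature. Every instance it produces must contain a \emph{bad} autobidding equilibrium whose welfare is essentially the PoA worst case, namely at most $(1/2 + \epsilon')\cdot \opt$. Here $\opt$ is the unconstrained welfare optimum and $\epsilon'$ can be made arbitrarily small. I would first revisit the construction for the decision theorem, a reduction from \gaplabel{} to autobidding instances, and check or enforce this property. In both the completeness and the soundness cases there should be an equilibrium in which the ``default'' bidders win the contested items at low value. The natural candidate is the profile in which all bidders encoding labels are priced out, which should certify a welfare of about $\opt/2$. Call it $W_{\min}$, and note $W_{\min} \le (1/2+\epsilon')\opt$.

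With this in hand, the two cases of the gap reduction give a clean argument. In the soundness case, every autobidding equilibrium has welfare at most $(1/2+\epsilon)\opt$. In the completeness case, some equilibrium has welfare at least $(1-\epsilon)\opt$ (or at least $(1/2+\delta)\opt$ for a fixed $\delta$, which is all we need). Suppose there were a polynomial-time algorithm that outputs an equilibrium with welfare at least $(1+\epsilon_0)$ times the lowest-welfare equilibrium. In the completeness case it would return welfare at least $(1+\epsilon_0)W_{\min}$. That is not enough on its own, because $W_{\min}$ is only bounded from above. So I also need a lower bound. The PoA bound of \citet{Aggarwal19:Autobidding} gives $W_{\min} \ge \opt/2$. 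Combined with the previous step, this makes $W_{\min} = (1/2 \pm \epsilon')\opt$ on every produced instance. Hence a non-trivial equilibrium has welfare at least $(1+\epsilon_0)(1/2)\opt = (1/2+\epsilon_0/2)\opt$.

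Now choose the gap parameter of the reduction so that $\epsilon < \epsilon_0/2$. In the soundness case no equilibrium reaches $(1/2+\epsilon_0/2)\opt$, so no non-trivial equilibrium exists there, since $(1+\epsilon_0)W_{\min} \ge (1/2+\epsilon_0/2)\opt$ exceeds every equilibrium's welfare. In the completeness case, the good equilibrium has welfare at least $(1-\epsilon)\opt \ge (1+\epsilon_0)(1/2+\epsilon')\opt \ge (1+\epsilon_0)W_{\min}$ for small $\epsilon_0$, so a non-trivial equilibrium does exist. Whether the algorithm succeeds, or fails because none exists, therefore decides the \gaplabel{} instance. If $\opt$ is not computable directly, I would compare the returned equilibrium's welfare against the explicitly known bad-equilibrium welfare, which the reduction can compute. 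Either way the algorithm decides an \NP-hard problem.

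The main obstacle is the first step: certifying that the reduction's instances always admit an equilibrium of welfare close to $\opt/2$, and in particular one whose welfare is also not much above the minimum. If the bad profile is only an equilibrium in the soundness case, I would modify the gadget by adding a dominating ``safe'' bidder per item (in the spirit of \msafe) whose multiplier can be set to crowd out the label bidders in any instance while still satisfying RoS. This makes the bad equilibrium exist unconditionally.
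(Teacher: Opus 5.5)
Your argument is correct and follows essentially the paper's route. The paper reads the corollary off the welfare gap reduction: in the soundness case every equilibrium has welfare about $\opt/2$, in the completeness case some equilibrium has welfare about $\opt$, and the PoA bound gives $W_{\min} \ge \opt/2$. You are more careful than the paper in noting that the completeness case also needs a low-welfare equilibrium, and that equilibrium is already present in the construction. In the first case of \Cref{lemma:assignment}, all label multipliers equal $(M|\Sigma|+K)/(|\Sigma|+K)$ and the last item is split evenly; this is a genuine equilibrium with every RoS constraint binding. The \textsf{NAND}/\textsf{NOT} gadgets and \Cref{lemma:stealing} then make every edge autobidder capture its incumbent item, so the welfare is about $(1+\epsilon)|E| \approx \opt/2$ and your fallback ``safe bidder'' modification is unnecessary.
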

To place this into context, \citet{Barman15:Finding} established hardness for computing non-trivial (coarse) correlated equilibria in succinct multi-player games. \Cref{cor:nontrivial} provides an analogous hardness result for autobidding equilibria.

Our results settle the approximability of welfare maximization in (second-price) autobidding auctions, and significantly strengthen the previous \APX-hardness established by~\citet{Li24:Vulnerabilities}. We further note that while computing an autobidding equilibrium is \PPAD-hard in general~\citep{Li24:Vulnerabilities}, our reduction is based on a class of instances in which finding an autobidding equilibrium is trivial; it is the \emph{equilibrium selection} that drives our hardness result.

For revenue maximization, we establish a stronger inapproximability result, ruling out attaining even a (sub)logarithmic approximation factor.

\begin{theorem}
    \label{theorem:informal-revenue}
Computing an autobidding equilibrium that approximates the revenue-optimal one within some factor $\log^{\Omega(1)}(nk)$ is \NP-hard.
\end{theorem}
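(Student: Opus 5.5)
We reduce from \gaplabel{}, the gap version of Label Cover. The PCP theorem with parallel repetition gives label cover instances of size $N$, with alphabet of size $\log^{\Theta(1)} N$ and soundness $1/\log^{\Omega(1)} N$, for which it is \NP-hard to distinguish a fully satisfiable instance from one in which every labeling satisfies at most a $\delta$ fraction of the constraints. Under the projection games conjecture the soundness improves to $N^{-\Omega(1)}$, which is what gives the polynomial factor mentioned in the abstract. The goal is to build an autobidding instance in which
\begin{itemize}
\item a satisfying labeling yields an equilibrium of high revenue $R$, while
\item every equilibrium has revenue at most $O(\delta)\cdot R$ plus a small baseline in the sound case.
\end{itemize}
Since $\delta$ can be $1/\log^{\Omega(1)}(nk)$ and the instance stays polynomial, this gives the claimed factor.

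\textbf{Construction.} For each vertex $u$ and label $\ell$ we introduce a ``label bidder'' $(u,\ell)$. Low equilibrium multipliers encode ``not chosen'' and high multipliers encode ``chosen'', in the way the welfare reduction behind \Cref{theorem:informal-welfare} would presumably do. We enforce the choice as follows.
\begin{itemize}
\item \emph{Vertex gadget.} Each vertex $u$ gets items on which all its label bidders compete. The RoS constraints then allow at most one label per vertex to sustain a high multiplier: two high label bidders would pay each other's high second-price bids and violate RoS. A ``safe'' baseline bidder guarantees that a trivial low-revenue equilibrium always exists.
\item \emph{Edge gadget.} Each edge $(u,w)$ and each consistent pair $(\ell,\pi(\ell))$ gets a revenue item on which both $(u,\ell)$ and $(w,\pi(\ell))$ bid. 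Revenue, namely the second-highest bid, is large only when \emph{both} endpoints are high. This converts satisfied constraints into revenue.
\end{itemize}
Values are scaled so that the revenue from edge items dominates all gadget revenue by a factor larger than $1/\delta$. This domination is what lets the loss grow to $\log^{\Omega(1)}$ rather than stay at the constant $2$: revenue, unlike welfare, has no price-of-anarchy floor.

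\textbf{Key steps.}
\begin{enumerate}
\item \emph{Completeness.} Given a satisfying labeling, set the chosen label bidders high and the rest low. Verify each bidder's RoS constraint and best-response condition using the tie-breaking conventions of \Cref{def:autobidequil}. The revenue is then about the number of edges times the item value.
\item \emph{Decoding.} From any equilibrium, extract for each vertex the set of labels whose bidders win or pay on edge items with substantial multiplier. Show via the vertex-gadget RoS argument that this set has size $O(1)$, or at least that its total ``weight'' is bounded.
\item \emph{Soundness.} Bound the revenue by the number of edges consistent under this fractional or list decoding. A random choice from the lists then satisfies an $\Omega(\text{revenue}/R)$ fraction of constraints, which contradicts soundness unless the revenue is at most $O(\delta) R$.
\end{enumerate}

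\textbf{Main obstacle.} The delicate part is step 2. Equilibria can be fractional and can mix multipliers in intermediate ranges. A bidder might keep a moderately high multiplier that props up second prices on many edge items without being ``chosen'', which would inflate revenue. My first attempt would be to normalize all values so that any bidder's total RoS slack is bounded by its value on its vertex items. Under that normalization, the revenue it can generate as a second-highest bidder across edge items is charged against its own value, and a vertex's total charge is bounded by a constant. That charge bound turns into the list-size bound needed in step 3.
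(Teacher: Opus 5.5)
\textbf{There is a genuine gap, and it lies in your edge gadget rather than in step~2.}

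Your edge item $(e,\ell)$ is bid on only by $(u,\ell)$ and $(w,\pi(\ell))$. Its price is therefore the smaller of the two bids, and one of these two label bidders pays it. For that price to be a factor $1/\delta$ larger when both are high than when one is low, the winner's value for the item can be at most a $\delta$ fraction of what it pays. So essentially all edge revenue must be financed by RoS slack that label bidders earn on other items.

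That slack has to be available to every label, since the reduction does not know which one will be chosen. But all but one label per vertex stay below $M$. With equal values they cannot win edge items at high prices, and maximal pacing forces each of them to pay out exactly what it wins. Their slack is therefore converted into revenue that has nothing to do with satisfied edges and is at least comparable to the edge revenue, in every equilibrium, satisfiable or not.

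For the shared-item gadget you describe this is explicit. Ruling out a second label with multiplier above $1+\epsilon'$ requires the shared item's value $K$ to be at least $1/\epsilon'$ times the slack available to its winner. Yet that item always sells for at least $K$, because multipliers are at least~$1$. So edge revenue cannot be made to dominate gadget revenue by $1/\delta$, and the gap between the two cases essentially disappears.

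Your charging argument fails for a related reason. Revenue on an item is paid by its winner, so it cannot be charged to the slack of the bidder that merely sets the price. Moreover, label bidders that win edge items destroy the isolation that lets the vertex gadget be analyzed on its own.

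\textbf{The missing idea is to separate the bidder who pays from the bidders who encode the labeling.} In the paper, each edge item is always won by a dedicated edge autobidder valuing it at $1+\epsilon$. Its competitors value the item only $1/M$ and never win it. The edge autobidder's own value funds the revenue, so all logical gadgets can be scaled by $\eta\to 0$ and contribute negligibly, and \Cref{lemma:extension} lets each gadget be analyzed in isolation.

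The price is then the \emph{maximum} of the competing bids, an OR over $\sigma$ rather than your min-based AND. So the conjunction $z_{u,\sigma}\wedge z_{v,\Pi_e(\sigma)}$ must first be computed into a single multiplier $m_{e,\sigma}$, which is exactly the role of the \textsf{NAND} and \textsf{NOT} gadgets. Add the dichotomy of \Cref{lemma:assignment}: in every equilibrium each multiplier lies in $[1,1+\epsilon']\cup\{M\}$, and at most one per vertex equals $M$. Then every equilibrium has revenue at most the number of edges satisfied by the decoded labeling $z_{u,\sigma}=\mathbbm{1}\{m_{u,\sigma}=M\}$ plus $O(\delta|E|)$. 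A satisfying labeling yields an equilibrium with revenue at least $|E|$. No list decoding is needed.

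\textbf{A smaller issue is the hardness source.} Parallel repetition reaches soundness $1/\log^{\Omega(1)}N$ only with quasi-polynomial size, which does not give \NP-hardness. The paper instead invokes \Cref{remark:alphabetsize} (alphabet $\exp(1/\delta)$) with $\delta=\Theta(1/\log N)$, which keeps the $O(|E||\Sigma|)$-size instance polynomial.
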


This stark contrast reveals a fundamental difference between liquid welfare and revenue in autobidding systems. Once again, \Cref{theorem:informal-revenue} significantly strengthens the \APX-hardness of~\citet{Li24:Vulnerabilities}. Furthermore, under the \emph{projection games conjecture} (\Cref{conj:projection})---a common hypothesis in complexity theory---we show that our reduction yields \emph{polynomial inapproximability} for revenue maximization (\Cref{cor:revenue-projectiongames}).

\paragraph{Refinement under ML advice} Next, we refine our hardness results in the presence of ML advice concerning the underlying valuation profile. Specifically, we adopt the model of~\citet{Balseiro21:Robust}, which posits access to a signal that provides a multiplicative approximation (\Cref{def:approx-signal}). This signal is then used as a reserve price. \citet{Balseiro21:Robust} established tight PoA bounds parameterized by the accuracy of the signals, denoted by $\gamma > 0$; when $\gamma = 1$ the auctioneer has perfect information concerning valuations, while, at the other end of the spectrum, when $\gamma \approx 0$, we revert to the standard model.

We identify again a close connection between the PoA bounds and the inapproximability threshold. Specifically, we use our reduction to establish the following hardness results.

\begin{theorem}
    \label{theorem:informal-refine-welfare}
    Under $\gamma$-approximate signals with any constant $\gamma \in [0, 1)$ and any constant $\epsilon > 0$, computing an autobidding equilibrium that approximates the welfare-optimal one within a factor $2/(1 + \gamma) - \epsilon$ is \NP-hard.
\end{theorem}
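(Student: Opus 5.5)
We plan to reuse the gap reduction behind \Cref{theorem:informal-welfare} and rescale it so that the reserve prices coming from the $\gamma$-approximate signals shrink the gap from $2$ to $2/(1+\gamma)$. Recall the structure of that reduction, from a \gaplabel{} instance with soundness $\delta$ chosen small in terms of $\epsilon$. Most of the welfare sits on ``contested'' items. In the completeness case, a consistent labeling gives an equilibrium where every contested item goes to the bidder valuing it most, so welfare is close to $\opt$. In the soundness case, every equilibrium (enforced by RoS constraints, gadgets of cheap ``safe'' items, and multipliers $\msafe$) gives almost all contested items to a bidder whose value is essentially half that of the efficient bidder. This matches the PoA-$2$ tight example of \citet{Aggarwal19:Autobidding}.

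Under signals, \citet{Balseiro21:Robust} show the PoA becomes $2/(1+\gamma)$. Their tight instance replaces the value ratio $1/2$ by $(1+\gamma)/2$. The reserve $\gamma v_{\max}$ forces any winner to have value at least about $\gamma v$, and the RoS slack lets the inefficient winner hold value near $\gamma v + (1-\gamma)v/2 = \tfrac{1+\gamma}{2}v$. Accordingly, in each contested gadget I will set the low bidder's value to $\tfrac{1+\gamma}{2}$ times the high one. The safe items get values such that the high bidder's spend on them, paid at second price, still keeps it from outbidding the low bidder in the soundness case. I will provide signals $\signal_j$ that are valid $\gamma$-approximations, that is $\gamma \max_i v_{ij}\le \signal_j\le \max_i v_{ij}$. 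I will choose them at the lower end for contested items, so the reserves never bind for the intended inefficient allocation. For safe items I will choose them so the reserve does not destroy the cross-subsidy the gadget relies on.

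The steps, in order, are as follows.
\begin{enumerate}
\item Redefine all valuations with the parameter $\alpha = (1+\gamma)/2$ in place of $1/2$, and check that $\gamma<1$ gives $\alpha<1$, so every strict inequality from the original gadgets survives with constant slack.
\item Completeness: from a good labeling, build multipliers yielding an equilibrium with welfare at least $(1-O(\delta))\opt$, and verify the reserves are met since winners are the top-value bidders.
\item Soundness: show any equilibrium with welfare above $(\alpha + O(\delta) + o(1))\opt$ decodes, by the same labeling-extraction argument as for \Cref{theorem:informal-welfare}, into a labeling satisfying more than a $\delta$ fraction of constraints.
\item Pick $\delta$ and the safe-item scaling so that $1/(\alpha+O(\delta)) \ge 2/(1+\gamma)-\epsilon$.
\end{enumerate}

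The main obstacle is soundness in the presence of reserves. Reserves raise second prices to at least $\signal_j$, which changes the payments feeding the RoS constraints. A posted reserve could either allow new equilibria or rule out the inefficient one. I would first try to keep contested-item reserves at exactly $\gamma$ times the high value, and show this is at most the price the inefficient winner already pays in the original equilibrium structure. I would then redo the RoS accounting with the term $\gamma v$ added to the payments. That accounting should show the low bidder can afford exactly a $\tfrac{1+\gamma}{2}$-value item share, and no more.
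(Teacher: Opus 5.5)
There is a genuine gap. The mechanism you propose cannot produce the threshold $2/(1+\gamma)$, and it rests on a misreading of the reduction behind \Cref{theorem:informal-welfare}.

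In that reduction, the inefficient winner of the contested (incumbent edge) item is the edge autobidder, who values it at $(1+\epsilon)/M\approx 0$, not at half of the incumbent's value $1$. The factor $2$ comes from the edge item: it is worth $1+\epsilon$, is always allocated efficiently, carries half of the welfare, and its surplus pays for stealing the other half.

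This cannot be fixed by tuning. The argument of \Cref{prop:poa-ub} still holds with reserve prices, since reserves only raise prices. Suppose, as in your plan, that every non-negligible item not won by a top-value bidder goes to bidders valuing it at least $\alpha v^*_j$, where $v^*_j$ is its top value, and let $S'$ be the set of such items. Then, up to the $O(\eta)$ gadget terms,
\[
\sup_{\vx^*}\welfare(\vx^*)-\welfare(\vx)\le(1-\alpha)\sum_{j\in S'}v^*_j\le(1-\alpha)\sum_{j}p_j\le(1-\alpha)\welfare(\vx),
\]
so every equilibrium is a $(2-\alpha)$-approximation to the unconstrained optimum. With $\alpha=(1+\gamma)/2$, your completeness/soundness gap is therefore at most $(3-\gamma)/2$. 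This is strictly below $2/(1+\gamma)$ for every $\gamma<1$; at $\gamma=0$ you would get $3/2$, not $2$. Your soundness estimate $(\alpha+O(\delta))\opt$ drops the welfare of the subsidizing items, which must be worth at least $(1-\alpha)$ times the stolen ones. Separately, replacing the $1/2$'s of the \textsf{NAND} gadget (\Cref{tab:nand}) by $\alpha$ is unrelated to the welfare ratio and would break that gadget's arithmetic.

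The missing idea is the opposite of your plan: keep every valuation of \Cref{sec:inapprox-stable} unchanged and let the signal-induced reserve erode the cross-subsidy. By \Cref{def:approx-signal}, the edge autobidder's reserve on its edge item is at least $\gamma(1+\epsilon)$. So you cannot choose signals that spare the cross-subsidy, and this erosion is exactly what sets the threshold.

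\begin{itemize}
\item When no $m_{e,\sigma}$ equals $M$, the edge autobidder pays about $\gamma(1+\epsilon)$ for the edge item and keeps a surplus of only $(1-\gamma)(1+\epsilon)$. Maximal pacing and the incumbent's RoS constraint force the bids on the incumbent edge item to tie at $1$ (or the edge autobidder to win it outright). The edge autobidder's binding RoS constraint, spending that surplus at unit price $1$ on an item it values at $\approx 0$, then gives it at least a $(1-\gamma)$ fraction, so the incumbent keeps at most $\gamma$ (\Cref{lemma:refined-stealing}).
\item When some $m_{e,\sigma}=M$, \Cref{lemma:stealing} is unchanged.
\end{itemize}

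Per edge, welfare is thus about $2$ in the completeness case and about $1+\gamma$ in the soundness case, plus $O(\delta+\epsilon+1/M)$ and the $O(\eta)$ gadget terms. This gives $2/(1+\gamma)-\epsilon$. What remains, and what your outline does not address, is checking that the label-assignment, \textsf{NAND}, and \textsf{NOT} gadgets still behave as required when every item carries a reserve of $\gamma v_{ij}$ (\Cref{appendix:MLadvise}).
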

This recovers~\Cref{theorem:informal-welfare} as a special case when $\gamma =0$, but goes much further by parameterizing the inapproximability threshold in terms of $\gamma$. For revenue, we refine~\Cref{theorem:informal-revenue} as follows.

\begin{theorem}
    \label{theorem:informal-refine-revenue}
    Under $\gamma$-approximate signals with any constant $\gamma \in [0,1 )$ and any constant $\epsilon > 0$, computing an autobidding equilibrium that approximates the revenue-optimal one within a factor $1/(\gamma + \epsilon)$ is \NP-hard.
\end{theorem}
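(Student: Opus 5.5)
The plan is to reuse the gap reduction behind \Cref{theorem:informal-revenue}, with $\gamma$-approximate signals installed as reserve prices, and to check how the reserves change each side of the gap. In that reduction a gap instance of \gaplabel{} yields an autobidding instance with two kinds of equilibria. In the completeness case (the label cover instance is satisfiable) some equilibrium has each item's second-highest bid close to the winner's value, so revenue is close to the optimal welfare $W$. In the soundness case every equilibrium lets most items be won against very low competing bids, so revenue is a vanishing fraction of $W$. Reserve prices can only push soundness revenue up, and a reserve of $\gamma v_{ij}$ guarantees roughly $\gamma$ times the value of allocated items. So the target becomes: revenue about $W$ in the YES case, at most about $(\gamma+\epsilon')W$ in the NO case. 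This gives the ratio $1/(\gamma+\epsilon)$. It is bounded by a constant, which is why we only need a constant-gap \gaplabel{} here and not the logarithmic regime.

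First, I would fix the signal $s_{ij}=\gamma v_{ij}$. This is allowed by \Cref{def:approx-signal}, since it is a valid $\gamma$-approximate signal and the adversary may choose it. The reserve for item $j$ is then $s_{ij}$ for the bidder who wins it. I would verify that in the YES case the high-revenue equilibrium survives. There the winning bid's competitor already bids about $v_{ij}$, which is at least $\gamma v_{ij}$, so the reserve never binds and the RoS constraints are unchanged. The equilibrium persists and its revenue is at least $(1-\epsilon')W$.

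Second, for the NO case, I would bound the revenue of any equilibrium. Each item $j$ won by bidder $i$ is paid $\max(\text{second bid}, \gamma v_{ij})$. I would split revenue into a reserve part and an excess part. The reserve part is at most $\gamma$ times the realized welfare, which is at most $\gamma W$. The excess part, where the second bid exceeds $\gamma v_{ij}$, comes only from items with genuine competition. By the soundness analysis of the original reduction, these items carry at most an $\epsilon'$ fraction of $W$ in any equilibrium, so the excess part is at most $\epsilon' W$.

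The main obstacle is that reserves change bidders' effective payments and hence which multiplier profiles are equilibria. New equilibria may appear in the NO instance that were not covered by the original soundness argument; for example, reserves could tighten RoS constraints and drive bidders to lower multipliers, changing who competes where. I would handle this by re-deriving the soundness structural lemma (the decoding of equilibria into labelings) in the reserve-price model. That lemma uses only the RoS constraint and the item-level competition pattern, and reserves only add payment, which makes RoS harder to satisfy. So any bidder winning a "competitive" item still needs value exceeding payment, and the same decoding into a labeling with many satisfied edges should go through. Choosing the \gaplabel{} soundness parameter small relative to $\epsilon$ then gives the claimed $1/(\gamma+\epsilon)$ hardness.
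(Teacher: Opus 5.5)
Your overall plan matches the paper's. You keep the reduction, take signals $\gamma v_{ij}$, and get revenue about $|E|$ in the YES case and at most about $\gamma|E|$ plus the number of satisfiable edges (plus negligible gadget terms) in the NO case. Your split into a reserve part and an excess part is essentially the paper's soundness bound. The gap is in the step you flag as the main obstacle, and your proposed resolution does not work.

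The heuristic ``reserves only add payment, which makes RoS harder to satisfy'' protects only the gadget conclusions that come from RoS. These are upper bounds on multipliers: second-highest label multiplier $\approx 1$, \textsf{NAND} output $\le 1+3\epsilon$ when both inputs are $M$, and \textsf{NOT} output $\le 1+2\epsilon$ when its input is $M$. The decoding $m_{e,\sigma}=M \Rightarrow m_{u,\sigma}=m_{v,\Pi_e(\sigma)}=M$ also needs the other direction of the \textsf{NAND} gadget: a low input must force the output to $M$. That comes from maximal pacing, i.e., from the output keeping strictly positive surplus at every multiplier below $M$. Reserves remove surplus, so this is exactly the direction they can break.

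Concretely, take both inputs low. For multipliers in $(1+3\epsilon, M)$ the \textsf{NAND} output pays the reserve $\gamma(1+2\epsilon)$ on items 1--2 and $1+3\epsilon$ on item 3, so its surplus is $(1-\gamma)(1+2\epsilon)-3\epsilon$. If this is $\le 0$, the output is forced to $1+3\epsilon$ with binding RoS. The \textsf{NOT} gadget can then push $m_{e,\sigma}$ to $M$, and the edge item sells at price $1$ although no label pair is set. Your argument never uses $\gamma<1$ or any relation between the gadget slack and $1-\gamma$, which is a sign it cannot be right as stated.

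The completeness step has the mirror-image problem: ``the reserve never binds'' is false in the intended equilibrium. Every \textsf{NAND} output facing a low input sits at $M$ and pays the reserve $\gamma(1/2+\epsilon)$ on that input's item. Every \textsf{NOT} output facing a low input sits at $M$ and pays $\gamma(1+\epsilon)$ on its first item. You must check these bidders still satisfy RoS.

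What is missing is the explicit re-verification the paper does in its appendix. Choose the gadget slack $\epsilon$ small relative to $1-\gamma$. Then check $\gamma(1+2\epsilon)+1+3\epsilon<2+2\epsilon$ and $\tfrac12+\gamma(\tfrac12+\epsilon)+1+3\epsilon<2+2\epsilon$ for \textsf{NAND}, and $\gamma(1+\epsilon)+1+2\epsilon<2+\epsilon$ for \textsf{NOT}. These give both RoS feasibility at $M$ and the strict surplus that forces outputs to $M$. The label-assignment gadget goes through essentially unchanged. With this in place, the rest of your argument goes through.
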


\paragraph{Inapproximability for learning dynamics} Our foregoing results concern autobidding equilibria. However, the learning dynamics typically employed in practice are not guaranteed to converge to autobidding equilibria~\citep{Gaitonde23:Budget,Lucier24:Autobidders,Leme24:Complex}. As a result, a natural question is whether we can extend our analysis to a broader set of outcomes that can be attained by simple learning algorithms. Our primary goal here is to make the definition as broad as possible.

To this end, our first relaxation only imposes that the average RoS constraint is approximately satisfied. We show that revenue inapproximability persists under this very permissive definition.

\begin{theorem}
    \label{theorem:informal-revenue-learning}
    For any constant $\epsilon > 0$, computing a second-price sequence satisfying the time-average RoS constraints that approximates the revenue-optimal one within a factor $e/(e-1) - \epsilon$ is \NP-hard.
\end{theorem}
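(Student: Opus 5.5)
We plan a reduction from an \NP-hard gap problem whose completeness/soundness gap becomes, after passing through time-averaged RoS sequences, the factor $e/(e-1)$. The natural source is \gaplabel\ (or its special case Max-Coverage / Set Cover with the $1-1/e$ threshold of Feige). The constant $e/(e-1)$ strongly suggests that the soundness analysis ends with a coverage-type bound $1-(1-1/\ell)^\ell \to 1-1/e$. Concretely, we would start from the Max-$\ell$-Coverage gap: it is \NP-hard to distinguish instances where $\ell$ sets cover all elements from instances where every $\ell$ sets cover at most a $(1-1/e+\epsilon)$ fraction.

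\textbf{Construction.} For each set $S$ we create a ``set bidder'' and for each element $u$ an item $j_u$, plus gadget items and bidders. The gadgets should force the following property. In any sequence satisfying time-average RoS, revenue on item $j_u$ can be extracted only when some chosen set bidder containing $u$ bids high on it, and it serves as the second-highest (price-setting) bidder against an element-bidder who wins.

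The RoS slack that lets a set bidder bid above value is purchased through a private ``safe'' item, in the spirit of the $\msafe$ notation. Such an item yields surplus only if the bidder's multiplier is large. A budget of total slack then effectively limits the number of set bidders that can be ``active'' to $\ell$.

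\textbf{Completeness.} In the yes case, we exhibit a static profile, hence a constant sequence that is in particular an autobidding equilibrium. In it exactly the $\ell$ covering set bidders are active, and every element item is priced at full value. This yields revenue $R^\star$.

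\textbf{Soundness.} We must bound the revenue of \emph{any} sequence satisfying only the time-average RoS constraints. The key step is an averaging argument. Let $q_S\in[0,1]$ be the fraction of rounds (or the time-averaged activity level) in which set bidder $S$ bids high. Summed RoS constraints, through the gadget slack, should give $\sum_S q_S \le \ell$. Revenue from item $j_u$ is then at most its value times $\Pr[\text{some } S\ni u \text{ active}]$. Under time-averaging, correlations across rounds can only help the adversary, so we bound by the union form $\min(1,\sum_{S\ni u} q_S)$. This union form would kill the gap. To avoid that, we will instead design the pricing so that item $j_u$'s revenue at a round is a product-like function: the price is set by the second highest bid, which must come from distinct bidders, so the revenue behaves like $1-\prod_{S\ni u}(1-q_S)$. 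Concavity (Jensen over elements, with the coverage LP) then gives at most $(1-1/e)$ of the maximum whenever the integral coverage is small. We expect the rounding step relating fractional $q$ to integral coverage to use the standard fact that fractional max-coverage with the $1-\prod$ objective loses at most $1-1/e$ relative to integral, combined with soundness $(1-1/e+\epsilon)$. This gives revenue at most $(1-1/e)^2$-ish, which is too lossy. So the construction must instead make the revenue linear in $\min(1,\sum q_S)$ while the soundness instance already has the $1-1/e$ gap for fractional covers. This suggests reducing from \gaplabel\ with projection structure, where fractional and integral values coincide up to $\epsilon$.

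\textbf{Main obstacle.} The hardest step is exactly this soundness argument for non-equilibrium sequences. Time-averaging allows bidders to alternate and to violate RoS in some rounds while compensating in others. We must show that the gadget-enforced budget $\sum_S q_S\le \ell$ still holds on average, and that revenue is governed by a fractional-coverage quantity whose hardness gap is $1-1/e$. I would first attempt to prove the per-round inequality revenue $\le$ coverage of active bidders plus an RoS-slack term, and then average over rounds, absorbing the slack via the summed RoS constraints.
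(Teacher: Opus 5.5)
Your proposal never reaches a soundness argument, and the route it sketches cannot be completed. The difficulty is structural. You enforce ``choose $\ell$ sets'' through a single global slack budget that only controls time-averaged activity, $\sum_S q_S \le \ell$, and you do not specify a gadget achieving even this. Once per-round structure is discarded and you reason only with the marginals $q_S$, the best available bound is $\sum_u \min(1,\sum_{S\ni u} q_S)$, i.e.\ the LP relaxation of max coverage. That LP is solvable in polynomial time and has value $|U|$ on yes-instances, so it cannot be bounded by $(1-1/e+\delta)|U|$ on all of Feige's no-instances unless $\P=\NP$. As you notice, the gap is lost. Neither patch repairs this. The second-price rule does not produce $1-\prod_{S\ni u}(1-q_S)$: the time-averaged revenue of $j_u$ is just the frequency of rounds in which some active set contains $u$, and the sequence chooses the correlations across rounds. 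The detour through \gaplabel{} also fails. It rests on a false premise: the standard LP for label cover has value $1$ on every instance with permutation constraints, whatever the integral optimum. It would not produce the constant $e/(e-1)$. And it would require an \textsf{AND} gadget, which is exactly what one wants to avoid for sequences that are not equilibria.

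The missing idea is to extract \emph{per-round} integrality in most rounds instead of a time-averaged budget. Encode MaxCover as the CSP of \Cref{sec:learning}: $d=q$ variables, each labeled by a set of $\mathcal{F}$ (so $\Sigma=\mathcal{F}$), and one clause per element. Give every variable its own label-assignment gadget from \Cref{sec:labelassign}, whose common item is valued $K\gg M|\Sigma|$ by all label bidders. Per round these bidders collect value at most $M|\Sigma|+K$ but pay at least $K\underline{m}^{(t)}$ for the common item, where $\underline{m}^{(t)}$ is the second-highest multiplier. Summing their time-average RoS constraints therefore gives $\frac{1}{T}\sum_t(\underline{m}^{(t)}-1)\le 2M|\Sigma|/K$ for $\beta\le M$. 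Because $\underline{m}^{(t)}-1\ge 0$ in \emph{every} round, overspending in some rounds cannot be offset in others. Markov's inequality then gives $\underline{m}^{(t)}\le 1+\lambda M|\Sigma|/K$ outside a $2/\lambda$ fraction of rounds. A union bound leaves a $1-2d/\lambda$ fraction of good rounds in which every variable has at most one label with multiplier above $1+\epsilon$ (take $K=\lambda M|\Sigma|/\epsilon$). Clause items are valued $1+\epsilon$ by the clause bidder and $1/M$ by the literal bidders, so a clause item earns more than $2/M$ only if some literal multiplier is at least $2$. Hence each good round earns at most $\opt(\mathcal{C})+2|\mathcal{C}|/M$, each bad round at most $|\mathcal{C}|$, and gadget revenue is scaled away by $\eta$. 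The time-averaged revenue is thus an average of \emph{integral} coverage values, at most $\opt(\mathcal{C})+O(\delta)|\mathcal{C}|$ for $\lambda=\Theta(d/\delta)$ and $M=\Theta(1/\delta)$, with no relaxation or rounding anywhere. Your static completeness profile then finishes the argument together with \Cref{theorem:Feige}.
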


Turning to welfare, we need to make further assumptions concerning the autobidders' behavior; an autobidder always selecting a safe multiplier of $1$ is guaranteed to satisfy the RoS constraint in each round, but that clearly violates maximal pacing, which is a key prerequisite in the definition of autobidding equilibria. In this context, we make a mild assumption (\Cref{def:learning}) stipulating that whenever an autobidder has collected a large surplus over an extended period, its average multiplier must experience an increase---unless it was already maximal. We call the resulting notion a \emph{responsive learning sequence}.\footnote{We refrain from using the term ``equilibrium'' as the requirement we impose is quite weak.} 

\begin{theorem}
    \label{theorem:informal-welfare-learning}
    For any constant $\epsilon > 0$, computing a responsive learning sequence that approximates the welfare-optimal one within a factor $2e/(2e - 1) - \epsilon$ is \NP-hard.
\end{theorem}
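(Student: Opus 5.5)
We will prove the theorem by a gap-preserving reduction from \gaplabel, reusing the gadget that underlies \Cref{theorem:informal-welfare}. In that construction, a label cover instance is encoded as a second-price autobidding instance. The key ingredients are:
\begin{itemize}
\item each vertex $u$ has one ``label'' autobidder per label $\ell$, whose multiplier is either maximal (label chosen) or stuck at a safe value $\msafe$;
\item each edge constraint is encoded by an item that consistent label-bidders can win cheaply, generating value;
\item an inconsistent assignment forces the items to go to a low-value dummy bidder that bids high.
\end{itemize}
In the YES case there is an equilibrium attaining essentially all of the optimal welfare. In the NO case every equilibrium attains only about half of it, matching the PoA bound of $2$.

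The plan is to show that the same instance separates the two cases under the weaker notion of \Cref{def:learning}, with the gap degrading from $2$ to $2e/(2e-1)$.

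\textbf{Completeness.} In the YES case, take the good equilibrium from the static reduction and repeat it every round. A constant sequence trivially satisfies the per-round RoS constraints. It also satisfies the responsiveness requirement, since at equilibrium every bidder is either maximally paced or has no slack. So the optimal welfare is achieved up to $\epsilon$.

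\textbf{Soundness.} In the NO case we must upper bound the time-average welfare of any responsive learning sequence by $(1-\tfrac{1}{2e})\opt+\epsilon$. First, time-average RoS feasibility forces the average spend of each label-bidder to be at most its average value. Second, the label cover soundness, averaged over time, implies the following: in any round, the fraction of edge items that can be won by consistent label pairs is at most $\delta$. The averaging argument treats the round-$t$ multiplier vector as a randomized labeling and applies the gap to the induced distribution.

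The new ingredient is responsiveness. A bidder that accumulates surplus over a long window must raise its average multiplier unless it is already maximal. Using this, I would show that the dummy high-value bidders cannot be kept out for long. Whenever label-bidders run underpaced and collect slack, they raise their bids, compete for items, and the surplus is drained.

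\textbf{Main obstacle.} I expect the hardest step to be quantifying this lost fraction. I would model the surplus $S(t)$ of a bidder as obeying a differential inequality: responsiveness forces the multiplier to grow proportionally to the accumulated surplus, which produces exponential dynamics. Integrating over the horizon, a fraction of at least $\int_0^1 e^{-s}\,ds$-type mass of the items ends up allocated as in the bad static equilibrium. This is the source of the constant $e$, and it would give welfare at most $\opt\,(1-\tfrac{1}{2e})$ in the NO case.

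First I would try a clean potential-function version: define $\Phi(t)$ as the total slack and show $\Phi' \le -c\,\Phi$ outside a small set of rounds. I would then tune the gadget parameters ($\msafe$ and the dummy values) so that the constants come out as $1/(2e)$ exactly. Taking the label cover soundness $\delta$ and the responsiveness window small relative to $\epsilon$ completes the argument. Finally, the ratio between completeness $\opt$ and soundness $(1-\tfrac{1}{2e})\opt$ gives the claimed factor $2e/(2e-1)-\epsilon$.
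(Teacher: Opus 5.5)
There is a genuine gap, and it lies in the foundation of the plan rather than in the step you flag as the main obstacle. Your completeness argument is fine. The problem is that you reuse the label-cover reduction behind \Cref{theorem:informal-welfare}, which relies on the \textsf{AND} gadget. \Cref{lemma:nand,lemma:negation} are proved with exact per-round equilibrium conditions: the output multiplier must be $M$ whenever the output's RoS constraint is slack, and must be small whenever a larger multiplier would violate RoS in that round. A responsive learning sequence provides neither. RoS holds only on time average, and \Cref{item:response} of \Cref{def:learning} only lifts the \emph{average} multiplier over long high-surplus windows to $\min\{M,(1+cs)\msafe\}$, with some constant $c$ you do not control. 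For the \textsf{NAND} output, $\msafe$ is only about $1+3\epsilon$, so nothing forces the output to $M$, let alone in every round, and the next gadget's input promise $m\in[1,1+\epsilon]\cup\{M\}$ breaks. Even label-assignment soundness survives only on most rounds (\Cref{lemma:learning-assignment,cor:soundness}), not ``in any round''. The paper avoids all of this by changing the source problem. It reduces from \textsc{MaxCover} (\Cref{theorem:Feige}) via CSPs whose clauses are plain disjunctions $\bigvee_{(i,\sigma)\in C} z_{i,\sigma}$, so each clause autobidder competes directly with label autobidders and no \textsf{AND} gadget is needed.

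This change of source problem is also where the constant comes from, and your attribution of it is wrong. \Cref{def:learning} imposes no differential inequality, so $\Phi'\le -c\Phi$ has no basis, and tuning parameters ``so the constants come out as $1/(2e)$'' is reverse engineering. Your own heuristic does not match either: a stolen mass of $1-1/e$ would correspond to welfare about $(\tfrac12+\tfrac1{2e})\opt$, not your target $(1-\tfrac1{2e})\opt$. Had your label-cover route worked with $\delta\to 0$, you should expect a factor near $2$. The factor $2e/(2e-1)=2/(2-1/e)$ is simply Feige's threshold: YES-case welfare is about $2|\mathcal{C}|$, and NO-case welfare is at most about $|\mathcal{C}|+(1-1/e+\delta)|\mathcal{C}|+O(\epsilon)|\mathcal{C}|$.

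The missing key step is the per-clause bound $\frac1T\sum_t x_C^{(t)}\ge 1-\frac1T\sum_t p_C^{(t)}-\Theta(\epsilon)$. Here $x_C^{(t)}$ is the clause autobidder's share of the incumbent item and $p_C^{(t)}$ is its price on the clause item. Take $t_1$ to be the earliest time after which $r_C^{(\tau)}\ge 1+s$ throughout.
\begin{itemize}
\item Before $t_1$, the cumulative RoS of the clause autobidder is nearly binding, which forces it to spend its surplus capturing the incumbent item.
\item After $t_1$, if that window exceeds $\alpha T$, choose $s$ so that $(1+cs)\msafe[,C]=M$. Responsiveness then makes the clause autobidder bid $1+\epsilon$ on the incumbent item, so the incumbent's time-average RoS caps the incumbent's share.
\end{itemize}
Note that responsiveness is applied to the clause autobidder to force this inefficient capture, not to label bidders letting high-value bidders back in. Combining this bound with soundness on $\Tgood$ (with $\lambda=2d/\epsilon$) yields the NO-case bound.
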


For concreteness, we also provide a canonical update rule that produces a responsive learning sequence per~\Cref{def:learning} (\Cref{lemma:psi}).

\subsection{Technical approach}

A schematic summary of our reduction concerning autobidding equilibria is shown in~\Cref{fig:diagram}. Our starting point is the \emph{label cover} problem (\Cref{def:labelcover}), which is a mainstay in the line of work on the hardness of approximation. Here, we are given a bipartite graph $G$ whose nodes are to be labeled based on a finite alphabet $\Sigma$ to satisfy as many of the edge constraints as possible. Each edge constraint $e \in E$ is dictated by a given function $\Pi_e: \Sigma \to \Sigma$, prescribing a set of allowable pairs. The label cover problem can be written compactly as a constraint satisfaction problem over a set of Boolean variables $\{z_{u, \sigma} \}_{u \in V, \sigma \in \Sigma}$ in the form alluded to in~\Cref{fig:diagram}. What makes this problem particularly useful for our purposes is that it is \NP-hard to approximate even under the promise that we are given a gap-amplified instance where either all edge constraints can be satisfied or only a small fraction thereof can be satisfied (\Cref{theorem:labelcover}).

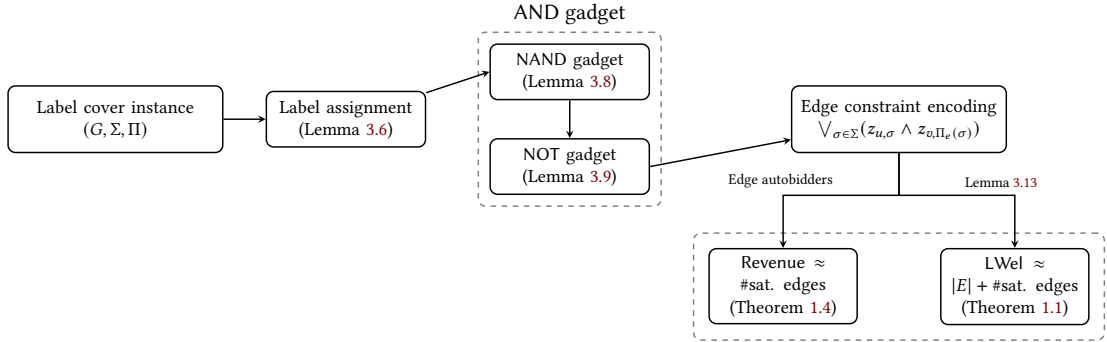
\begin{figure}[t]
    \centering
    \scalebox{0.7}{%
        \begin{tikzpicture}[
    node distance=1cm and 0.8cm,
    auto,
    block/.style={rectangle, draw, text width=6em, text centered, rounded corners, minimum height=3em, thick, fill=white, font=\footnotesize},
    gadget/.style={rectangle, draw, text width=6.5em, text centered, rounded corners, minimum height=2.5em, thick, fill=white, font=\footnotesize},
    instance/.style={rectangle, draw, text width=9em, text centered, rounded corners, minimum height=3em, thick, fill=white, font=\footnotesize},
    arrow/.style={->, >=stealth, thick},
    label/.style={font=\scriptsize}
]


\node[instance] (labelcover) {Label cover instance\\$(G, \Sigma, \Pi)$};

\node[gadget, right=of labelcover] (assignment) {Label assignment \\(\Cref{lemma:assignment})};

\node[gadget, right=1.2cm of assignment, yshift=0.9cm] (nand) {\textsf{NAND} gadget\\(\Cref{lemma:nand})};
\node[gadget, right=1.2cm of assignment, yshift=-0.9cm] (negation) {\textsf{NOT} gadget\\(\Cref{lemma:negation})};

\node[instance, right=1.2cm of assignment] (csp) at ($(nand)!0.5!(negation) + (3, 0)$) {Edge constraint encoding\\$ \bigvee_{\sigma \in \Sigma} (z_{u, \sigma} \land z_{v, \Pi_e(\sigma)})$};



\node[block, below=1.8cm of csp, xshift=-2.2cm] (revenue) {$\revenue \approx \#\text{sat. edges}$\\(\Cref{theorem:informal-revenue})};

\node[block, below=1.8cm of csp, xshift=2.2cm] (welfare) {$\welfare \approx |E| + \#\text{sat. edges}$\\(\Cref{theorem:informal-welfare})};

\begin{scope}[on background layer]
    \node[fit=(nand)(negation), 
          draw=gray, thick, dashed, rounded corners, inner sep=6pt] (andbox) {};
    
    \node[above] at (andbox.north) {\textsf{AND} gadget};
          
    \node[fit=(revenue)(welfare), 
          draw=gray, thick, dashed, rounded corners, inner sep=8pt,
          label={[gray, font=\bfseries\scriptsize, align=center]south:Inapproximability Results}] (results) {};
\end{scope}


\draw[arrow] (labelcover) -- (assignment);
\draw[arrow] (assignment) -- (nand.west);
\draw[arrow] (nand.south) -- (negation.north);
\draw[arrow] (negation.east) -- (csp);

\draw[arrow] (csp.south) -- ++(0,-0.8) -| node[label, near start, above left] {Edge autobidders} (revenue.north);
\draw[arrow] (csp.south) -- ++(0,-0.8) -| node[label, near start, above right] {\Cref{lemma:stealing}} (welfare.north);

\end{tikzpicture}%
    }
    \caption{Overview of our reduction for autobidding equilibria.}
    \label{fig:diagram}
\end{figure}

The name of the game now is to encode a label cover instance through an autobidding system. To do so, we associate a high multiplier $M \gg 1$ with a \textsf{True} value and a low multiplier $M \approx 1$ with a \textsf{False} value. We then develop a series of gadgets. First, the label assignment gadget (\Cref{lemma:assignment}) has the soundness property that at most one multiplier across a set of autobidders (representing the alphabet $\Sigma$) will be active, corresponding to the given label. Second, we implement an \textsf{AND} gadget for Boolean conjunction; this is done in two stages: a \textsf{NAND} gadget (\Cref{lemma:nand}) and a \textsf{NOT} gadget (\Cref{lemma:negation}). A crucial detail here is that the autobidding systems we design to implement these operations are low stakes in that the values/payments involved are negligible, so that the welfare or revenue remains unaffected. Finally, if each multiplier $m_{e, \sigma}$, for $e = (u, v)$, is associated with $z_{u, \sigma} \land z_{v, \Pi_e(\sigma)}$, we create an edge autobidder that competes with $\{ m_{e, \sigma} \}_{\sigma \in \Sigma}$, so that the price paid reflects $\bigvee_{\sigma \in \Sigma} (z_{u, \sigma} \land z_{v, \Pi_e(\sigma)})$.

With these ingredients at hand, we first create an autobidding system whose revenue is, in equilibrium, roughly at most the number of satisfied edges in the original label cover instance (soundness). Moreover, there exists an autobidding equilibrium whose revenue matches the number of satisfied edges (completeness). This paves the way to~\Cref{theorem:informal-revenue}. 

For welfare, we introduce an additional component, which guarantees that the edge autobidder will use its extra surplus to inefficiently capture part of another item (\Cref{lemma:stealing}). What is intriguing is that this basic interaction is the simplest example establishing the tightness of the PoA of 2; our reduction essentially embeds that for each edge constraint. The soundness property guarantees that the welfare of the overall system is, in equilibrium, roughly at most $|E|$ plus the number of satisfied edges, and the completeness property establishes that there exists an autobidding equilibrium that matches $|E|$ plus the number of satisfied edges, leading to~\Cref{theorem:informal-welfare}.

Our refinements under ML advice (\Cref{theorem:informal-refine-revenue,theorem:informal-refine-welfare}) follow this basic reduction, adapting the argument to account for the presence of reserve prices.

\paragraph{Learning dynamics} Turning to our hardness results concerning responsive learning sequences, our starting point is a class of CSPs that contains the maximum cover problem. This is a more benign class of problems than the one employed earlier, in that the gap is $1 - 1/e$ (\Cref{theorem:Feige}). However, here we do not need to implement an \textsf{AND} gadget, which we found to be particularly challenging unless one imposes strong assumptions on the behavior of the learning dynamics. We stress again that, for this set of results, our priority was to cover as broad a class as possible, at the cost of establishing a weaker intractability threshold. 

In this context, our reduction proceeds by showing that the label assignment guarantees soundness \emph{for most} rounds. This already suffices to arrive at~\Cref{theorem:informal-revenue-learning} concerning revenue under time-average RoS constraints by carefully setting the parameters of the reduction. For welfare, we use the additional property of the responsive learning sequence to argue that each clause autobidder will exhaust a large part of its surplus, which in turn means that it will inefficiently capture, on average, a large fraction of an item that it does not value highly.
\subsection{Further related work}

Before we proceed, we place our contributions in the context of the existing literature.

\paragraph{Equilibria in autobidding systems} \citet{Conitzer22:Multiplicative} introduced the notion of a \emph{second-price pacing equilibrium}, which targets settings with budget constraints. \citet{Chen24:Complexity} showed that it is $\PPAD$-hard to compute one (we also refer to~\citealp{Chen25:Constant} for an improved hardness result). A related notion is the~\emph{throttling equilibrium}~\citep{Chen21:Throttling}. The key reference point of our paper is~\citet{Li24:Vulnerabilities}, which formalized the notion of an autobidding equilibrium (\emph{cf.}~\citealp{Mehta22:Auction}); this concept specifically targets settings governed by RoS constraints, which is also the main focus of our paper.

\paragraph{Efficiency of equilibria and learning dynamics} Following the foundational work of~\citet{Aggarwal19:Autobidding}, it has been shown that randomization can improve the price of anarchy~\citep{Mehta22:Auction,Liaw23:Efficiency}. It is an open problem to design a randomized mechanism with price of anarchy strictly better than $2$ in general settings. The computational framework we introduce in this paper could shed light on this question: as observed by~\citet{Roughgarden14:Barriers}, further discussed below, intractability barriers circumscribe the efficiency one can hope for.

The intractability of computing equilibria has motivated the study of the efficiency of learning dynamics. \citet{Gaitonde23:Budget} show that the liquid welfare attained when all autobidders adopt certain gradient-based dynamics is at least half the optimal one. Crucially, this result does not hinge on the convergence of the dynamics. This result was subsequently strengthened by~\citet{Lucier24:Autobidders}. Moreover, \citet{Fikioris25:Liquid} provide liquid welfare guarantees in first-price auctions under some multiplicative approximation relative to the best fixed multiplier in hindsight. For a recent characterization of efficiency, we refer to~\citet{Baldeschi26:Optimal}.

\paragraph{Broader context} As we explained earlier, it has been shown that uniform bid scaling is in some sense optimal~\citep{Aggarwal19:Autobidding}. On the other hand, for auction formats that are not truthful, such as first-price auctions (FPAs) or generalized second-price auctions (GSPs), uniform bid scaling can produce suboptimal strategies~\citep{Deng21:Towards}. It is also worth noting that there has been some work exploring the setting where the RoS constraint is enforced separately for each item, rather than in aggregate. Notably, \citet{Wilkens17:GSP} showed that GSP is truthful in that setting.

For further pointers to the burgeoning body of work in autobidding systems, we refer to the recent survey of~\citet{Aggarwal24:Autobidding}.

\paragraph{Connection between hardness of approximation and PoA} An intriguing connection between price of anarchy and hardness of approximation was identified by~\citet{Roughgarden14:Barriers}. Specifically, \citet{Roughgarden14:Barriers} observed that one can harness a hardness of approximation result to arrive at a price of anarchy lower bound, conditional on natural complexity assumptions. In a similar vein, we find that in autobidding systems the price of anarchy bound precisely matches the inapproximability threshold of 2.

\paragraph{Computing non-trivial equilibria} In the context of succinct normal-form games, \citet{Barman15:Finding} showed that computing any non-trivial a coarse correlated equilibrium (CCE)---one whose welfare is even marginally better than that of the worst CCE---is \NP-hard. One of our results establishes a similar lower bound for autobidding equilibria (\Cref{cor:nontrivial}). That fact that computing the welfare-optimal (coarse) correlated equilibrium is \NP-hard was established in the seminal work of~\citet{Papadimitriou08:Computing}.
\section{Preliminaries}
\label{sec:prels}

Before we dive into our results, we provide additional background on autobidding systems and equilibria.

\paragraph{Welfare, revenue, and price of anarchy} 

We recall that $x_{i j}$ represents the fraction of item $j \in [k]$ allocated to autobidder $i \in [n]$; a fractional allocation can be thought of in probabilistic terms, with $x_{i j}$ corresponding to the probability that $i$ obtains item $j$. We let $\welfare(\vec{x}) \defeq \sum_{i=1}^n \min \{ B_i, \frac{1}{\tau_i} \sum_{j=1}^k x_{i j} v_{i j}  \} $ be the \emph{liquid welfare}, where $\tau_i \geq 1$ is the RoS parameter per~\eqref{eq:RoS} and $B_i > 0$ is the \emph{budget}. In what follows, we assume $B_i = +\infty$---that is, we lift the budget constraint---and $\tau_i = 1$ for all autobidders $i \in [n]$, so that liquid welfare reduces to the usual notion of welfare; an important aspect of our hardness results is that they persist without any budget constraints.\footnote{Assuming $\tau_i = 1$ for all $i \in [n]$ is without loss of generality since one can appropriately rescale the values.} If $p_j$ is the price attached to item $j$, the \emph{revenue} is defined as $\sum_{j=1}^k p_{j}$. We denote it by $\revenue(\vm)$; it is fully specified by the vector of multipliers.

An autobidding instance $I$ comprises a set of autobidders $N = [n]$, a set of items $S = [k]$, and a valuation profile $\{ v_{i j} \}_{i \in N, j \in S}$. We will write $I = (N, S, \{ v_{i j} \}_{i \in N, j \in S})$ to denote such an instance.

We continue by recalling the notion of the \emph{price of anarchy (PoA)}. We let $\mathcal{I}$ be a set of autobidding instances and $E$ be a solution concept, so that $E(I)$ represents the set of solutions of instance $I \in \mathcal{I}$ per the solution concept $E$; we restrict to solution concepts that always exist, so that these statements are not vacuous. $E(I)$ returns a tuple $(\vm, \vx)$, where $\vm$ is the vector of multipliers and $\vx$ is the allocation vector. In this context, the price of anarchy of a solution concept $E$ in terms of the liquid welfare is defined as
\begin{equation}
    \label{eq:PoA}
    \poa_E \defeq \sup_{I \in \mathcal{I}, (\cdot, \vx) \in E(I)} \frac{ \sup_{\vec{x}^*} \welfare(\vec{x}^*)}{ \welfare(\vec{x}) }.
\end{equation}

The most natural solution concept in the presence of RoS constraints, and the one we focus on, is the (second-price) \emph{autobidding equilibrium}~\citep{Li24:Vulnerabilities}, introduced next.

\paragraph{Autobidding equilibrium} The first attempt at defining an autobidding equilibrium is as the pure Nash equilibria of the game where each bidder's $i \in [n]$ utility function is the value $\sum_{j=1}^k x_{i j} v_{i j}$ subject to the RoS constraint $\sum_{j=1}^k x_{i j} v_{i j} \geq \sum_{j=1}^k x_{i j} p_{j}$. However, \citet{Li24:Vulnerabilities} observed that, no matter how ties are broken, such a game can have no pure Nash equilibria. Instead, they put forward an alternative definition that allows for some additional flexibility in handling ties; we refer to~\citet{Mehta22:Auction} for a closely related notion.\footnote{A more precise nomenclature would be \emph{second-price} autobidding equilibria, since those equilibrium concepts can be defined beyond second-price auctions. For the sake of exposition, since we restrict our attention to second-price auctions, we use the term ``autobidding equilibria'' throughout the paper without specifying the underlying auction format. It is also worth pointing out that the term ``pacing equilibrium'' has also been used to refer to~\Cref{def:autobidequil}~\citep{Filos24:Ppad}.}

\begin{definition}[Autobidding equilibrium; \citealp{Li24:Vulnerabilities}]
    \label{def:autobidequil}
    An \emph{autobidding equilibrium} $(\vec{m}, \vec{x})$ (with respect to parallel second-price auctions) comprises a vector of multipliers $\vec{m} \in [1, M]^n$ and allocation vectors $\vec{x} \in [0, 1]^{n \times k}$ such that the following properties hold:
    \begin{enumerate}
        \item Only autobidders with the highest bid can win the item: if $x_{i j} > 0$, then $m_{i} v_{i j} = \max_{i' \in [n]} m_{i'} v_{i' j}$ for all $i \in [n]$ and $j \in [k]$.
        \item A winner pays the second highest bid: if $x_{i j} > 0$, then $p_j = \max_{i' \neq i} m_{i'} v_{i' j}$ for all $i \in [n]$ and $j \in [k]$.
        \item All items are fully allocated: $\sum_{i=1}^n x_{ij} = 1$ for all $j \in [k]$.
        \item Autobidders satisfy the RoS constraint: $\sum_{j=1}^k x_{i j} v_{i j} \geq \sum_{j=1}^k x_{i j} p_j$ for all $i \in [n]$.\label{item:RoS}
        \item Maximal pacing: unless $m_i = M$, $\sum_{j=1}^k x_{i j} v_{i j} = \sum_{j=1}^k x_{i j} p_j$ for all $i \in [n]$.\label{item:max-bid}
    \end{enumerate}
\end{definition}
Imposing an upper bound $M$ on the multipliers, as in the above definition, is necessary to guarantee existence; otherwise, an autobidder without a binding RoS constraint would be inclined to increase its multiplier indefinitely. 

\begin{remark}[Reserve prices]
    For convenience, some of our gadgets make use of \emph{reserve prices}, which can be thought of as having an additional autobidder bidding at that fixed price. As observed by~\citet{Li24:Vulnerabilities}, this can always be simulated via auxiliary autobidders (\Cref{appendix:reserveprices}). 
\end{remark}

A basic fact that we will use in our reduction is that autobidding equilibria are scale invariant:

\begin{claim}[Scale invariance] 
    \label{claim:scale-inv}
    If $I'$ is an instance obtained from $I$ by rescaling each value by $\eta > 0$, then $(\vm, \vx)$ is an autobidding equilibrium of $I$ if and only if it is an autobidding equilibrium of $I'$. 
\end{claim}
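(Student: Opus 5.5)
The plan is to check directly that each of the five conditions in \Cref{def:autobidequil} is invariant under the map $v_{ij} \mapsto \eta v_{ij}$, keeping the pair $(\vm, \vx)$ fixed. Since the rescaling with $\eta$ is inverted by rescaling with $1/\eta > 0$, it suffices to show one direction: if $(\vm, \vx)$ is an autobidding equilibrium of $I$, then it is one of $I'$. The converse follows by applying the same argument to $I'$ with factor $1/\eta$. The multiplier range $[1, M]$ is the same in both instances, so the constraint $\vm \in [1,M]^n$ is unaffected.

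First, I will handle the allocation conditions (1) and (3). In $I'$ the bids are $m_i \eta v_{ij}$. Because $\eta > 0$, multiplying all bids on item $j$ by $\eta$ preserves their ordering, and hence the set of maximizers $\arg\max_{i'} m_{i'} v_{i'j}$. So condition (1) transfers verbatim. Condition (3) does not involve values at all.

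Next, I will handle the payment condition (2). The second-highest bid scales as $\max_{i' \neq i} m_{i'} \eta v_{i'j} = \eta \max_{i' \neq i} m_{i'} v_{i'j}$. So the prices of $I'$ are $p'_j = \eta p_j$, and condition (2) holds in $I'$ exactly when it holds in $I$. If reserve prices appear, they should be rescaled by $\eta$ as well; equivalently, the auxiliary autobidders simulating them are rescaled with everyone else.

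Finally, for conditions (4) and (5), both sides of the RoS inequality and of the maximal-pacing equality get multiplied by $\eta$:
\[
\sum_j x_{ij}\,\eta v_{ij} \;\geq\; \sum_j x_{ij}\,\eta p_j \iff \sum_j x_{ij} v_{ij} \geq \sum_j x_{ij} p_j .
\]
The same holds with equality in place of the inequality. The premise ``$m_i \neq M$'' in condition (5) is unchanged.

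There is no real obstacle here. The only point needing care is to note that prices are determined by the multipliers and values, so they scale linearly as well, and that positivity of $\eta$ is what preserves both the argmax structure and the direction of the inequalities.
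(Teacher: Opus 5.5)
Your proof is correct. The paper states this claim as a basic fact and gives no proof. Your condition-by-condition check is exactly the routine verification it implicitly relies on, including your remark that reserve prices must be rescaled along with the values (equivalently, that the auxiliary autobidders simulating them are rescaled too).
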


An autobidding equilibrium always exists, and computing one lies in the complexity class~$\FIXP$, which characterizes the complexity of Nash equilibria in multi-player general-sum games~\citep{Etessami10:Complexity}.

\begin{proposition}[\citealp{Filos24:Ppad,Li24:Vulnerabilities}]
    \label{prop:existence}
    An autobidding equilibrium always exists. Moreover, computing one is in $\FIXP$.
\end{proposition}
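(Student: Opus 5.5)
The plan is to prove existence via Kakutani's fixed-point theorem on the joint space of multipliers and allocations. FIXP membership then follows by replacing the correspondence with a continuous map built from $\{+,-,\times,\max,\min\}$ and rational constants, whose fixed points are exactly the equilibria. This follows the approach of \citet{Filos24:Ppad,Li24:Vulnerabilities}.

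\textbf{Existence.} Work on the compact convex set $D \defeq [1,M]^n \times \Delta^k$, where $\Delta^k$ is the product over items of simplices over bidders. For each $i,j$ define the continuous per-winner price $q_{ij}(\vm) \defeq \max_{i' \neq i} m_{i'} v_{i'j}$. Define the slack
\[
s_i(\vm,\vx) \defeq \sum_{j} x_{ij}\bigl(v_{ij} - q_{ij}(\vm)\bigr),
\]
which is continuous in $(\vm,\vx)$. Let $X(\vm)$ be the set of allocations in which each item is split only among bidders attaining $\max_{i'} m_{i'}v_{i'j}$. This set is nonempty, convex, and has closed graph, hence is upper hemicontinuous. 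Define the correspondence
\[
\Phi(\vm,\vx) \defeq \bigl\{ (\vm',\vx') : \vx' \in X(\vm),\ m'_i = \min\{M,\max\{1, m_i + s_i(\vm,\vx)\}\} \bigr\},
\]
which has nonempty convex values and a closed graph. Kakutani yields a fixed point $(\vm,\vx)$. Conditions 1--3 of \Cref{def:autobidequil} hold because $\vx \in X(\vm)$, and condition 2 holds since a winner's price is exactly $q_{ij}(\vm)$. The multiplier equation gives one of three cases: $s_i = 0$; or $m_i = M$ with $s_i \ge 0$; or $m_i = 1$ with $s_i \le 0$. In the last case I will use that a bidder with $m_i = 1$ bids $v_{ij}$ and only wins items where every competing bid is at most $v_{ij}$, so $q_{ij} \le v_{ij}$ and hence $s_i \ge 0$, forcing $s_i = 0$. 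Therefore RoS (condition~\ref{item:RoS}) and maximal pacing (condition~\ref{item:max-bid}) hold.

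\textbf{FIXP membership.} The main obstacle is replacing the allocation correspondence $X(\vm)$ by a single-valued continuous map without creating spurious fixed points, because the argmax is discontinuous at ties. I would first try a projected-ascent update
\[
\vx'_{\cdot j} = \Pi_{\Delta}\bigl(\vx_{\cdot j} + (m_i v_{ij} - \max_{i'} m_{i'}v_{i'j})_i\bigr).
\]
Projection onto a simplex is computable with $\max$/$\min$ and arithmetic via the standard threshold formula, so it is expressible in FIXP. At a fixed point, mass sits only on coordinates with maximal gradient, and these are exactly the highest bidders. Combining this with the multiplier update above gives a FIXP map whose fixed points are autobidding equilibria by the same case analysis.

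The delicate point is the threshold formula for the projection. It involves sorting, but this can be written as a max over subsets or through a fixed-size min/max network, which keeps the circuit polynomial. I would verify carefully that no extraneous fixed points arise at boundaries, for example at $m_i = 1$ combined with ties.
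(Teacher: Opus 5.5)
Your proof is correct. The paper gives no argument for this proposition; it imports the result from \citet{Filos24:Ppad,Li24:Vulnerabilities}. So your Kakutani argument plus the explicit circuit is a self-contained substitute, not a reconstruction of anything in the text. What it buys is an explicit continuous map whose fixed points are exactly the equilibria. A few remarks tighten it.

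First, your FIXP map $F$ is already a continuous self-map of the polytope $[1,M]^n \times \Delta^k$, and its fixed points are equilibria by your case analysis. Brouwer applied to $F$ therefore gives existence directly, and the separate Kakutani step is redundant.

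Second, of the two options you list for the simplex projection, only the comparator network keeps the circuit polynomial. The max-over-subsets formula has exponentially many terms. Concretely, sort $z = \vx_{\cdot j} + g_j$ with $O(n^2)$ compare-exchange gates computing $\max$ and $\min$. Let $S_r$ be the sum of the $r$ largest entries, set $\theta = \max_{r \in [n]} (S_r - 1)/r$, and output $\max\{z_i - \theta, 0\}$ coordinatewise.

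Third, the boundary issue you leave open is already settled by your own argument. The identity $\vx_{\cdot j} = \Pi_\Delta(\vx_{\cdot j} + g_j)$ holds if and only if $\vx_{\cdot j}$ maximizes $\langle g_j, \cdot\rangle$ over the simplex. That means its support consists of highest bidders, with no exceptions at $m_i = 1$ or at ties. The clipping cases then give RoS and maximal pacing exactly as in your existence proof. Conversely, every autobidding equilibrium is a fixed point of $F$. So the fixed-point set coincides with the equilibrium set, and the output map can be taken to be the identity.

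Finally, you should state explicitly that with ties the price $p_j$ in condition 2 is well defined. Every tied winner $i$ has $q_{ij}(\vm)$ equal to the common top bid.
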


Interestingly, \citet{Filos24:Ppad} also observed that an \emph{exact} autobidding equilibrium per~\Cref{def:autobidequil} can be supported on irrational numbers.\footnote{To our knowledge, it is not known whether computing an approximate autobidding equilibrium per~\Cref{def:approx-autobidequil} is in \PPAD, although we suspect this is the case.} This motivates the following relaxation. (We have made some stylistic adjustments compared to the definition of~\citealp{Li24:Vulnerabilities}.)

\begin{definition}[\citealp{Li24:Vulnerabilities}]
    \label{def:approx-autobidequil}
    A \emph{$\beta$-approximate autobidding equilibrium} $(\vec{m}, \vec{x})$ (with respect to parallel second-price auctions) comprises a vector of multipliers $\vec{m} \in [1, M]^n$ and allocation vectors $\vec{x} \in [0, 1]^{n \times k}$ such that the following properties hold:
    \begin{enumerate}
        \item Only autobidders close enough to the highest bid can win the item: if $x_{i j} > 0$, then $m_{i} v_{i j} \geq (1 - \beta) \max_{i' \in [n]} m_{i'} v_{i' j}$ for all $i \in [n]$ and $j \in [k]$.
        \item A winner pays the second highest bid: if $x_{i j} > 0$, then $p_j = \max_{i' \neq i} m_{i'} v_{i' j}$ for all $i \in [n]$ and $j \in [k]$.
        \item All goods are fully allocated: $\sum_{i=1}^n x_{ij} = 1$ for all $j \in [k]$.
        \item Autobidders approximately satisfy the RoS constraint: $\sum_{j=1}^k x_{i j} v_{i j} \geq (1 - \beta) \sum_{j=1}^k x_{i j} p_j$ for all $i \in [n]$.
        \item Approximately maximal pacing: unless $m_i = M$, $\sum_{j=1}^k x_{i j} v_{i j} \leq (1+\beta) \sum_{j=1}^k x_{i j} p_j$ for all $i \in [n]$.
    \end{enumerate}
\end{definition}

A basic fact, which goes back to the early paper of~\citet{Aggarwal19:Autobidding}, is that the price of anarchy with respect to autobidding equilibria is always at most $2$; for completeness, we include the proof in~\Cref{appendix:proofs2} since~\citet{Aggarwal19:Autobidding} did not explicitly used the language of autobidding equilibria in their treatment. (The result of~\citet{Aggarwal19:Autobidding} is in fact significantly more general.)

\begin{restatable}[\citealp{Aggarwal19:Autobidding}]{proposition}{poatwo}
    \label{prop:poa-ub}
    The price of anarchy~\eqref{eq:PoA} with respect to autobidding equilibria is at most $2$.
\end{restatable}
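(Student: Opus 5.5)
The plan is to compare an arbitrary autobidding equilibrium $(\vm,\vx)$ with an arbitrary allocation $\vx^*$, splitting the items according to who wins them. Since liquid welfare is linear here (no budgets, $\tau_i=1$), we have $\welfare(\vx^*)=\sum_j v_{i^*(j)j}$, where $i^*(j)$ is the bidder of highest value for item $j$; a fractional $\vx^*$ can only do worse, so it suffices to treat this integral optimum. Partition the items into $S_1$, the items $j$ on which the equilibrium allocation puts full weight on $i^*(j)$, and $S_2$, the remaining items. Items in $S_1$ contribute at least $v_{i^*(j)j}$ to $\welfare(\vx)$, so the whole task is to charge the value of items in $S_2$ to something the equilibrium already collects.

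For $j\in S_2$, let $i=i^*(j)$. Some other bidder $i'$ wins part of $j$, so the price satisfies $p_j\ge m_i v_{ij}\ge v_{ij}$, since $m_i\ge 1$ and the price is the highest competing bid. This holds for every winner of a fraction of $j$: if $i$ itself wins a fraction, the fraction won by $i'$ still pays at least $m_i v_{ij}$. Summing over $j\in S_2$, and handling the fraction $x_{ij}$ that $i$ does win separately (it is counted directly in welfare), gives $\sum_{j\in S_2} v_{i^*(j)j}(1-x_{i^*(j)j}) \le \sum_j p_j x_{\cdot j}\le \mathsf{Revenue}$. By the RoS constraint (\Cref{def:autobidequil}, item~\ref{item:RoS}), each bidder's payment is at most its value, so summing over bidders gives total revenue at most $\welfare(\vx)$.

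Combining the two parts,
\[
\welfare(\vx^*)=\sum_j v_{i^*(j)j}\le \sum_j v_{i^*(j)j}x_{i^*(j)j}+\sum_j v_{i^*(j)j}(1-x_{i^*(j)j})\le \welfare(\vx)+\revenue(\vm)\le 2\,\welfare(\vx).
\]
For the first term, $v_{i^*(j)j}x_{i^*(j)j}$ summed over items is at most $\welfare(\vx)$. For the second term, the fraction $1-x_{i^*(j)j}$ is won by others, each paying at least $v_{i^*(j)j}$, so it is bounded by revenue. The first inequality in fact holds with equality.

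The only delicate step is the price bound when $x_{i^*(j)j}\in(0,1)$. Here we must use item~2 of \Cref{def:autobidequil} for the other winners $i'$: their price is $\max_{i''\neq i'} m_{i''}v_{i''j}\ge m_{i^*}v_{i^*j}$. We must also account for payments per unit of allocation, i.e.\ revenue is $\sum_j\sum_i x_{ij}p_j$. Once this bookkeeping is done, the argument goes through; maximal pacing is not even needed.
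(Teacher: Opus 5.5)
Your proof is correct and follows essentially the same charging argument as the paper. For any item not fully won by a highest-value bidder, the price is at least $v_{i^*(j)j}$ because $m_{i^*}\ge 1$, and summing the RoS constraints bounds total revenue by $\welfare(\vx)$. The only difference is bookkeeping: the paper discards all equilibrium welfare on such items and charges their full optimal value to $p_j$, whereas you keep the fraction $x_{i^*(j)j}$ and charge only the remainder to the other winners' payments, which gives the same factor of $2$.
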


\paragraph{Computing high-quality autobidding equilibria} We are interested in the following computational problem. Given an autobidding instance $I$, let $\opt$ be the optimal liquid welfare attained by an autobidding equilibrium. The goal is to compute an autobidding equilibrium $(\vec{m}, \vx)$ such that
\begin{equation}
    \label{eq:approx-wel}
    \welfare(\vx) \geq \frac{1}{\rho} \cdot \opt
\end{equation}
for some parameter $\rho \geq 1$.\footnote{In the literature on approximation algorithms, it is more common to use $\rho^{-1}$ instead of $\rho$ when it comes to maximization problems, but we use~\eqref{eq:approx-wel} in order to relate directly the approximation factor with the PoA bound in~\eqref{eq:PoA}.} \Cref{prop:poa-ub} implies that for $\rho = 2$, this reduces to finding any autobidding equilibrium; \eqref{eq:approx-wel} would then automatically be met since $\opt \leq \sup_{\vx^*} \welfare(\vx^*)$. We call attention to the fact that~\eqref{eq:approx-wel} measures approximation in terms of the optimal autobidding equilibrium, whereas the price of anarchy bound~\eqref{eq:PoA} compares relative to the optimal allocation unfettered by equilibrium constraints. Similarly, the approximation objective~\eqref{eq:approx-wel} can be adapted in terms of the revenue. Prior work by~\citet{Li24:Vulnerabilities} has shown that both problems are \APX-hard, meaning that they are hard to approximate for \emph{some} constant $\rho$. In particular, this means that there is no polynomial-time approximation scheme (\PTAS).

\begin{remark}
    Our reduction in~\Cref{sec:inapprox-stable} readily carries over under approximate autobidding equilibria per~\Cref{def:approx-autobidequil}; we do not spell this out as it unnecessarily complicates the arguments. Our analysis in~\Cref{sec:learning} explicitly accounts for approximation errors.
\end{remark}

We also examine a decision version of this problem, where the question is whether there exists an autobidding equilibrium whose welfare is within a certain factor $\rho \geq 1$ from the optimal $\sup_{\vx^*} \welfare(\vx^*)$. In view of~\Cref{prop:existence}, this question is trivial when $\rho \geq 2$, so the interesting regime is when $\rho < 2$.
\section{Inapproximability of high-quality autobidding equilibria}
\label{sec:inapprox-stable}

In this section, we establish inapproximability results for computing autobidding equilibria with high (liquid) welfare (\Cref{theorem:welfare-hard}) or revenue (\Cref{theorem:revenue-hard}). We then go on to refine our hardness results in the presence of ML advice (\Cref{sec:MLadvise}).

\paragraph{Label cover} The starting point of our reduction is the \emph{label cover problem}~\citep{Arora97:Hardness}, which is a mainstay in the hardness of approximation literature.

\begin{definition}[Label cover]
    \label{def:labelcover}
    A \emph{label cover} instance consists of a tuple $(G, \Sigma, \Pi)$, where
    \begin{itemize}
        \item $G = (V_1, V_2, E)$ is a left- and right-regular bipartite graph with vertex sets $V_1$ and $V_2$ and edge set $E$;
        \item every vertex in $V_1 \cup V_2$ is to be assigned a label from the finite alphabet $\Sigma$;\footnote{It is common to assume that vertices in $V_1$ are labeled from an alphabet $\Sigma_1$ and vertices in $V_2$ from a disjoint alphabet $\Sigma_2$, but this distinction is not important for our purposes.} and
        \item for each edge $e \in E$, there is a constraint $\Pi_e: \Sigma \to \Sigma$ (the \emph{projection property}) so that
        \begin{equation*}
            \Pi \defeq \left\{ \Pi_e : \Sigma \to \Sigma, e \in E  \right\}.
        \end{equation*}
    \end{itemize}
    A \emph{labeling} of $G$ is a mapping $\ell : V \to \Sigma$, assigning a label to each vertex. We say that $\ell$ \emph{satisfies} an edge $e = (u, v) \in E$ if $\Pi_e(\ell(u)) = \ell(v)$.
\end{definition}

A well-studied, more constrained version of this problem---known as the \emph{unique} label cover problem---postulates that each $\Pi_e$ is a permutation~\citep{Khot02:Power}, but our reduction does not require that particular structure. We now introduce the \emph{promise}, gap-amplified version of label cover.

\begin{definition}
    For a parameter $\delta \in (0, 1)$, the problem $\gaplabel_{1, \delta}$ is defined with respect to an instance of label cover $(G, \Sigma, \Pi)$ such that either of the following holds:
    \begin{itemize}
        \item There exists a labeling $\ell : V \to \Sigma$ that satisfies \emph{all} the edge constraints.
        \item Any labeling $\ell : V \to \Sigma$ satisfies \emph{at most} $\delta |E|$ of the edge constraints.
    \end{itemize}
    $\gaplabel_{1, \delta}$ is the decision problem of identifying which of the above two cases is true, under the promise that one of them holds.
\end{definition}

We will use the following seminal hardness result concerning $\gaplabel_{1, \delta}(\Sigma)$~\citep{Raz98:Parallel,Arora98:Proof}.

\begin{theorem}    \label{theorem:labelcover}
    For any constant $\delta > 0$, there is an alphabet $\Sigma = \Sigma(\delta)$ such that $\gaplabel_{1, \delta}$ is \NP-hard.
\end{theorem}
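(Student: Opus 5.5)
We plan to obtain this as the standard composition of the PCP theorem with Raz's parallel repetition theorem, keeping track of the projection property, biregularity, and the single-alphabet convention of \Cref{def:labelcover}.

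\textbf{Step 1: a base instance with constant gap.} We start from the PCP theorem in the form of gap-$\textsc{3SAT}$-$5$. Here each clause has exactly $3$ variables and each variable occurs in exactly $5$ clauses. It is \NP-hard to distinguish satisfiable formulas from formulas in which every assignment violates at least an $\epsilon_0$ fraction of clauses, for some absolute constant $\epsilon_0 > 0$. From such a formula we build the clause--variable game. The set $V_1$ consists of the clauses, with labels the $7$ satisfying assignments of the clause. The set $V_2$ consists of the variables, with labels in $\{0,1\}$. There is an edge $e = (C, x)$ whenever $x$ occurs in $C$, and $\Pi_e$ restricts the clause assignment to $x$. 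The graph is left-regular with degree $3$ and right-regular with degree $5$. Each $\Pi_e$ is a projection.

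Completeness is immediate. For soundness, fix any labeling and compare it with the assignment read off from the labels on $V_2$. Each clause violated by that assignment has a satisfying label that disagrees with it on some variable, so at least one of its $3$ edges is unsatisfied. Hence at most a $1 - \epsilon_0/3$ fraction of edges can be satisfied.

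\textbf{Step 2: parallel repetition.} We apply $t$-fold parallel repetition. Left vertices become $t$-tuples of clauses and right vertices $t$-tuples of variables. Edges are tuples of edges, and constraints are coordinatewise projections. The new instance is still a projection game, and it is biregular with degrees $3^t$ and $5^t$. Completeness is preserved.

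Raz's theorem gives soundness $(1 - \epsilon_0/3)^{c t}$, where $c>0$ depends only on $\epsilon_0$ and the alphabet size. For a given constant $\delta$ we choose a constant $t = O(\log(1/\delta))$ so that this is at most $\delta$. The alphabets become $\Sigma_1$ with $|\Sigma_1| = 7^t$ and $\Sigma_2$ with $|\Sigma_2| = 2^t$, both constants. The reduction runs in polynomial time $|\text{instance}|^{O(t)}$.

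\textbf{Step 3: a single alphabet.} We set $\Sigma \defeq \Sigma_1$, which is possible since $|\Sigma_2| \le |\Sigma_1|$. Fix an injection $\iota : \Sigma_2 \to \Sigma_1$ and redefine each constraint as $\iota \circ \Pi_e$. Its image lies in $\iota(\Sigma_2)$.

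Completeness carries over by labeling right vertices through $\iota$. For soundness, take any labeling of the single-alphabet instance. A right vertex whose label lies outside $\iota(\Sigma_2)$ satisfies no incident edge. Relabeling it with an arbitrary element of $\iota(\Sigma_2)$ therefore does not decrease the number of satisfied edges, and the result corresponds to a labeling of the repeated game. Hence the soundness bound $\delta$ carries over.

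\textbf{Main obstacle.} The substantive step is the exponential decay of soundness under parallel repetition, which is exactly Raz's theorem; naive repetition fails because of correlations across coordinates. We would invoke it as a black box. The remaining checks are bookkeeping: biregularity and projection structure survive repetition, and the alphabet conversion does not create spurious satisfying pairs.
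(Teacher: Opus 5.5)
Your proposal is correct and follows the route the paper has in mind: the paper does not prove this statement itself, but cites it as the combination of the PCP theorem with Raz's parallel repetition theorem, which is exactly the composition you carry out. Your additional bookkeeping correctly adapts the standard two-alphabet result to the single-alphabet form of \Cref{def:labelcover}, as the paper's footnote anticipates: you check that biregularity and the projection property survive repetition, and you merge the alphabets via an injection $\iota:\Sigma_2\to\Sigma_1$, relabeling right vertices to recover soundness.
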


For our purposes, the regime where $\delta = o(1)$ is also relevant. We will use the following result due to~\citet{Moshkovitz08:Two}.

\begin{theorem}[\citealp{Moshkovitz08:Two}]
    \label{remark:alphabetsize}
    There exists a constant $c > 0$ such that for every $\delta \geq 1/N^c$, \textsc{SAT} on input of size $n$ can be efficiently reduced to a label cover instance of size $N = n^{1 + o(1)} \poly(1/\delta)$ over an alphabet of size $\exp(1/\delta)$ and soundness error $\delta$.
\end{theorem}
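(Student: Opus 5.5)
This statement is the two-query PCP theorem of \citet{Moshkovitz08:Two}, and the paper uses it as a black box. If I had to prove it, I would not use parallel repetition. Applying \citet{Raz98:Parallel} to a constant-gap instance from \Cref{theorem:labelcover} drives the error down to $\delta$, but only with $\Theta(\log(1/\delta))$ repetitions. The size then grows to $n^{O(\log(1/\delta))}$, which is far from the near-linear bound $n^{1+o(1)}\poly(1/\delta)$. The plan is instead to build the projection game directly from algebraic encodings, where the soundness error falls polynomially in the field size while the size stays nearly linear.

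\textbf{Step 1: a near-linear base PCP.} Reduce \textsc{SAT} on $n$ variables to a polynomial constraint satisfaction problem over a field $\mathbb{F}$ with $|\mathbb{F}|=\poly(1/\delta)$. This uses the near-linear PCPs of Ben-Sasson--Sudan and Dinur, which produce an instance of size $n^{1+o(1)}$. The assignment is encoded as a low-degree (Reed--Muller) polynomial over $\mathbb{F}^m$, where the dimension $m$ is chosen so that $|\mathbb{F}|^m = n^{1+o(1)}$.

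\textbf{Step 2: a two-query projection game.} Build the game from a \emph{manifold-versus-point} low-degree test. The left vertices are low-dimensional curves or subspaces in $\mathbb{F}^m$, labeled by a low-degree restriction that also satisfies the local constraints. The right vertices are points, labeled by field elements. The constraint on each edge is restriction of the manifold's label to the point, so the game has the projection property $\Pi_e$. The soundness analysis rests on a low-degree-testing theorem for manifold-vs-point tests with error $\poly(1/|\mathbb{F}|)$, in the style of Raz--Safra or Arora--Sudan. To keep the size near-linear, the manifolds must be sampled from a small, \emph{derandomized} family, such as subfield-based or $\epsilon$-biased subspaces, rather than all subspaces.

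\textbf{Step 3: composition.} The left alphabet in Step 2 is large, since its labels are entire polynomials on manifolds. To shrink it to $\exp(1/\delta)$ while keeping two queries and projection, compose with an inner \emph{decodable} PCP, in which the inner verifier also outputs a decoded symbol. This must be done so that the error adds up as $O(\delta)$ and does not multiply, which requires list-decoding soundness rather than unique decoding. Finally, regularize the graph to make it left- and right-regular as \Cref{def:labelcover} requires.

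\textbf{Main obstacle.} I expect the hardest step to be the soundness of Step 2 in the low-error regime with a derandomized manifold family. One must show that if an assignment passes the test with probability $\delta \ge 1/N^c$, then it is decoded by a short list of $\poly(1/\delta)$ global low-degree polynomials. The natural first attempt is to adapt the Raz--Safra plane-vs-point analysis, replacing its consistency graph over all planes with the sparse family and checking that the required sampling and expansion properties still hold. This is precisely where the threshold $\delta \ge 1/N^c$ enters.
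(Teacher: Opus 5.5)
The paper gives no proof of this statement; it imports it as a black box from \citet{Moshkovitz08:Two}. Your outline is a fair high-level account of how Moshkovitz and Raz (and later Dinur and Harsha, via decodable PCPs) proceed: a derandomized manifold-versus-point projection game over a near-linear algebraic PCP, then a low-error composition whose inner verifier list-decodes, then regularization. The genuine problem is the range of $\delta$ you claim to cover.

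Your Step~3 ends with an alphabet of size $\exp(1/\delta)$. An efficient reduction must write down each projection $\Pi_e$, which is a table with $|\Sigma|$ entries; the paper's own reduction even builds $|\Sigma|$ gadgets per edge. So your construction runs in polynomial time only when $\exp(1/\delta) \le \mathrm{poly}(N)$, i.e., $\delta = \Omega(1/\log N)$. At $\delta = N^{-c}$ the alphabet has size $\exp(N^{c})$, however good the low-degree test of Step~2 is. This is why the Moshkovitz--Raz theorem is stated only for $\delta \ge (\log n)^{-\beta}$, with a small constant $\beta > 0$. The range $\delta \ge 1/N^{c}$ written here coincides with that of the projection games conjecture (\Cref{conj:projection}). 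Reaching polynomially small error with an alphabet the reduction can actually output is open; it is a projection-game form of the sliding scale conjecture.

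Your last paragraph therefore mislocates the bottleneck. The threshold is forced by the alphabet left after the innermost composition, not by the soundness analysis of the derandomized test. Step~1 also degenerates in the regime you target. Together, $|\mathbb{F}| = \mathrm{poly}(1/\delta) = N^{\Theta(c)}$ and $|\mathbb{F}|^m = n^{1+o(1)}$ force $m$ to be a constant. Then the low-degree extension has degree $n^{\Omega(1)}$, and the manifold labels have $n^{\Omega(1)}$ coefficients. The inner verifier in Step~3 would then need a two-query projection PCP whose error is polynomially small in its own input length. That is essentially the statement you set out to prove, so the recursion never bottoms out.

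You should restate the target as $\delta \ge (\log n)^{-\beta}$. This is also all the paper uses. Its reduction has $O(|E||\Sigma|)$ autobidders, so $|\Sigma| = \exp(1/\delta)$ must be polynomial, which forces $1/\delta = O(\log n)$. Taking $\delta = (\log n)^{-\beta}$ then yields exactly the $\log^{\Omega(1)}(nk)$ revenue gap claimed in the corollary.
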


There is a common conjecture in this line of work that asserts that it suffices to take the alphabet size as $|\Sigma| \leq \poly(1/\delta)$~\citep{Moshkovitz15:Projection} (\emph{cf.}~the sliding scale conjecture of~\citealp{Bellare93:Efficient}).

\begin{conjecture}[Projection games conjecture; \citealp{Moshkovitz15:Projection}]
    \label{conj:projection}
    There exists a constant $c > 0$ such that for every $\delta \geq 1/N^c$, \textsc{SAT} on input of size $n$ can be efficiently reduced to a label cover instance of size $N = n^{1 + o(1)} \poly(1/\delta)$ over an alphabet of size $\poly(1/\delta)$ and soundness error $\delta$.
\end{conjecture}

We will make use of this conjecture to obtain (conditionally) stronger polynomial inapproximability for revenue (\Cref{cor:revenue-projectiongames}).

Before we proceed, it will be useful to reformulate the label cover problem as follows. For every vertex $u \in V \defeq V_1 \cup V_2$, we introduce a Boolean variable $z_{u, \sigma} \in \{0, 1\}$ for each label $\sigma \in \Sigma$ subject to the constraint $\sum_{\sigma \in \Sigma} z_{u,\sigma} = 1$; this enforces that exactly one of the variables $\{ z_{u, \sigma} \}_{\sigma \in \Sigma}$ will be active, ensuring a legitimate labeling. The optimization problem underpinning the label cover problem can then be phrased as the constraint satisfaction problem (CSP)
\begin{equation}
    \label{eq:labelcover-CSP}
    \sum_{(u,v) = e \in E} \left( \bigvee_{\sigma \in \Sigma} ( z_{u,\sigma} \land z_{v, \Pi_e(\sigma)}) \right)
\end{equation}
subject to the uniqueness constraint $\sum_{\sigma \in \Sigma} z_{u,\sigma} = 1$ for all $u \in V$.\footnote{Throughout this paper, we associate \textsf{True} with $1$ and \textsf{False} with $0$, so that~\eqref{eq:labelcover-CSP} amounts to the number of satisfied edge constraints.} In particular, if $\Psi_e(\vec{z}) \defeq \bigvee_{\sigma \in \Sigma} ( z_{u, \sigma} \land z_{v,\Pi_e(\sigma)} )$ for $(u, v) = e$ is the edge constraint, one has to assign the Boolean variables $\vec{z}$ to satisfy as many of these edge constraints as possible. An obvious fact used in our reduction is that allowing $\sum_{\sigma \in \Sigma} z_{u, \sigma} \leq 1$ can never increase the number of satisfied edge constraints. In what follows, it will be convenient to associate the alphabet $\Sigma$ with the set $\{1, \dots, |\Sigma| \}$.

We split our reduction into a series of basic gadgets. \Cref{sec:labelassign} shows how to associate each variable to a multiplier in a way that corresponds to valid label assignment. \Cref{sec:AND} implements the conjunction required to encode~\eqref{eq:labelcover-CSP}. We then use these gadgets to construct an autobidding instance in which the welfare or revenue are tied to the number of satisfied edge constraints. (The high-level overview of our reduction was illustrated earlier in~\Cref{fig:diagram}.)

\subsection{Label assignment}
\label{sec:labelassign}

The first step in the reduction is to assign a label to each variable. To do so, for each vertex $u \in V$, we consider $|\Sigma|$ autobidders competing for $|\Sigma|+1$ items. The valuations are defined as follows.
\begin{equation}
    \label{eq:labelassign}
    \frac{1}{\eta} v_{i j} = 
    \begin{cases}
        M & \text{if } i = j,\\
        1 & \text{if } i \neq j \text{ and } j \neq |\Sigma| + 1,\\
        K & \text{if } j = |\Sigma|+1.
    \end{cases}    
\end{equation}
Above, $\eta \ll 1$ and $K \gg M$ are parameters that will be chosen appropriately. Specifically, $\eta$ is a scaling parameter that does not affect autobidding equilibria (\Cref{claim:scale-inv}). We will eventually select $\eta$ to be small enough so that the welfare and revenue of the gadgets are negligible. This valuation profile is illustrated in~\Cref{tab:assign}.
\begin{table}[t]
    \centering
    \caption{The construction behind~\Cref{lemma:assignment}, up to a rescaling factor $\eta$ that does not affect the autobidding equilibria. The key property that guarantees soundness is $K \gg M$.}
    \begin{tabular}{c c c c c c}
        & Item 1 & Item 2 & $\cdots$ & Item $|\Sigma|$ & Item $|\Sigma|+1$ \\ 
        Autobidder 1 & $M$ & $1$ & $\cdots$ & $1$ & $K $ \\
        Autobidder 2 & $1$ & $M$ & $\cdots$ & $1$ & $K$ \\
        $\vdots$ & $\vdots$ & $\vdots$ & $\ddots$ & $\vdots$ & $\vdots$ \\
        Autobidder $|\Sigma|$ & $1$ & $1$ & $\cdots$ & $M$ & $K$ \\
    \end{tabular}
    \label{tab:assign}
\end{table}

In this instance, each autobidder $i$ values highly the corresponding item $i$, but by far the most valuable item is the last one, which is equally valued by all autobidders. The role of this highly valued item is that it precludes having more than one autobidder select a large multiplier, for otherwise the RoS constraint would be violated. This is formalized below; the proofs from this section are deferred to~\Cref{appendix:proofs-main}.

\begin{restatable}[Label assignment]{lemma}{labass}
    \label{lemma:assignment}
    Consider the valuation profile given in~\eqref{eq:labelassign} with respect to some vertex $u \in V$ and $K$ large enough. For any label $\sigma \in \Sigma$, the vector of multipliers $(m_{u, \sigma'})_{\sigma' \in \Sigma} $ in which $m_{u,\sigma} = M$ and $m_{u, \sigma'} = 1$ for all $\sigma' \neq \sigma$ is part of an autobidding equilibrium. Furthermore, any autobidding equilibrium has to satisfy either of the following two properties:
    \begin{itemize}
        \item the maximum multiplier is at most $(M |\Sigma| + K)/(|\Sigma| + K) \approx 1$ when $K \gg M$, or
        \item the maximum multiplier is $M$ and the second highest multiplier is at most $1 + (M + |\Sigma|)/K \approx 1$ when $K \gg M$.
    \end{itemize}
\end{restatable}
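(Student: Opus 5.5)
We first invoke \Cref{claim:scale-inv} to drop the factor $\eta$, so that we work with the table in \Cref{tab:assign} directly. Both the completeness part and the soundness part then reduce to a case analysis of who wins each item under the bids $m_i v_{ij}$.

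\emph{Completeness.} Fix $\sigma$ and set $m_{u,\sigma}=M$ and $m_{u,\sigma'}=1$ for every $\sigma'\neq\sigma$. The allocation and prices are as follows.
\begin{itemize}
\item On item $\sigma$, bidder $\sigma$ bids $M^2$ while every other bidder bids $1$. So $\sigma$ wins it at price $1$.
\item On the last item, $\sigma$ bids $MK$ while the others bid $K$. So $\sigma$ wins it at price $K$.
\item On an item $j\notin\{\sigma,|\Sigma|+1\}$, bidders $j$ and $\sigma$ tie at bid $M$ and everyone else bids $1$. We allocate item $j$ entirely to bidder $j$ at price $M$.
\end{itemize}
Now check the RoS and maximal-pacing conditions. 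Bidder $\sigma$ obtains value $M+K$ and pays $1+K$. Its RoS constraint therefore holds with slack, which is allowed because $m_{u,\sigma}=M$. Each bidder $j\neq\sigma$ obtains value $M$ and pays $M$, so its RoS constraint holds with equality, as maximal pacing requires at $m_j=1<M$. All items are fully allocated, so every condition of \Cref{def:autobidequil} holds.

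\emph{Soundness.} Order the multipliers as $m_{(1)}\ge m_{(2)}\ge\cdots$. All bids on item $|\Sigma|+1$ are proportional to the multipliers, so the whole item goes to bidders attaining $m_{(1)}$, at price $K m_{(2)}$.

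First consider the case where the top multiplier is unique. The top bidder pays $K m_{(2)}$ for the last item. The most value it can collect in total is $K+M+(|\Sigma|-1)$. Its RoS constraint then gives
\[
K m_{(2)} \le K+M+|\Sigma|,
\]
that is, $m_{(2)}\le 1+(M+|\Sigma|)/K$. If moreover $m_{(1)}=M$, we are in the second bullet.

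It remains to bound $m_{(1)}$ when $m_{(1)}<M$, including the case where the top is tied. Here every bidder is pinned by maximal pacing, so every RoS constraint is tight. Summing over bidders, total value equals total payment, and total value is at most $M|\Sigma|+K$.

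To lower-bound total payment, I would show that the winner of each item pays at least a constant fraction, governed by the top multipliers, of the item's value to the runner-up.
\begin{itemize}
\item If the top is tied, the last item costs $K m_{(1)}$, paid by the tied bidders collectively.
\item For each item $j\le|\Sigma|$, the price is at least $m_{(2)}$ times the competing value.
\end{itemize}
Combining these should give an inequality of the form
\[
m_{(1)}\bigl(|\Sigma|+K\bigr)\;\lesssim\; M|\Sigma|+K,
\]
which is exactly the first-bullet threshold. The intuition is that the ratio of total maximal value to total baseline value bounds how far the top multiplier can rise while all RoS constraints remain binding.

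\emph{Main obstacle.} I expect this last step to be the hardest: bounding $m_{(1)}$ in the paced regime, in particular when the top multiplier is unique but the top bidder is still tight. In that situation $m_{(2)}\approx1$ and the last item's price alone does not constrain $m_{(1)}$. I would first try to make the top bidder's tight constraint force its winnings on items $j\le|\Sigma|$ to be paid at prices near the value. Failing that, I would apply the summed tight-RoS argument to the top group together with the bidders it displaces, and use this to extract the bound $(M|\Sigma|+K)/(|\Sigma|+K)$.
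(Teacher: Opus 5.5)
Your completeness check, your unique-top bound $m_{(2)}\le 1+(M+|\Sigma|)/K$, and your summed-RoS bound for a tied top are all correct. The genuine gap is the case you flag yourself: a unique top multiplier with $m_{(1)}<M$.

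Neither of your proposed routes closes this case. The top bidder's binding constraint involves only $m_{(2)}$, since it wins its own item at price $m_{(2)}$ and the last item at price $Km_{(2)}$. Lower-bounding every item price by $m_{(2)}$ in the summed constraint again bounds only $m_{(2)}$.

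The missing idea, which is the paper's argument, is to apply maximal pacing to the \emph{non-top} bidders individually.
\begin{itemize}
\item If $m_{(1)}<M$, then on each item $j\le|\Sigma|$ bidder $j$ bids $m_jM\ge M>m_{(1)}\ge m_i$ for all $i\neq j$.
\item So every bidder wins its own item exclusively, at price $\max_{i\neq j}m_i<M$, and wins no share of any other item $j'\le|\Sigma|$.
\item A bidder whose multiplier is strictly below $m_{(1)}$ also gets no share of item $|\Sigma|+1$. It therefore collects value $M$ at a price below $M$.
\item Its RoS constraint is then slack even though its multiplier is below $M$, which contradicts maximal pacing.
\end{itemize}
Hence in this regime all multipliers coincide, and a unique top is impossible. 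Summing the binding constraints then gives exactly $m(|\Sigma|+K)=M|\Sigma|+K$, which is your tied-case computation.

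You also leave out the case $m_{(1)}=M$ with a tie at the top. This case satisfies neither bullet, so it must be excluded. Your tied-case summation already does this, because it uses only the RoS inequalities and not tightness. The total payment is at least $(K+|\Sigma|)M$, which exceeds the total value $M|\Sigma|+K$ whenever $M>1$.
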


In view of~\Cref{lemma:assignment}, we associate a multiplier that has value $M$ with $\textsf{True}$ and a multiplier that has value $\approx 1$ with \textsf{False}. Ultimately, we will set $M \gg 1$. \Cref{lemma:assignment} tells us that at most one of the variables can be activated, which simulates the process of label assignment---\Cref{lemma:assignment} leaves open the possibility that no label is given.

The next subsections show how to implement an \textsf{AND} gadget, which is needed in order to encode the conjunctions in~\eqref{eq:labelcover-CSP}. This is accomplished in two steps: we first implement a \textsf{NOT} gadget, and we then complete the construction through a \textsf{NAND} gadget. 

When we glue together multiple autobidding systems, we will maintain a basic invariance that guarantees that we can analyze those systems independently; we formalize this below.

\begin{restatable}[Conservative extension of autobidding instances]{lemma}{consext}
    \label{lemma:extension}
    Consider two instances $I = (N, S, \cdot)$ and $I' = (N', S', \cdot)$ with $S \subset S'$ and $N \subset N'$. If in instance $I'$, every autobidder $i \in N$ never obtains a positive fraction of an item in $S' \setminus S$ and every autobidder $i \in N' \setminus N$ never obtains a positive fraction of an item in $S$, no matter the choice of multipliers in $[1, M]^{N + N'}$, then an autobidding equilibrium $(\vec{m}, \vec{x})$ of $I'$ is such that $((m_i)_{i \in N}, (x_{i j})_{i \in N, j \in S} )$ is an autobidding equilibrium of $I$.
\end{restatable}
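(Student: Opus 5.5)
The plan is to verify the five conditions of \Cref{def:autobidequil} for the restricted pair $((m_i)_{i \in N}, (x_{ij})_{i \in N, j \in S})$ directly, transferring each from the corresponding condition in $I'$. The two non-interaction hypotheses will be used in two ways: to show that only autobidders of $N$ can hold items of $S$, and to show that the spend and value of each $i \in N$ are generated entirely by items in $S$.

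First, full allocation. Fix $j \in S$. By hypothesis, no $i' \in N' \setminus N$ obtains a positive fraction of $j$, so $x_{i'j} = 0$. Hence $\sum_{i \in N} x_{ij} = \sum_{i \in N'} x_{ij} = 1$, and item $j$ is fully allocated within $I$. By the same token, for $i \in N$ the sums $\sum_{j \in S'} x_{ij} v_{ij}$ and $\sum_{j \in S'} x_{ij} p_j$ reduce to sums over $S$, because $x_{ij} = 0$ for $j \in S' \setminus S$. The RoS inequality therefore carries over verbatim. So does maximal pacing, since $m_i$ and the upper bound $M$ are unchanged.

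The step needing care is the highest-bid and second-price conditions, because the maxima in $I'$ range over $N'$ while those in $I$ range over $N$. The issue is that an outside bidder $i' \in N' \setminus N$ might bid highest on some $j \in S$, or set its price, without ever winning it. I would use the ``no matter the choice of multipliers'' clause to exclude this. Suppose some $i' \notin N$ had $m_{i'} v_{i'j} \geq \max_{i \in N} m_i v_{ij}$ for some $j \in S$. Then there is a multiplier vector in $[1,M]^{N'}$, together with the allocation this paper's tie-handling permits, in which $i'$ receives a positive fraction of $j$. For instance, keep the current multipliers and award the tied or top bid to $i'$; if $i'$ is strictly highest, it must win in any allocation. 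This contradicts the hypothesis.

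I read the hypothesis as ranging over all admissible allocations, including tie-breaks, and on that reading outside bids on items of $S$ are strictly below the maximum over $N$. The set of maximizers is then the same in $I$ and $I'$, which gives the highest-bid condition. For the price, I expect a residual gap. If an outside bid lies strictly between the highest and second-highest $N$-bids, the second highest bid over $N'$ can differ from the second highest over $N$. I expect that in the paper's constructions outside autobidders have value $0$, or negligible value, on items of $S$, so that their bids never affect the price. If needed, I would strengthen the reading of the hypothesis to ``does not affect the allocation or price of items in $S$'', which is the invariant the gadgets actually maintain. With that reading, $p_j$ computed in $I'$ coincides with the second price in $I$, and all five conditions hold.
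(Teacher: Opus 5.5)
Your treatment of full allocation, the RoS constraint, and maximal pacing is the same as the paper's proof. The paper notes that items of $S$ go only to $N$, and that the value and spend sums of each $i \in N$ over $S'$ collapse to sums over $S$. It stops there. It never checks the highest-bid and second-price conditions, so it silently treats the prices computed in $I'$ as the prices of $I$.

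The residual gap you flag at exactly this point is genuine, and it lies in the statement, not in your approach: under the stated hypotheses the lemma is false. Take $M = 2$, $N = \{1,3\}$, $S = \{a,b\}$, $N' = \{1,2,3\}$, $S' = \{a,b,c\}$, with $v_{1a} = v_{1b} = v_{3b} = v_{2c} = 1$, $v_{2a} = 0.45$, and all other values $0$.
\begin{itemize}
\item \emph{Hypotheses hold.} Autobidder $2$ bids at most $0.9 < 1$ on $a$ and $0$ on $b$, while autobidders $1$ and $3$ bid $0$ on $c$. This holds for every multiplier profile and every tie-break.
\item \emph{An equilibrium of $I'$.} Take $(m_1,m_2,m_3) = (1.1, 2, 1.1)$, with autobidder $1$ winning $a$ and $b$ and autobidder $2$ winning $c$. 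Autobidder $1$ pays $0.9 + 1.1 = 2$, exactly its value. Autobidder $3$ wins nothing, so its constraint binds trivially. Finally $m_2 = M$.
\item \emph{The restriction fails in $I$.} The price of $a$ drops to $m_3 v_{3a} = 0$. Autobidder $1$ now pays only $1.1 < 2$ while $m_1 < M$, which violates maximal pacing.
\end{itemize}

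So a strengthened hypothesis like yours is necessary, not merely convenient. One option is $v_{i'j} = 0$ for all $i' \in N' \setminus N$ and $j \in S$. More generally, it suffices that outside bids never set the price of an item in $S$. Under either condition $p_j$ is the same in $I$ and $I'$ for every $j \in S$, and your argument is complete.

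The stronger condition holds everywhere the paper invokes the lemma. The \textsf{NAND} and \textsf{NOT} outputs, the edge and incumbent edge autobidders, and the reserve-price auxiliaries value only newly added items. The old autobidders do bid on new items, but that is harmless for the conclusion.

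One simplification: the highest-bid condition needs no tie-breaking reading of the hypothesis. A winner $i \in N$ of $j \in S$ satisfies $m_i v_{ij} = \max_{i' \in N'} m_{i'} v_{i'j} \geq \max_{i' \in N} m_{i'} v_{i'j} \geq m_i v_{ij}$. Your strict reading matters only for excluding ties in the price condition, and the zero-value hypothesis covers that too.
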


In other words, if we have established a certain property for autobidding equilibria in the original instance $I$, this property is maintained in the extended instance $I'$. An instance $I'$ satisfying the preconditions of~\Cref{lemma:extension} will be referred to as \emph{conservative extension} of $I$.

\subsection{AND gadget}
\label{sec:AND}

The next step in our reduction is implementing an \textsf{AND} gadget, which is done in two steps.

\paragraph{NAND gadget} We begin by constructing an \textsf{NAND} gadget in the following formal sense.

\begin{table}[t]
    \centering
    \caption{The construction behind~\Cref{lemma:nand}, up to a rescaling factor $\eta$ that does not affect the autobidding equilibria.}
    \begin{tabular}{c c c c c}
    & Item 1 & Item 2 & Item 3 & Item 4 \\ 
     Input autobidder 1 & $1/(2M)$ & $0$ & $0$ & $0$ \\
     Input autobidder 2 & $0$ & $1/(2M)$ & $0$ & $0$ \\
     Output autobidder & $1/2 + \epsilon$ & $1/2 + \epsilon$ & $1$ & $1/(2M)$\\
     \bottomrule
     Reserve price & --- & --- & $1 + 3 \epsilon$ & $0.5$
\end{tabular}
    \label{tab:nand}
\end{table}

\begin{restatable}[Negation of conjunction]{lemma}{nandgate}
    \label{lemma:nand}
    Let $m \in [1, 1+\epsilon] \cup \{M\}$ and $m' \in [1, 1 + \epsilon] \cup \{M\}$ be the multipliers of two autobidders for some $\epsilon > 0$. There is a conservative extension that produces an output autobidder whose multiplier $\widebar{m}$ satisfies, in equilibrium,
    \begin{equation*}
        \widebar{m} \in 
        \begin{cases}
            [1, 1 + 3 \epsilon] & \text{if } m = M \text{ and } m' = M, \\
            \{M\} & \text{otherwise}.
        \end{cases}
    \end{equation*}
\end{restatable}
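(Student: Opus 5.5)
The plan is to instantiate the gadget of \Cref{tab:nand}: the two given autobidders are the input autobidders with multipliers $m, m'$, and we add one fresh output autobidder with multiplier $\widebar{m}$ and four fresh items. By \Cref{claim:scale-inv} we drop the factor $\eta$. The argument has three parts: check that the construction is a conservative extension, show that the output autobidder always wins items $1$ and $2$ and compute its surplus there, and then do a case analysis on how much of items $3$ and $4$ it can afford.

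\emph{Conservative extension.} Input autobidder 1 bids $m/(2M) \le 1/2$ on item 1. The output autobidder bids $\widebar{m}(1/2+\epsilon) \ge 1/2+\epsilon$ there, which is strictly higher, and the same holds for item 2. So the inputs never obtain any new item. The output autobidder values no item of the original instance, so it never obtains one either. Hence \Cref{lemma:extension} applies, and the original equilibrium analysis of the inputs (e.g.\ \Cref{lemma:assignment}) is untouched.

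\emph{Payments on items 1 and 2.} The output autobidder wins items 1 and 2 at prices $m/(2M)$ and $m'/(2M)$. Its surplus from these two items is therefore $1+2\epsilon - (m+m')/(2M)$. On the remaining items the behaviour is determined by the reserve prices:
\begin{itemize}
\item On item 3 it wins, fully when $\widebar{m} > 1+3\epsilon$ and possibly fractionally at equality, at price $1+3\epsilon$. Each unit of item 3 causes a loss of $3\epsilon$.
\item On item 4 it can win only if $\widebar{m}/(2M) \ge 1/2$, i.e.\ only at $\widebar{m} = M$, possibly fractionally due to the tie with the reserve. Each unit of item 4 causes a loss of $1/2 - 1/(2M)$.
\end{itemize}

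\emph{Case $m = m' = M$.} The surplus from items 1 and 2 is exactly $2\epsilon$. If $\widebar{m} > 1+3\epsilon$, the output wins all of item 3 and loses $3\epsilon > 2\epsilon$, which violates the RoS constraint. Hence $\widebar{m} \le 1+3\epsilon$, which also rules out $\widebar{m}=M$, so $\widebar{m}\in[1,1+3\epsilon]$.

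\emph{Otherwise.} Say $m \le 1+\epsilon$. The surplus from items 1 and 2 is at least $1/2 + 2\epsilon - (1+\epsilon)/(2M)$. Suppose $\widebar{m} < M$. Then the output gets nothing of item 4 and loses at most $3\epsilon$ on item 3. Its total surplus is therefore at least $1/2 - \epsilon - (1+\epsilon)/(2M) > 0$ for $\epsilon$ small and $M$ large. This contradicts maximal pacing (item~\ref{item:max-bid} of \Cref{def:autobidequil}), so $\widebar{m} = M$.

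The main subtlety is that at $\widebar{m}=M$ the output cannot afford all of item 4: the net surplus would be about $-\epsilon$. The proof must therefore rely on the tie with the reserve allowing a fractional allocation of item 4, which is consistent with $\widebar{m} = M$ since maximal pacing is not required at the cap. Since the lemma only restricts the range of $\widebar{m}$, we do not need to exhibit the exact equilibrium fraction; we only need the case contradictions above, plus the standing assumptions that $\epsilon$ is a small constant and $M \gg 1$.
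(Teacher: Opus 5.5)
Your proof is correct and follows the same route as the paper's. In the case $m=m'=M$, the output's surplus on items 1--2 is only $2\epsilon$, so winning item 3 at price $1+3\epsilon$ would violate RoS and forces $\widebar{m}\le 1+3\epsilon$. Otherwise the surplus is about $1/2$, so any $\widebar{m}<M$ leaves RoS slack and contradicts maximal pacing.

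Your explicit check of the conservative-extension hypotheses and your accounting for item 4 are more careful than the paper's proof. One small correction: your closing remark that the output cannot afford all of item 4 at $\widebar{m}=M$ holds only when exactly one input is high. When both inputs are low, the output can afford all of it with surplus about $1/2-\epsilon$ to spare. This is harmless, since the lemma only constrains the range of $\widebar{m}$.
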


In the above construction (illustrated in~\Cref{tab:nand}), we used reserve prices. As observed by~\citet{Li24:Vulnerabilities}, there is a basic gadget with two auxiliary autobidders and two auxiliary items that simulates the presence of reserve prices (\Cref{appendix:reserveprices}).

\paragraph{NOT gadget} To complete the construction of the \textsf{AND} gadget, we now show how to implement a negation gadget (\Cref{tab:neg}) in the following formal sense.

\begin{restatable}[Negation]{lemma}{negation}
    \label{lemma:negation}
    Let $m \in [1, 1 + \epsilon] \cup \{M\}$ be the multiplier of an input autobidder. There is a conservative extension that produces an output autobidder whose multiplier satisfies, in equilibrium,
    \begin{equation*}
        \widebar{m} \in 
        \begin{cases}
            [1, 1 + 2 \epsilon] & \text{if } $m = M$,\\
            \{M\} & \text{otherwise}.
        \end{cases}
    \end{equation*}
\end{restatable}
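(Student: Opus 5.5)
The plan is to build the gadget explicitly, in the same style as the \textsf{NAND} gadget of \Cref{lemma:nand}. It uses one new output autobidder and two new items, with one reserve price; the reserve is simulated via \Cref{appendix:reserveprices}. Up to a global rescaling by $\eta$, which leaves equilibria unchanged by \Cref{claim:scale-inv}, I would set the valuations as follows. On item~1, the input autobidder has value $(1-\epsilon)/M$ and the output autobidder has value $1$. On item~2, only the output autobidder has positive value, equal to $1$, and there is a reserve price of $1+2\epsilon$. The input autobidder values no other new item, and the output autobidder values no item of the original instance.

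\emph{Step 1: conservative extension.} Since $\widebar m \geq 1$, the output bid $\widebar m$ on item~1 always strictly exceeds the input bid $m(1-\epsilon)/M \leq 1-\epsilon$. Hence the input autobidder never obtains any part of the new items, for every choice of multipliers. The output autobidder has zero value on the old items, so it can never be a highest bidder there. Then \Cref{lemma:extension} guarantees that the original instance is unaffected. It also shows that the output autobidder always wins item~1 entirely, at price $m(1-\epsilon)/M$.

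\emph{Step 2: case $m=M$.} The price of item~1 is then $1-\epsilon$, so item~1 yields surplus exactly $\epsilon$. If $\widebar m > 1+2\epsilon$, the output autobidder wins all of item~2 at the reserve $1+2\epsilon$. This creates a deficit of $2\epsilon > \epsilon$, violating the RoS constraint. Therefore $\widebar m \leq 1+2\epsilon$. An equilibrium with $\widebar m \in [1,1+2\epsilon]$ exists: take $\widebar m = 1+2\epsilon$ and give the output autobidder half of item~2, which makes RoS tight.

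\emph{Step 3: case $m \in [1,1+\epsilon]$.} The price of item~1 is at most $(1+\epsilon)(1-\epsilon)/M < 1/M$, so the surplus from item~1 is at least $1-1/M$. Any fraction $x$ of item~2 costs a deficit of only $2\epsilon x \leq 2\epsilon$. Thus for $M$ large enough (say $1/M + 2\epsilon < 1$), the RoS constraint can never be tight. Maximal pacing (item~\ref{item:max-bid} of \Cref{def:autobidequil}) then forces $\widebar m = M$. This is consistent: with bid $M$ the output autobidder wins item~2 fully and still satisfies RoS.

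The main obstacle I expect is Step 1, namely the tie-handling needed for the gadget to be conservative. The tempting choice of input value $1/M$ produces a tie at $\widebar m = 1$ when $m = M$, which would let the input autobidder capture item~1. The $(1-\epsilon)$ factor is there to break that tie. Its side effect is a residual surplus of $\epsilon$ when $m = M$, and that is exactly why the threshold in the statement is $1+2\epsilon$ rather than $1+\epsilon$.
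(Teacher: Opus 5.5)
Your proof is correct and uses the same gadget as the paper. There is one item on which the output autobidder always outbids the input, so its price encodes whether $m=M$ and leaves a surplus of exactly $\epsilon$ in that case. A second item with reserve $1+2\epsilon$ turns any bid above $1+2\epsilon$ into a deficit of $2\epsilon$, and maximal pacing forces the multiplier to $M$ otherwise.

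The differences are cosmetic. The paper avoids the tie on item~1 by giving the output autobidder value $1+\epsilon$ there, with the input's value left at $1/M$, rather than lowering the input's value to $(1-\epsilon)/M$. It also includes a third item (value $1/M$, reserve $1$) that plays no role in its argument. Your explicit check that a consistent equilibrium exists in each case is a small addition not in the paper's proof.
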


\begin{table}[t]
    \centering
    \begin{tabular}{c c c c}
    & Item 1 & Item 2 & Item 3 \\
     Input autobidder & $1/M$ & $0$ & $0$ \\
     Output autobidder & $1 + \epsilon$ & $1$ & $1/M$\\
     \bottomrule
     Reserve price & --- & $1 + 2 \epsilon$ & $1$
\end{tabular}
    \caption{The construction behind~\Cref{lemma:negation}, up to a rescaling factor $\eta$ that does not affect the autobidding equilibria.}
    \label{tab:neg}
\end{table}

\subsection{Putting everything together: encoding the label cover CSP}

Combining the gadgets of~\Cref{lemma:assignment,lemma:nand,lemma:negation}, we construct for each edge $e \in E$ and label $\sigma \in \Sigma$ an autobidder whose multiplier $m_{e, \sigma}$ satisfies, in equilibrium,
\begin{equation}
    \label{eq:edge-sigma}
 m_{e, \sigma} \in
 \begin{cases}
     \{M\} & \text{if } m_{u,\sigma} = M \text{ and } m_{v, \Pi_e(\sigma)} = M, \\
     [1, 1 + \epsilon] & \text{otherwise}.
 \end{cases}
\end{equation}
This is done by first employing the construction of~\Cref{lemma:nand} to obtain $\widebar{m}_{e, \sigma}$ and then~\Cref{lemma:negation} to obtain $m_{e, \sigma}$, for each $e \in E$ and $\sigma \in \Sigma$. When $K = K(\epsilon)$ for some parameter $\epsilon > 0$, \Cref{lemma:assignment,lemma:extension,lemma:nand,lemma:negation} imply the validity of~\eqref{eq:edge-sigma}. More precisely, for any $K \geq M |\Sigma|/\epsilon'$, \Cref{lemma:assignment} implies that every multiplier in the label assignment satisfies $m_{u, \sigma} \in [1, 1 + \epsilon'] \cup \{M\}$. Accounting for the error in~\Cref{lemma:nand,lemma:negation}, it follows that $m_{e, \sigma} \in [1, 1 + 6 \epsilon'] \cup \{M \}$. This implies that setting $K = 6 M |\Sigma|/\epsilon$ guarantees~\eqref{eq:edge-sigma}.

\paragraph{The edge autobidders} The next step is to create a new autobidder for each edge $e \in E$, which we refer to as the \emph{edge autobidder}. Each such edge autobidder competes with the autobidders associated with the multipliers $\{ m_{e, \sigma} \}_{\sigma \in \Sigma}$ for a single item, the \emph{edge item}. The edge autobidder values that item $1 + \epsilon$ whereas all other autobidders value that item $1/M$. This guarantees that it is always the edge autobidder that secures that item---maintaining the basic invariance required by~\Cref{lemma:extension}. The price paid by the edge autobidder is $1$ if $m_{e, \sigma} = M$ for \emph{at least some} $\sigma \in \Sigma$, and at most $(1 + \epsilon)/M$ otherwise. In other words, this construction ensures that the satisfaction of the edge constraint associated with $e \in E$ is encoded in the price paid by the corresponding autobidder (\Cref{tab:edgeautobidder}).

\begin{table}[t]
    \centering
    \caption{The interaction of the edge autobidder. The edge autobidder always secures the item, but the price is high if \emph{at least one} of the competing autobidders selects a high multiplier.}
    \begin{tabular}{c c}
        & Edge item \\ 
        Edge autobidder & $1+\epsilon$ \\
        Autobidder 1 & $1/M$ \\
        Autobidder 2 & $1/M$ \\ $\vdots$ & $\vdots$ \\
        Autobidder $|\Sigma|$ & $1/M$ \\
    \end{tabular}
    \label{tab:edgeautobidder}
\end{table}

In what follows, we recall that $\eta > 0$ is the scaling factor used in the constructions of~\Cref{lemma:nand,lemma:negation,lemma:assignment}. This ensures that the revenue and welfare derived from those gadgets are negligible. Formally, let $(\vec{m}, \vec{x})$ be an autobidding equilibrium of the resulting instance, induced by a given label cover instance. We first have
\begin{equation}
    \label{eq:revenue-ub}
    \revenue(\vm) \leq \eta \left( |V| ( |\Sigma| M + K ) + 5 |E| |\Sigma| \right) + \sum_{e \in E}  \bigvee_{\sigma \in \Sigma} ( m_{e,\sigma} = M ) + \frac{|E| ( 1 + \epsilon) }{M}.
\end{equation}
In particular, the term $\eta |V| (|\Sigma| M + K)$ is a bound on the total revenue from the $|V|$ label assignment gadgets; the term $\frac{5}{2} \eta |E| |\Sigma|$ is a bound on the total revenue from the $|E| |\Sigma|$ \textsf{NAND} gadgets; and the term $\frac{5}{2} \eta |E| |\Sigma|$ is a bound on the total revenue from the $|E| |\Sigma|$ \textsf{NOT} gadgets. Furthermore, for every edge $e \in E$, the edge autobidder pays $1$ for the edge item if $ \bigvee_{\sigma \in \Sigma} ( m_{e,\sigma} = M ) = 1$ and at most $(1 + \epsilon)/M$ otherwise. This justifies the revenue upper bound given in~\eqref{eq:revenue-ub}. 

Moreover, by completeness of the label assignment gadget (\Cref{lemma:assignment}), it follows that there exists an autobidding equilibrium $(\vm, \vx)$ such that
\begin{equation}
    \label{eq:revenue-lb}
    \revenue(\vm) \geq \sum_{e \in E}  \bigvee_{\sigma \in \Sigma} ( m_{e, \sigma} = M ).
\end{equation}
Setting the parameters appropriately (\Cref{appendix:proofs-main}) and combining with~\Cref{theorem:labelcover}, we arrive at the conclusion that approximating the optimal revenue to any constant factor is \NP-hard.

\begin{restatable}{theorem}{revenuehard}
    \label{theorem:revenue-hard}
    Computing an autobidding equilibrium that approximates the revenue-optimal one within any constant factor $\rho \geq 1$ is \NP-hard.
\end{restatable}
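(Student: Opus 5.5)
We reduce from $\gaplabel_{1,\delta}$ (\Cref{theorem:labelcover}), using the instance built above from the label assignment gadgets (\Cref{lemma:assignment}), the \textsf{NAND}/\textsf{NOT} gadgets (\Cref{lemma:nand,lemma:negation}) and the edge autobidders. Fix the target constant $\rho \geq 1$ and choose $\delta = 1/(4\rho)$, with alphabet $\Sigma = \Sigma(\delta)$ of constant size. Fix $\epsilon$ to a small constant (say $\epsilon = 1/2$; its only role is to guarantee~\eqref{eq:edge-sigma}). Then set $M = 4\rho(1+\epsilon)$ and $K = 6M|\Sigma|/\epsilon$. Finally take $\eta$ small enough that $\eta\left(|V|(|\Sigma|M+K) + 5|E||\Sigma|\right) \leq |E|/(4\rho)$. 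Since all parameters are constants except $\eta = 1/\poly(|V|+|E|)$, the instance has polynomial size. The gluing is valid by \Cref{lemma:extension}, because each gadget is a conservative extension and the edge item is always won by the edge autobidder.

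\emph{Completeness.} Suppose a labeling $\ell$ satisfies all edges. Set $m_{u,\ell(u)} = M$ and the other label multipliers to $1$; by \Cref{lemma:assignment} this is an equilibrium of each assignment gadget. We then need to extend this to an equilibrium of the whole instance. For this I would invoke \Cref{prop:existence} on the gadget subsystems downstream of the fixed assignment multipliers: the \textsf{NAND}, \textsf{NOT} and edge gadgets only receive input from the assignment autobidders and never influence them. By \eqref{eq:edge-sigma}, every edge $e=(u,v)$ then has $m_{e,\ell(u)} = M$ because $\Pi_e(\ell(u)) = \ell(v)$. Hence every edge autobidder pays $1$, and \eqref{eq:revenue-lb} gives $\revenue \geq |E|$.

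\emph{Soundness.} Take any equilibrium. By \Cref{lemma:assignment}, each vertex $u$ has at most one $\sigma$ with $m_{u,\sigma}=M$, and all other label multipliers lie in $[1,1+\epsilon']$. Define a partial labeling $z_{u,\sigma} = \mathbbm{1}[m_{u,\sigma}=M]$, so that $\sum_\sigma z_{u,\sigma}\leq 1$. By \eqref{eq:edge-sigma}, $\bigvee_\sigma(m_{e,\sigma}=M)=\Psi_e(\vec z)$. Completing the partial labeling arbitrarily never decreases the number of satisfied edges, so $\sum_e \Psi_e(\vec z) \leq \delta|E|$. Plugging into \eqref{eq:revenue-ub} gives
\[
\revenue \leq \frac{|E|}{4\rho} + \delta|E| + \frac{(1+\epsilon)|E|}{M} \leq \frac{3|E|}{4\rho} < \frac{|E|}{\rho}.
\]
Thus a $\rho$-approximation to the optimal equilibrium revenue distinguishes the two cases, which proves \NP-hardness.

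\emph{Main obstacle.} The delicate step is completeness: we must check that fixing the assignment multipliers really extends to a global equilibrium. This is exactly where the feed-forward (acyclic) structure matters. I would argue it gadget by gadget in topological order, using the fact that in \Cref{lemma:nand,lemma:negation} the output gadgets admit an equilibrium for each admissible input. The soundness direction, by contrast, just chains the stated lemmas.
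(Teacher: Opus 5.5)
Your proposal is correct and follows the paper's proof essentially verbatim. Your choice $\delta = 1/(4\rho)$ reproduces the paper's parameters $M = (1+\epsilon)/\delta$, $K = 6M|\Sigma|/\epsilon$ and $\eta$ as in~\eqref{eq:eta-bound}, and you use the same indicator-based soundness argument to get $\revenue \leq 3\delta|E|$ against $\revenue \geq |E|$. Your added care on completeness (extending the assignment equilibrium to the whole instance, which the paper simply asserts) is sound, provided you first treat the frozen assignment multipliers as fixed bids (reserve prices) on the downstream items before invoking \Cref{prop:existence}; this is legitimate because those autobidders never win any downstream item.
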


The instance we have constructed contains $O(|E||\Sigma| )$ autobidders and items. In light of the fact that one can take the alphabet size to be $|\Sigma| \leq \exp(1/\delta)$ (\Cref{remark:alphabetsize}), our reduction also implies logarithmic inapproximability with respect to revenue.

\begin{corollary}
    Computing an autobidding equilibrium that approximates the revenue-optimal one within some logarithmic factor $\log^{\Omega(1)}(n k)$ is \NP-hard.
\end{corollary}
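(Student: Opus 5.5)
The plan is to reuse the reduction behind \Cref{theorem:revenue-hard} without change, but to let the soundness parameter $\delta$ go to zero with the input size, invoking \Cref{remark:alphabetsize} in place of \Cref{theorem:labelcover}. Given a \textsc{SAT} instance of size $n_0$, \Cref{remark:alphabetsize} yields a label cover instance of size $N = n_0^{1+o(1)}\poly(1/\delta)$, with alphabet size $|\Sigma| \leq \exp(1/\delta)$ and soundness error $\delta$. We then build the autobidding instance of the previous subsection. It has $O(|V||\Sigma| + |E||\Sigma|)$ autobidders and items, counting the $O(1)$ auxiliary autobidders per gadget and those used to simulate reserve prices. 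The parameters are set as before: $\epsilon$ a small constant, $K = 6M|\Sigma|/\epsilon$, and $M$ large.

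First I will pick $M$ and $\eta$ so that the error terms in \eqref{eq:revenue-ub} are at most $\delta|E|$. It suffices to take $M \geq (1+\epsilon)/\delta$. Then $\eta$ must satisfy $\eta\,(|V|(|\Sigma|M+K)+5|E||\Sigma|) \leq \delta|E|$, which holds for $\eta = \delta/(C\,|\Sigma| M/\epsilon)$ with a suitable constant $C$; here I use left/right regularity to get $|V| \leq 2|E|$. These numbers have bit-length polynomial in $1/\delta$ and $\log N$, so the construction remains polynomial-time as long as $|\Sigma| = \exp(1/\delta)$ is polynomial.

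Completeness and soundness then go as follows. In the YES case, \eqref{eq:revenue-lb} gives an equilibrium with revenue $\geq |E|$. In the NO case, I argue that the set $\{\sigma : m_{u,\sigma} = M\}$ has size at most one for every vertex, by \Cref{lemma:assignment}. Hence it defines a partial labeling. By \eqref{eq:edge-sigma}, $\bigvee_\sigma (m_{e,\sigma}=M)=1$ can only happen when that partial labeling satisfies $e$, and a partial labeling satisfies no more edges than some full labeling. So every equilibrium has revenue at most $3\delta|E|$, and the gap is $\rho = 1/(3\delta)$.

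The main obstacle is balancing $\delta$ against the instance size. The size of our instance is $nk = \poly(N\,|\Sigma|) = \poly(n_0^{1+o(1)}\poly(1/\delta)\,e^{1/\delta})$. For this to stay polynomial in $n_0$, I take $1/\delta = c'\log n_0$ for a small constant $c'$. This choice is also compatible with the requirement $\delta \geq 1/N^c$ in \Cref{remark:alphabetsize}. We then get $\log(nk) = O(\log n_0)$, so $\rho = \Omega(\log n_0) = \log^{\Omega(1)}(nk)$. Any algorithm achieving this approximation factor would therefore distinguish the two cases and solve \textsc{SAT}.
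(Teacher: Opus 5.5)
Your proposal is correct and matches the paper's argument: the paper also reuses the revenue reduction of \Cref{theorem:revenue-hard} with the soundness parameter supplied by \Cref{remark:alphabetsize}, choosing $\delta$ so that $|\Sigma| \le \exp(1/\delta)$, and hence the $O(|E||\Sigma|)$-size instance, stays polynomial, which turns the $1/(3\delta)$ revenue gap into a logarithmic one. You only spell out the bookkeeping the paper leaves implicit ($1/\delta = \Theta(\log n_0)$, the bound on $\eta$, and the bit-lengths); one small nit is that the approximation factor should be strictly below $1/(3\delta)$, e.g.\ $1/(4\delta)$, so that the two cases are strictly separated.
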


Moreover, assuming~\Cref{conj:projection}, we can take $|\Sigma| = \poly(1/\delta)$. If $\delta = 1/|E|^c$ for some constant $c > 0$, we obtain an $\Omega(\poly(|E|))$ gap in the revenue and the instance has size $\poly(|E|)$.

\begin{corollary}
    \label{cor:revenue-projectiongames}
    Under the projection games conjecture (\Cref{conj:projection}), computing an autobidding equilibrium that approximates the optimal one within some polynomial factor $(nk)^{\Omega(1)}$ is \NP-hard.
\end{corollary}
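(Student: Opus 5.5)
We instantiate the reduction of this section with the label cover instances given by \Cref{conj:projection} and a polynomially small soundness parameter. We then track how the revenue gap and the instance size scale.

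\textbf{Step 1: the label cover instance.} Starting from a \textsc{SAT} instance of size $n_0$, invoke \Cref{conj:projection} with $\delta = N^{-c}$, where $c>0$ is the constant of the conjecture. This gives a label cover instance $(G,\Sigma,\Pi)$ of size $N = n_0^{1+o(1)}\poly(1/\delta)$ with $|\Sigma| \le \poly(1/\delta) = \poly(N)$ and soundness error $\delta$. In the completeness case all edges are satisfiable; in the soundness case at most $\delta|E|$ are. Since $\delta \ge N^{-c}$ and $N$ is polynomial in $n_0$, we can fix $\delta$ as a polynomial in $n_0$ and solve for $N$, so the whole construction remains polynomial.

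\textbf{Step 2: the reduction and its parameters.} Build the autobidding instance exactly as above: label assignment gadgets (\Cref{lemma:assignment}), \textsf{NAND} and \textsf{NOT} gadgets (\Cref{lemma:nand,lemma:negation}) glued by conservative extensions (\Cref{lemma:extension}), and the edge autobidders. It has $n,k = O(|E||\Sigma|) = \poly(N)$ autobidders and items.

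Set the parameters as follows:
\begin{itemize}
\item $M = |E|/\delta$, so the term $|E|(1+\epsilon)/M$ in \eqref{eq:revenue-ub} is at most $2\delta|E|$;
\item $\epsilon$ a small constant;
\item $K = 6M|\Sigma|/\epsilon$;
\item $\eta = \delta|E|\big/\bigl(|V|(|\Sigma|M+K)+5|E||\Sigma|\bigr)$, so the gadget term in \eqref{eq:revenue-ub} is at most $\delta|E|$.
\end{itemize}
All of these numbers have polynomially bounded bit length, so the reduction is polynomial time.

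\textbf{Step 3: the revenue gap.} In the completeness case, fix a satisfying labeling $\ell$. Set $m_{u,\ell(u)}=M$ in each label assignment gadget, which is part of an equilibrium by \Cref{lemma:assignment}. Extending through the gadgets via \eqref{eq:edge-sigma} gives $m_{e,\ell(u)}=M$ for every edge, so \eqref{eq:revenue-lb} yields revenue at least $|E|$.

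In the soundness case, take any equilibrium. By \Cref{lemma:assignment}, each vertex has at most one multiplier equal to $M$, and all others lie in $[1,1+\epsilon]$. This defines a partial labeling, and partial labelings satisfy no more edges than full ones. By \eqref{eq:edge-sigma}, $\bigvee_\sigma(m_{e,\sigma}=M)=1$ only if $e$ is satisfied by this labeling, so at most $\delta|E|$ edges contribute. Combining with the two error terms from Step 2, \eqref{eq:revenue-ub} gives revenue at most $4\delta|E|$.

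The resulting gap is $1/(4\delta) = \Omega(N^c)$. Since $nk = \poly(N)$, this is $(nk)^{\Omega(1)}$. Any algorithm achieving a smaller approximation factor would therefore distinguish the two cases of $\gaplabel_{1,\delta}$, and hence decide \textsc{SAT}.

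\textbf{Main obstacle.} The delicate point is that $M$, $K$ and $1/\eta$ must grow polynomially with $1/\delta$, while the gadget guarantees must remain valid. The errors in \Cref{lemma:assignment,lemma:nand,lemma:negation} depend only on $\epsilon$ and on $K \ge M|\Sigma|/\epsilon$, not on $\delta$, so they should survive. What remains is to check that the polynomial blow-up in $|\Sigma|$, $M$ and $K$ affects only the polynomial size of the instance and does not erode the exponent of the gap. The gap exponent $c$ is tied to the instance size $\poly(N)$, with the degree of that polynomial determined by $|\Sigma| = \poly(1/\delta)$.
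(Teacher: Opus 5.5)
Your proposal is correct and follows the paper's argument. You instantiate the revenue reduction with the conjecture's label cover instances at polynomially small $\delta$, so that $|\Sigma|$, $M$, $K$ and $1/\eta$ stay polynomial and the $\Theta(1/\delta)$ revenue gap becomes $(nk)^{\Omega(1)}$. Your choice $M = |E|/\delta$ is larger than the paper's $M = (1+\epsilon)/\delta$ but still polynomial, so it is harmless.
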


\paragraph{Welfare} We continue by reducing the $\gaplabel$ problem to finding autobidding equilibria with high welfare. In addition to the instance we have constructed to obtain~\Cref{theorem:revenue-hard}, we consider, for each edge $e \in E$, an additional autobidder, which will be referred to as the \emph{incumbent edge autobidder}. Each incumbent edge autobidder competes with the corresponding edge autobidder for a single item, the \emph{incumbent edge item}. The edge autobidder values the incumbent edge item $(1 + \epsilon)/M$ while the incumbent edge autobidder values it $1$ (\Cref{tab:inc-edge}). Efficiency would dictate allocating this item to the incumbent edge autobidder. However, when the edge autobidder has a surplus (from the interaction described in \Cref{tab:edgeautobidder}), it can usurp that item from the incumbent edge autobidder. This is formalized below.

\begin{table}[t]
    \centering
    \caption{The interaction of the edge autobidder with the incumbent edge autobidder.}
    \begin{tabular}{c c}
        & Incumbent edge item \\ 
        Edge autobidder & $(1+\epsilon)/M$ \\
        Incumbent edge autobidder & $1$ \\
    \end{tabular}
    \label{tab:inc-edge}
\end{table}

\begin{restatable}{lemma}{stealing}
    \label{lemma:stealing}
    Fix an edge $e \in E$. If $m_{e, \sigma} = M$ for some $\sigma \in \Sigma$, the incumbent edge autobidder wins at least a $(1 - 2 \epsilon)$ fraction of the incumbent edge item in equilibrium. Otherwise, if $m_{e, \sigma} \leq 1 + \epsilon$ for all $\sigma \in \Sigma$, it is the edge autobidder who exclusively wins the incumbent edge item.
\end{restatable}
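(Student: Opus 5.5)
The plan is a direct case analysis of the two RoS constraints involved: that of the edge autobidder, and that of the incumbent edge autobidder. I will use \Cref{lemma:extension} to treat the rest of the instance as fixed. The edge autobidder has positive value only for the edge item (value $1+\epsilon$) and the incumbent edge item (value $(1+\epsilon)/M$). The incumbent edge autobidder values only the incumbent edge item (value $1$).

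Let $m_e$ and $m_{\mathrm{inc}} \geq 1$ denote their multipliers, and let $x$ be the fraction of the incumbent edge item won by the edge autobidder. Since the edge autobidder always wins the edge item, second-price payments give the following. The edge item's price is $p_{\mathrm{edge}} = \max_\sigma m_{e,\sigma}/M$. If $x > 0$, the edge autobidder pays $m_{\mathrm{inc}}$ per unit of the incumbent item. If the incumbent wins a positive fraction, it pays $m_e(1+\epsilon)/M \leq 1+\epsilon$ per unit.

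\emph{Satisfied case.} Suppose $m_{e,\sigma} = M$ for some $\sigma$. Then $p_{\mathrm{edge}} = 1$, so the edge autobidder's surplus on the edge item is only $\epsilon$. Its RoS constraint reads
\[
(1+\epsilon) + x\,\tfrac{1+\epsilon}{M} \;\geq\; 1 + x\, m_{\mathrm{inc}} .
\]
Using $m_{\mathrm{inc}} \geq 1$, this gives
\[
x \;\leq\; \frac{\epsilon}{1 - (1+\epsilon)/M} \;\leq\; 2\epsilon
\]
once $M$ is a large enough constant (say $M \geq 4$). Hence the incumbent wins at least a $1-2\epsilon$ fraction, and this case requires nothing about pacing.

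\emph{Unsatisfied case.} Suppose all $m_{e,\sigma} \leq 1+\epsilon$, so that $p_{\mathrm{edge}} \leq (1+\epsilon)/M$. First I show $m_e = M$. Assume instead $m_e < M$; then maximal pacing forces the RoS constraint to be tight. The slack on the edge item is at least $(1+\epsilon)(1 - 1/M)$, so it must be absorbed by an overpayment on the incumbent item:
\[
x\bigl(m_{\mathrm{inc}} - (1+\epsilon)/M\bigr) \;\geq\; (1+\epsilon)(1-1/M).
\]
This requires $x > 0$ and $m_{\mathrm{inc}} \geq 1+\epsilon$. However, $x > 0$ means the edge bid is highest, so
\[
m_{\mathrm{inc}} \;\leq\; m_e(1+\epsilon)/M \;<\; 1+\epsilon,
\]
a contradiction. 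So $m_e = M$, and the edge bid on the incumbent item equals $1+\epsilon$. If the incumbent won a positive fraction $y$, it would pay $y(1+\epsilon) > y$ for value $y$, violating its RoS constraint. Hence $y = 0$, and the edge autobidder exclusively wins the incumbent edge item.

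For consistency, this outcome is indeed sustainable. The incumbent takes any $m_{\mathrm{inc}} \leq 1+\epsilon$, and its pacing condition holds trivially with $0 = 0$. The edge autobidder's RoS constraint holds because
\[
(1+\epsilon) - p_{\mathrm{edge}} + \tfrac{1+\epsilon}{M} - m_{\mathrm{inc}} \;\geq\; 0 .
\]

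The main obstacle is the unsatisfied case. There one must use maximal pacing, in the form of the contradiction above, to rule out the edge autobidder sitting at a low multiplier with slack RoS. One must also handle the degenerate possibility that the incumbent wins nothing while not at $M$, which is allowed since $0 = 0$ satisfies tightness. The satisfied case is a one-line RoS computation.
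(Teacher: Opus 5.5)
Your proof is correct and follows essentially the same route as the paper. The satisfied case is the identical one-line RoS computation. The unsatisfied case uses the same two facts: maximal pacing of the edge autobidder, whose RoS is slack unless it overpays on the incumbent item, and the incumbent's RoS, which forbids it paying more than $1$.

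The only difference is the order of the case analysis. You first derive $m_e = M$ and then invoke the incumbent's RoS. The paper instead first shows that the edge autobidder bids at least as high as the incumbent, and then rules out a tie at bid level $1$. Your version is a slightly more streamlined packaging of the same argument.
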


We are now ready to analyze the (liquid) welfare in an autobidding equilibrium $(\vm, \vx)$ of the instance that has arisen from our reduction. We first argue that
\begin{equation}
    \label{eq:wel-ub}
    \welfare(\vx) \leq \eta \left( |V| ( |\Sigma| M + K ) + 5 |E| |\Sigma| \right) + (1+\epsilon) |E| + \sum_{e \in E} \bigvee_{\sigma \in \Sigma} ( m_{e, \sigma} = M ) + \frac{|E| ( 1 + \epsilon) }{M}.
\end{equation}
The first term in the right-hand side follows as in~\eqref{eq:revenue-lb}; the second term, $(1+\epsilon) |E|$, is the total value obtained by the edge autobidders for winning the edge items; the third term corresponds to the value obtained by the incumbent edge autobidders: by~\Cref{lemma:stealing}, if $m_{e, \sigma} = M$ for some $\sigma \in \Sigma$, the corresponding incumbent edge autobidder wins a $(1 - 2 \epsilon)$ fraction of the incumbent edge item; for every other edge, \Cref{lemma:stealing} implies that the edge autobidder wins that item, for a much lower value of $(1 + \epsilon)/M$, giving rise to the last term in~\eqref{eq:wel-ub}. Similarly, by~\Cref{lemma:stealing}, there exists an autobidding equilibrium $(\vm, \vx)$ such that
\begin{equation*}
    \welfare(\vx) \geq (1 +\epsilon) |E| + ( 1 - 2 \epsilon) \sum_{e \in E}  \bigvee_{\sigma \in \Sigma} ( m_{e, \sigma} = M ).
\end{equation*}
As a result, setting the parameters appropriately yields the following.

\begin{theorem}
    \label{theorem:welfare-hard}
    For any constant $\epsilon > 0$, it is \NP-hard to compute an autobidding equilibrium that approximates the welfare-optimal one within a factor $2 - \epsilon$.
\end{theorem}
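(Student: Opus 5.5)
We reduce from $\gaplabel_{1,\delta}$ using the instance just built: the label assignment gadgets, \textsf{NAND}/\textsf{NOT} gadgets, edge autobidders and incumbent edge autobidders. The parameters $\delta$, $\epsilon'$ (the gadget error), $M$, $K = 6M|\Sigma|/\epsilon'$ and $\eta$ are chosen as functions of the target $\epsilon$.

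The order is as follows. First fix $\delta$ and $\epsilon'$ small in terms of $\epsilon$, and obtain $\Sigma = \Sigma(\delta)$ from \Cref{theorem:labelcover}. Next set $M \geq 1/\epsilon'$, so that the term $|E|(1+\epsilon')/M$ is at most $\epsilon'|E|$. Finally pick $\eta$ small enough that $\eta(|V|(|\Sigma|M+K)+5|E||\Sigma|) \leq \epsilon'|E|$. All of these are constants or polynomially bounded numbers, so the reduction runs in polynomial time. By \Cref{claim:scale-inv} and \Cref{lemma:extension}, the gadget guarantees \eqref{eq:edge-sigma} continue to hold in the combined instance.

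\textbf{Completeness.} If some labeling $\ell$ satisfies every edge, set $m_{u,\ell(u)} = M$ and all other assignment multipliers to $1$; \Cref{lemma:assignment} says this is part of an equilibrium of each assignment gadget. Propagate through the gadgets, using the existence of equilibria (\Cref{prop:existence}) for each conservative extension with the inputs fixed; I expect this needs a small argument that the gadget equilibria can be completed consistently. For each edge $e=(u,v)$, taking $\sigma = \ell(u)$ gives $m_{e,\sigma} = M$, because $\ell(v) = \Pi_e(\ell(u))$. By the lower bound following \Cref{lemma:stealing}, this equilibrium has $\welfare \geq (1+\epsilon')|E| + (1-2\epsilon')|E| \approx 2|E|$.

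\textbf{Soundness.} Suppose every labeling satisfies at most $\delta|E|$ edges. Take any equilibrium and define $z_{u,\sigma} = \mathbbm{1}[m_{u,\sigma} = M]$. By \Cref{lemma:assignment}, $\sum_\sigma z_{u,\sigma} \leq 1$, so $\vec z$ is a partial labeling. By \eqref{eq:edge-sigma}, $\bigvee_\sigma (m_{e,\sigma}=M) = \Psi_e(\vec z)$. Since partial labelings cannot satisfy more edges than full ones, $\sum_e \bigvee_\sigma (m_{e,\sigma}=M) \leq \delta|E|$. Then \eqref{eq:wel-ub} gives $\welfare \leq (1+O(\epsilon') + \delta)|E|$.

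The ratio between the two cases is therefore at least $(2-2\epsilon')/(1+O(\epsilon')+\delta)$, which exceeds $2-\epsilon$ for small enough $\delta$ and $\epsilon'$. Hence a $(2-\epsilon)$-approximation algorithm would decide $\gaplabel_{1,\delta}$: it returns an equilibrium of welfare above $(1+O(\epsilon')+\delta)|E|$ exactly in the completeness case.

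The main obstacle is making sure the approximation errors compose correctly, while keeping $M$ and $K$ polynomially bounded. Specifically, the $\epsilon$ in \Cref{lemma:stealing} and in the edge valuations must be the gadget error $\epsilon'$, not the target $\epsilon$. The other delicate point is that the completeness equilibrium must really extend across all gadgets simultaneously. I would handle this by arguing gadget by gadget in topological order: each gadget's outputs depend only on its inputs, and the conservative extension property prevents cross-interaction.
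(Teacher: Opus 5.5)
Your proposal is correct and follows essentially the same route as the paper. It uses the same reduction from $\gaplabel_{1,\delta}$ with edge and incumbent edge autobidders, gets completeness from \Cref{lemma:assignment} together with \Cref{lemma:stealing}, gets soundness by reading off the partial labeling $z_{u,\sigma} = \mathbbm{1}\{m_{u,\sigma} = M\}$ and applying \eqref{eq:wel-ub}, and chooses the parameters in the same order. The one place you go beyond the paper is checking that the label-assignment equilibrium extends to the whole instance, which the paper leaves implicit. Your topological argument handles this correctly, since upstream autobidders never win downstream items and downstream autobidders value upstream items at zero; for the edge/incumbent pair you can even write the equilibrium down explicitly, with the edge autobidder at $M/(1+\epsilon')$ and the incumbent at $1$.
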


Turning now to a decision version of the problem, it follows that ascertaining whether a non-trivial autobidding equilibrium exists is \NP-hard.

\begin{corollary}
    For any constant $\epsilon > 0$, it is \NP-hard to decide whether there exists an autobidding equilibrium whose welfare is at least $1/2 + \epsilon$ that of the optimal allocation.
\end{corollary}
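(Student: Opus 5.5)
The plan is to reuse the welfare reduction from \Cref{theorem:welfare-hard} and compare the two sides of the gap with the \emph{unconstrained} optimum $\sup_{\vx^*}\welfare(\vx^*)$, not with the best equilibrium. Start from a $\gaplabel_{1,\delta}$ instance with $\delta$ a small constant chosen in terms of $\epsilon$; by \Cref{theorem:labelcover} distinguishing the two cases is \NP-hard. Build the instance containing the label-assignment gadgets, the \textsf{NAND}/\textsf{NOT} gadgets, the edge autobidders and the incumbent edge autobidders, with parameters $\epsilon'$ (the gadget slack), $M$ and $\eta$ still to be fixed.

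First, bound the unconstrained optimum. Allocate each item to its highest-value bidder: each edge item to its edge autobidder, giving $1+\epsilon'$ per edge, and each incumbent edge item to its incumbent, giving $1$ per edge. The gadget items add at most $\eta(|V|(|\Sigma|M+K)+5|E||\Sigma|)$, so
\[
(2+\epsilon')|E| \le \sup_{\vx^*}\welfare(\vx^*) \le (2+\epsilon')|E| + \eta\bigl(|V|(|\Sigma|M+K)+5|E||\Sigma|\bigr).
\]

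Next, handle the two cases.

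\textbf{Completeness.} Take a labeling satisfying every edge and set the label-assignment multipliers accordingly, using \Cref{lemma:assignment} and extending to a full equilibrium through \Cref{lemma:extension}. Every edge then has some $m_{e,\sigma}=M$. By \Cref{lemma:stealing} the equilibrium welfare is at least $(1+\epsilon')|E| + (1-2\epsilon')|E| = (2-\epsilon')|E|$. This is a $(2-\epsilon')/(2+\epsilon'+o(1))$ fraction of the optimum, which is close to $1$ and hence above $1/2+\epsilon$.

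\textbf{Soundness.} Every equilibrium decodes to a partial labeling, since at most one multiplier per vertex equals $M$ and the rest are $\approx 1$. Allowing $\sum_\sigma z_{u,\sigma}\le 1$ cannot increase the number of satisfied edges, and by \eqref{eq:edge-sigma} an edge with some $m_{e,\sigma}=M$ is satisfied by this labeling. So at most $\delta|E|$ such edges exist. Substituting into \eqref{eq:wel-ub} gives welfare at most $(1+\epsilon'+\delta)|E| + |E|(1+\epsilon')/M + \eta(\cdots)$.

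Finally, choose $\epsilon',\delta$ small enough, $M$ large, and $\eta \le \epsilon'/(|V|(|\Sigma|M+K)+5|\Sigma|)$. This last choice keeps $\eta$ polynomially describable, since $K=6M|\Sigma|/\epsilon'$ and $|\Sigma|$ is constant. With these choices the soundness ratio is at most $(1+O(\epsilon'+\delta+1/M))/2 < 1/2+\epsilon$. Any algorithm deciding whether an equilibrium reaching a $1/2+\epsilon$ fraction of the optimum exists would therefore solve $\gaplabel_{1,\delta}$.

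The main obstacle is the soundness decoding. It must be checked carefully that every equilibrium, not only the "clean" ones, has multipliers in $[1,1+\epsilon']\cup\{M\}$ as \eqref{eq:edge-sigma} requires, and that the welfare bound \eqref{eq:wel-ub} accounts for the edge autobidder taking the incumbent item in unsatisfied edges. The rest is routine parameter bookkeeping.
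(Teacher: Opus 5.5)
Your proposal is correct and follows the paper's own argument: the corollary comes from the same reduction as \Cref{theorem:welfare-hard}, with completeness welfare $\approx(2-\epsilon')|E|$ and soundness welfare $\approx(1+\epsilon'+\delta)|E|$, both now compared against the unconstrained optimum $\approx(2+\epsilon')|E|$, exactly as you do. One correction to your completeness step: \Cref{lemma:extension} only says that equilibria of the extended instance \emph{restrict} to equilibria of the sub-instance, so it cannot produce the completeness equilibrium; that step instead needs a gadget-by-gadget construction (fix the upstream multipliers, treat their bids as reserve prices, and choose an equilibrium of each downstream gadget in turn), which the paper also leaves implicit.
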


\subsection{Refined inapproximability under ML advice}
\label{sec:MLadvise}

One way of improving PoA bounds is through ML advice concerning the underlying valuations~\citep{Balseiro21:Robust,Deng24:Efficiency}. In this subsection, we examine how our inapproximability results can be adapted in the presence of such side information. Specifically, following~\citet{Balseiro21:Robust}, we work under the following model.

\begin{definition}[Approximate signals; \citealp{Balseiro21:Robust}]
    \label{def:approx-signal}
    The auctioneer has access to a signal $\signal_{i j}$, for all autobidders $i \in [n]$ and items $j \in [k]$, such that $\signal_{i j} \in [\gamma \cdot v_{i j}, v_{i j}]$ for some parameter $\gamma \in [0, 1)$; this multiplicative guarantee is referred to as a \emph{$\gamma$-approximate signal}.
\end{definition}
\citet{Balseiro21:Robust} proposed to use $\signal_{i j}$ as a reserve price, and were able to obtain PoA bounds parameterized based on $\gamma$, the accuracy of the signal.

We show that our inapproximability results can be parameterized based on the value of $\gamma$. We first claim that~\Cref{lemma:assignment,lemma:nand,lemma:negation} readily extend when one uses $\signal_{i j}$ as a reserve price (\Cref{appendix:MLadvise}).\footnote{If an item already had a reserve price attached to it, its new reserve price is the maximum of the two.} Concentrating first on the revenue, what changes now is that the optimal revenue of an autobidding equilibrium satisfies
\begin{equation*}
    \revenue \approx (1 - \gamma) \sum_{e \in E}  \bigvee_{\sigma \in \Sigma} ( m_{e, \sigma} = M ) + \gamma |E|
\end{equation*}
when the parameters of the reduction are set appropriately (\Cref{appendix:proofs-main} contains the precise argument). The reason for this is that now each edge autobidder has to pay at least $(1+\epsilon) \gamma$, which is the reserve price, to obtain the edge item even when the associated edge clause is unsatisfiable. As before, the edge autobidder pays $1$ when the edge clause is satisfiable. We thus arrive at the following refinement of~\Cref{theorem:revenue-hard}.

\begin{restatable}{theorem}{gammarevenue}
    Under reserve prices set to $\gamma$-approximate signals, it is \NP-hard to compute an autobidding equilibrium that approximates the optimal one within a factor $1/(\gamma + \epsilon)$, for any constant $\epsilon > 0$.
\end{restatable}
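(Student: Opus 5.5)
We reuse the reduction behind \Cref{theorem:revenue-hard} verbatim, now with every item carrying a reserve price equal to $\signal_{ij}$; we choose the signals adversarially within $[\gamma v_{ij}, v_{ij}]$. Concretely, we set $\signal_{ij} = \gamma v_{ij}$ for every pair, so the reserve on an item is $\gamma$ times the largest value for it. If an item already had a reserve, we take the maximum of the two. The plan has three steps.

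\textbf{Step 1: the gadgets survive.} We check that \Cref{lemma:assignment,lemma:nand,lemma:negation} and the conservative extension property of \Cref{lemma:extension} remain valid. In each gadget, the winner's effective price is the maximum of the second-highest bid and $\gamma$ times its own value. This price never exceeds the winner's value, so the RoS constraint is not threatened by the new reserves. The price is also below the competing bid whenever the competitor's multiplier is large, so the case analysis that drives each lemma is unchanged. The one place needing care is a gadget where the RoS constraint is tight through a low-multiplier competitor. There we would adjust the constants by $O(\epsilon)$, or rescale the existing reserves so they dominate $\gamma v$. This is deferred to \Cref{appendix:MLadvise}, so we take it as given. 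In particular, \eqref{eq:edge-sigma} still holds, and the gadgets still contribute at most $\eta\,\poly(|V|,|E|,|\Sigma|,M,K)$ to revenue.

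\textbf{Step 2: revenue bounds.} Consider the edge item. The edge autobidder values it $1+\epsilon$, so its reserve is $\gamma(1+\epsilon)$. Its price is therefore $\max\{\gamma(1+\epsilon), 1\} = 1$ if some $m_{e,\sigma}=M$. Otherwise the price is $\max\{\gamma(1+\epsilon), (1+\epsilon)/M\} = \gamma(1+\epsilon)$, taking $M \ge 1/\gamma$; if $\gamma = 0$ we are back in \Cref{theorem:revenue-hard}. The edge autobidder still always wins, since its value exceeds both the reserve and all competing bids. This gives the analogue of \eqref{eq:revenue-ub}:
\[
\revenue(\vm) \le \eta C + \gamma(1+\epsilon)|E| + (1-\gamma(1+\epsilon))\,\#\{e : \exists \sigma,\ m_{e,\sigma}=M\} + |E|(1+\epsilon)/M .
\]
By completeness of \Cref{lemma:assignment}, some equilibrium attains revenue at least $|E|$ when the label cover instance is fully satisfiable.

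\textbf{Step 3: soundness and the gap.} The set of edges with some $m_{e,\sigma}=M$ is exactly the set of edges satisfied by the partial labeling read off the multipliers equal to $M$. This labeling is well defined by \Cref{lemma:assignment}, and since a partial labeling never satisfies more edges, it satisfies at most $\delta|E|$ edges in the soundness case. Every equilibrium then has revenue at most
\[
|E|\bigl(\gamma(1+\epsilon) + \delta + \eta C/|E| + 2/M\bigr).
\]
We choose $\delta$, $1/M$ and $\eta$ to be small constants relative to $\epsilon$, and we rescale $\epsilon$. The ratio between the two cases then exceeds $1/(\gamma+\epsilon)$. By \Cref{theorem:labelcover}, a $1/(\gamma+\epsilon)$-approximation would decide $\gaplabel_{1,\delta}$, which is \NP-hard.

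The main obstacle is Step 1. We must confirm that the reserves $\gamma v_{ij}$ do not break the maximal-pacing arguments in the \textsf{NAND} and \textsf{NOT} gadgets. In those gadgets, the output autobidder's surplus on low-value items determines its multiplier. A reserve of $\gamma\cdot 1/(2M)$ on item 4, for example, would cut that surplus. We expect this to be harmless because all such terms are $O(1/M)$, which our choice of large $M$ absorbs.
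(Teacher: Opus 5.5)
Your proposal follows the paper's proof essentially step for step. You use the same reduction with signal reserves $\gamma v$, carry the gadget lemmas over via \Cref{appendix:MLadvise} (which the paper also just invokes), and get an edge-item price of $1$ versus $\gamma(1+\epsilon)$. That gives revenue at least $|E|$ in the completeness case and at most roughly $(\gamma+O(\delta)+O(1/M))|E|$ in the soundness case, so the argument is correct.

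Your heuristic for why Step~1 is harmless points at the wrong reserves, though.

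\begin{itemize}
\item The signal reserves on \textsf{NAND} items~3--4 and \textsf{NOT} items~2--3 change nothing, because the existing reserves already dominate them. This includes the $\gamma/(2M)$ on item~4 that you single out.
\item The reserves that matter are $\gamma(1/2+\epsilon)$ on \textsf{NAND} items~1--2 and $\gamma(1+\epsilon)$ on \textsf{NOT} item~1. These are constant-size, not $O(1/M)$, so taking $M$ large does not absorb them.
\item The maximal-pacing step survives for a different reason: $\gamma<1$ is a fixed constant. An output autobidder facing a false input therefore still has positive surplus, namely $(1-\gamma)(1/2+\epsilon)-2\epsilon$ in \textsf{NAND} and $(1-\gamma)(1+\epsilon)-2\epsilon$ in \textsf{NOT}.
\item Positive surplus requires the gadget parameter $\epsilon$ to be small relative to $1-\gamma$. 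You should add this condition explicitly to your parameter choices.
\end{itemize}

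A smaller point: your statement that the price never exceeds the winner's value is false as written. Only the reserve component is bounded by the winner's value; the second-highest bid can exceed it.
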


This again approaches the PoA bound (with respect to revenue) of $1/\gamma$ established by~\citet{Balseiro21:Robust}, revealing again a tight connection between PoA and the inapproximability threshold.

Next, we turn our attention to our welfare reduction. It suffices to adapt~\Cref{lemma:stealing} as follows.

\begin{restatable}{lemma}{refinedstealing}
    \label{lemma:refined-stealing}
    Fix an edge $e \in E$ and reserve prices set to $\gamma$-approximate signals. If $m_e^{(\sigma)} = M$ for some $\sigma \in \Sigma$, the incumbent edge autobidder wins at least a $(1 - 2\epsilon)$ fraction of the incumbent edge item. Otherwise, the edge autobidder wins at least a $1 - \gamma$ fraction of that item.
\end{restatable}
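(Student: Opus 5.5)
The plan is to re-run the two-case argument behind \Cref{lemma:stealing}, now with reserve prices. The only new ingredient is the reserve on the edge item, which changes how much surplus the edge autobidder brings to the incumbent edge item. I work with signals at the low end of \Cref{def:approx-signal}: $\signal_{ij} = \gamma v_{ij}$, which is a valid $\gamma$-approximate signal that the reduction can fix. (If an adversarial signal could be as large as $v_{ij}$, the second case as stated would fail, so this choice is part of the construction.) The reserve of each item is the maximum of the relevant signals and any reserve already present, as in the footnote. Recall the valuations: the edge autobidder values the edge item $1+\epsilon$ and the incumbent edge item $(1+\epsilon)/M$. The incumbent edge autobidder values only the incumbent edge item, at $1$. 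By \Cref{lemma:extension} the rest of the system interacts with these two autobidders only through the edge item.

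\textbf{Preliminary facts.} On the incumbent edge item, every reserve is at most $\max\{\gamma, \gamma(1+\epsilon)/M\} \le 1$, and the incumbent's bid is $m_{\mathrm{inc}} \ge 1$. Let $b_e = m_e(1+\epsilon)/M \le 1+\epsilon$ be the edge autobidder's bid there. If the incumbent wins a positive fraction, it pays $\max\{b_e, \text{reserve}\}$. Its RoS constraint then forces this price to be at most $1$.

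\textbf{Satisfied case} ($m_{e,\sigma} = M$ for some $\sigma$). Some competitor bids $M \cdot (1/M) = 1$ on the edge item, so the edge autobidder pays $\max\{1, \gamma(1+\epsilon)\} = 1$ and has surplus $\epsilon$. Suppose it wins a fraction $x$ of the incumbent edge item. It pays at least $m_{\mathrm{inc}} \ge 1$ per unit there, so RoS gives $x\bigl(1 - (1+\epsilon)/M\bigr) \le \epsilon$. Hence $x \le 2\epsilon$ for $M$ large, and the incumbent wins at least a $(1-2\epsilon)$ fraction. This is exactly the original argument, since the reserves on the incumbent edge item never exceed $1$.

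\textbf{Unsatisfied case} (all $m_{e,\sigma} \le 1+\epsilon$). The competitors bid at most $(1+\epsilon)/M$ on the edge item, so the price is the reserve $\gamma(1+\epsilon)$ once $M$ is large. The edge autobidder therefore has surplus $(1-\gamma)(1+\epsilon)$ from the edge item, and I split on the allocation of the incumbent edge item.
\begin{itemize}
\item The incumbent wins everything. Then maximal pacing forces $m_e = M$, because the edge autobidder keeps a strictly positive surplus. So $b_e = 1+\epsilon > 1$, and the incumbent would pay at least $1+\epsilon > 1$, violating its RoS constraint. This sub-case is impossible.
\item The item is split. Then the bids tie, $m_{\mathrm{inc}} = b_e$, and the incumbent pays $b_e$ per unit, so RoS forces $b_e \le 1$. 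Hence $m_{\mathrm{inc}} = 1$, and the edge autobidder pays $1$ per unit on its fraction $x$. Its RoS constraint must bind, since $m_e < M$ when $b_e \le 1$. This gives $(1-\gamma)(1+\epsilon) = x\bigl(1 - (1+\epsilon)/M\bigr)$, so $x \ge (1-\gamma)(1+\epsilon) \ge 1-\gamma$.
\item The edge autobidder wins everything. The claim then holds trivially.
\end{itemize}

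\textbf{Main obstacle.} The delicate step is the split sub-case: pinning the per-unit price at exactly $1$ through the tie condition and then using binding maximal pacing, while keeping track of the reserve bookkeeping. I also need to confirm that the fractional allocation within ties is the only remaining freedom.
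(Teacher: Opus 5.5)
Your proposal is correct and follows the paper's own argument. The satisfied case carries over verbatim from \Cref{lemma:stealing}, and in the unsatisfied case the only nontrivial configuration is the tie at bid $1$, where the binding RoS constraint $(1-\gamma)(1+\epsilon) = x\bigl(1 - (1+\epsilon)/M\bigr)$ gives $x \geq 1-\gamma$; your explicit choice $\signal_{ij} = \gamma v_{ij}$ is exactly what the paper implicitly uses when it prices the edge item at $\gamma(1+\epsilon)$.
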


Using this lemma, we find that the optimal liquid welfare at an autobidding equilibrium satisfies
\begin{equation*}
    \welfare \approx (1 + \gamma) |E| +  (1 - \gamma) \sum_{e \in E}  \bigvee_{\sigma \in \Sigma} ( m_{e, \sigma} = M )
\end{equation*}
when the parameters of the reduction are set appropriately.
We are thus able to refine~\Cref{theorem:welfare-hard} as follows.
\begin{restatable}{theorem}{gammawelfare}
    Under reserve prices set to $\gamma$-approximate signals, it is \NP-hard to compute an autobidding equilibrium that approximates the optimal one within a factor $2/(1 + \gamma + \epsilon)$, for any constant $\epsilon > 0$.
\end{restatable}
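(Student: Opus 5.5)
We follow the template of \Cref{theorem:welfare-hard}: a gap-preserving reduction from $\gaplabel_{1,\delta}$ (\Cref{theorem:labelcover}), applied to the same instance as before but now with every item carrying the reserve price $\signal_{ij}$ (the maximum with any existing reserve). We use three ingredients. First, the gadgets of \Cref{lemma:assignment,lemma:nand,lemma:negation} still yield \eqref{eq:edge-sigma} under $\gamma$-approximate signals (\Cref{appendix:MLadvise}). Second, \Cref{lemma:extension} lets us analyze each edge system in isolation. Third, \Cref{lemma:refined-stealing} determines who wins the incumbent edge item. The scaling $\eta$ again makes the gadget contributions negligible, at most $\eta(|V|(|\Sigma|M+K)+5|E||\Sigma|)$. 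We will choose $\eta \ll \epsilon/(|V|(|\Sigma| M + K) + |E||\Sigma|)$ and $M \gg 1/\epsilon$.

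\textbf{Completeness.} Take a labeling satisfying all edges and set the corresponding label multipliers to $M$ and the rest to $1$, as in \Cref{lemma:assignment}. By \eqref{eq:edge-sigma}, every edge then has some $m_{e,\sigma}=M$. Each edge autobidder collects value $1+\epsilon$ from its edge item. By \Cref{lemma:refined-stealing}, each incumbent edge autobidder wins at least a $(1-2\epsilon)$ fraction of its item, worth $1$. Hence some equilibrium has welfare at least $(1+\epsilon)|E| + (1-2\epsilon)|E| \ge (2-\epsilon)|E|$.

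\textbf{Soundness.} By \Cref{lemma:assignment}, at most one label per vertex is active in any equilibrium. The relaxation $\sum_\sigma z_{u,\sigma}\le 1$ cannot increase the number of satisfied constraints, so at most $\delta|E|$ edges have some $m_{e,\sigma}=M$. On each of the remaining edges, \Cref{lemma:refined-stealing} says the edge autobidder takes at least a $(1-\gamma)$ fraction of the incumbent item. The incumbent therefore gets value at most $\gamma$, and the edge autobidder gains at most $(1+\epsilon)/M$. This gives
\begin{equation*}
\welfare(\vx) \le (1+\epsilon)|E| + \delta|E| + \gamma|E| + \frac{|E|(1+\epsilon)}{M} + \eta\left(|V|(|\Sigma|M+K)+5|E||\Sigma|\right),
\end{equation*}
which is at most $(1+\gamma+2\epsilon+\delta)|E|$ for our choice of parameters.

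The ratio between the two cases is therefore at least $(2-\epsilon)/(1+\gamma+2\epsilon+\delta)$. Choosing $\delta$ and the internal $\epsilon$ as small constants relative to the target $\epsilon$ makes this exceed $2/(1+\gamma+\epsilon)$. An algorithm achieving that approximation factor would then decide $\gaplabel_{1,\delta}$. The alphabet is constant, so the instance has polynomial size.

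\textbf{Main obstacle.} The delicate step is the soundness accounting for the incumbent edge item. We must confirm that the reserve prices from the signals cannot let the incumbent win more than a $\gamma$ fraction on unsatisfied edges. The concern is that, with its multiplier possibly $M$, the incumbent could claim more of the item than \Cref{lemma:refined-stealing} allows. We must also check that the signals do not break the conservative-extension invariance. Here we rely on \Cref{lemma:refined-stealing} as stated, together with the requirement that the signals be exactly those used in its proof, i.e.\ chosen adversarially inside $[\gamma v_{ij}, v_{ij}]$ as the reduction specifies.
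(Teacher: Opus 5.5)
Your proposal is correct and follows the paper's proof. Completeness uses \Cref{lemma:refined-stealing} to give welfare at least $(2-\epsilon)|E|$. Soundness uses the bound $(1+\epsilon)|E| + \delta|E| + \gamma|E| + |E|(1+\epsilon)/M + \eta(\cdots)$, with parameters chosen as in \Cref{theorem:welfare-hard}. Your remark that the signals must sit at the low end $\gamma v_{ij}$ is exactly what the paper's proof of \Cref{lemma:refined-stealing} assumes without saying so: there the edge autobidder pays only $\gamma(1+\epsilon)$ for the edge item.
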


This approaches the PoA bound (with respect to liquid welfare) of~\citet{Balseiro21:Robust} under reserve prices \emph{and} additive boosts. With only reserve prices set to approximate signals per~\Cref{def:approx-signal}, \citet{Balseiro21:Robust} established a weaker $2 - \gamma$ PoA bound. This leaves an interesting gap for future research.
\section{Inapproximability for learning dynamics}
\label{sec:learning}

Our foregoing results in~\Cref{sec:inapprox-stable} concern autobidding equilibria, which can be thought of as fixed points of learning dynamics. However, computing autobidding equilibria is \PPAD-hard in general~\citep{Li24:Vulnerabilities}, so convergence is not guaranteed~\citep{Leme24:Complex,Lucier24:Autobidders,Gaitonde23:Budget}. This section establishes inapproximability results under relaxed solution concepts that will be met by a broad class of learning algorithms.

\subsection{Revenue inapproximability under time-average RoS constraints}

We begin by showing that the mere presence of (average) RoS constraints suffices to drive \NP-hardness results. Specifically, we work under the following definition.

\begin{definition}
    \label{def:admissible}
    Consider a sequence of multipliers $\vec{m}^{(1)}, \dots, \vec{m}^{(T)} \in [1, M]^n$ and allocation vectors $\vec{x}^{(1)}, \dots, \vec{x}^{(T)} \in [0, 1]^{n 
    \times k}$. We say that $(\vec{m}^{(1)}, \vec{x}^{(1)}), \dots, (\vec{m}^{(T)}, \vec{x}^{(T)})$ is a \emph{second-price sequence} satisfying the \emph{$\beta$-approximate time-average RoS constraints} if the following properties hold:
    \begin{enumerate}
        \item In each round $t \in [T]$, only autobidders with the highest bid can win the item: if $x^{(t)}_{i j} > 0$, then $m^{(t)}_{i} v_{i j} = \max_{i' \in [n]} m^{(t)}_{i'} v_{i' j}$ for all $i \in [n]$ and $j \in [k]$.
        \item In each round $t \in [T]$, a winner pays the second highest bid: if $x^{(t)}_{i j} > 0$, then $p^{(t)}_j = \max_{i' \neq i} m^{(t)}_{i'} v_{i' j}$ for all $i \in [n]$ and $j \in [k]$.
        \item In each round $t \in [T]$, all items are fully allocated: $\sum_{i=1}^n x^{(t)}_{ij} = 1$ for all $j \in [k]$.
        \item Bidders approximately satisfy the RoS constraint on average across the $T$ rounds:\footnote{Compared to~\Cref{def:approx-autobidequil}, here we consider an additive approximation, since this is more in line with guarantees obtained under learning dynamics.}\label{item:average-RoS}
        \[
         \frac{1}{T} \sum_{t=1}^T \sum_{j=1}^k x^{(t)}_{i j} v_{i j} \geq \frac{1}{T} \sum_{t=1}^T \sum_{j=1}^k x^{(t)}_{i j} p^{(t)}_j - \beta \text{ for all } i \in [n].
        \]
    \end{enumerate}
\end{definition}

The welfare of a sequence $(\vm^{(1)}, \vx^{(1)}), \dots, (\vm^{(T)}, \vx^{(T)})$ is defined as the average welfare across the $T$ rounds, $\frac{1}{T} \sum_{t=1}^T \welfare(\vx^{(t)})$, and similarly for the revenue.

\paragraph{Maximum cover} The class of CSPs we consider in this section is based on the maximum cover problem (\textsc{MaxCover}). Here, we are given a universe $U$ and a collection $\mathcal{F} = \{S_1, \dots, S_p\} $ that contains subsets of $U$. The problem is to select $q \in \N$ of these subsets so as to cover as many elements of $U$ as possible. We will use the following classic result due to \citet{Feige98:Threshold}.

\begin{theorem}[\citealp{Feige98:Threshold}]
    \label{theorem:Feige}
    Given an instance of \textsc{MaxCover} with parameter $q \in \N$, it is \NP-hard to distinguish between the following two cases under the promise that one of them holds:
    \begin{itemize}
        \item there exists a $q$-cover that contains all elements of $U$, or
        \item any $q$-cover contains at most a $1 - 1/e + \delta$ fraction of the elements of $U$, for any constant $\delta > 0$.
    \end{itemize}
\end{theorem}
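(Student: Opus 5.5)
Since \Cref{theorem:Feige} is a classical result, the plan is to recall how it is derived from \Cref{theorem:labelcover} via \emph{partition systems}, in the spirit of \citet{Feige98:Threshold}. The approach is to encode a label cover instance $(G,\Sigma,\Pi)$ into a \textsc{MaxCover} instance. Each left vertex $u \in V_1$ must pick exactly one label, so we set $q \defeq |V_1|$. The universe is partitioned into blocks, one per right vertex $v \in V_2$.

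\textbf{Partition systems.} The first ingredient is a combinatorial lemma. For every $d$ and every $\delta'>0$ there is a ground set $B$ with $|\Sigma|$ partitions $\{P_{b,1},\dots,P_{b,d}\}_{b\in\Sigma}$, each of $B$ into $d$ parts of equal size. The required property is that any $d$ parts drawn from \emph{pairwise distinct} partitions cover at most a $1-(1-1/d)^d+\delta' \le 1-1/e+\delta'+O(1/d)$ fraction of $B$. A random construction gives this: assign each point of $B$ independently and uniformly to a part in each partition, then apply a Chernoff bound and a union bound over the choices. The construction can be derandomized, or we can use $|B|$ polynomial in $|\Sigma|$ since $|\Sigma|$ is constant.

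\textbf{The reduction.} Let $d$ be the right degree, and order the edges at each $v$ as $e_1,\dots,e_d$. The universe is $U\defeq V_2\times B$. For each $u\in V_1$ and $a\in\Sigma$, define
\[
S_{u,a}\defeq\bigcup_{e=(u,v)=e_i}\{v\}\times P_{\Pi_e(a),i}.
\]
\emph{Completeness.} Take a satisfying labeling and select $S_{u,\ell(u)}$ for every $u$. Every edge $e_i$ at $v$ then contributes the part $P_{\ell(v),i}$, so all $d$ parts of partition $\ell(v)$ appear, and each block $\{v\}\times B$ is fully covered. Any $q$-cover that uses two sets for the same $u$ can be handled by padding, or by giving each $u$ a private gadget that forces one set per $u$; this technicality I will not dwell on.

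\emph{Soundness.} Suppose some $q$-cover covers more than a $1-1/e+\delta$ fraction of $U$. By averaging, at least a $\delta/2$ fraction of right vertices $v$ have their blocks covered beyond $1-1/e+\delta/2$. For each such $v$, the partition-system property implies that two of the parts contributed at $v$ come from the \emph{same} partition $b$; that is, two neighbors $u,u'$ of $v$ project their chosen labels onto the same $b$. Now decode a labeling: $u$ keeps its chosen label, and $v$ takes the label $b$ of a uniformly random such colliding pair. For each good $v$, at least one edge at $v$ is satisfied, so in expectation at least a $\Omega(\delta/d)$ fraction of the edges is satisfied. This contradicts label cover soundness once the soundness parameter of \Cref{theorem:labelcover} is chosen below $\delta/(2d)$.

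\textbf{Main obstacle.} The difficulty is the circular dependence between parameters. The partition system needs $d$ large to approach $1-1/e$, while the decoding loses a factor $1/d$, so the label cover soundness must be $o(1/d)$. In parallel-repetition instances the degree $d$ grows with the number of repetitions, which makes the dependence circular. The first thing I would try is Feige's remedy: decouple the number of parts from the graph degree. Use a $d$-prover (hypergraph) label cover with a fixed $d$, obtained from parallel repetition with soundness $\ll 1/d^2$ so that each pairwise collision can be decoded. Equivalently, apply degree reduction while preserving the projection property, keeping $d$ constant as the label cover soundness is driven to $0$. With $d$ fixed first and $\delta'\to0$, the threshold $1-(1-1/d)^d\to1-1/e$ gives the stated bound.
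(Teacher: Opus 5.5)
\textbf{Main gap: the soundness requirement cannot be met.} Your soundness step asks for label-cover soundness below $\delta/(2d)$, where $d$ is the right degree of the instance you reduce from. No projection instance has value below $1/d$. To see this, fix any labeling of $V_1$ and let each $v\in V_2$ copy the label $\Pi_{(u,v)}(\ell(u))$ of one of its neighbors $u$. This satisfies one of the $d$ edges at every $v$. So this is not a circularity peculiar to parallel repetition that parameter choices could avoid: the requirement can never hold, and the argument as written cannot close. Your ``equivalent'' remedy, degree reduction keeping $d$ constant while soundness tends to $0$, is impossible for the same reason.

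The $d$-prover route is what \citet{Feige98:Threshold} does; the paper itself gives no proof and only cites it. That route escapes the obstruction because each universe block is a random string touched by exactly one question per prover, so there is no right vertex that can copy a neighbor. But you only name it, and its content is precisely what you skip. One fixes $k$ first and builds a $k$-prover system in which \emph{every pair} of provers rarely agrees on the projected answer. Feige assigns clause and variable questions to the $k$ provers according to codewords of a binary code with pairwise distance $\Omega(\ell)$. Each pair then plays an $\Omega(\ell)$-fold repeated two-prover game, so weak acceptance has probability at most $k^2 2^{-\Omega(\ell)}$.

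\textbf{Second gap: one set per $u$ cannot be assumed.} A $q$-cover may put several sets on some $u$ and none on others. A block can then receive more than $d$ parts, all from distinct partitions, and be covered beyond $1-1/e$ with no collision at all. Padding with private elements would change the very fraction you are bounding. The standard treatment is averaging. Let $t_v$ be the number of parts landing in block $v$ and $d_1$ the left degree. By bi-regularity, $\sum_v t_v=q\,d_1=|E|$, so $\mathbb{E}_v[t_v]=d$. Concavity of $t\mapsto 1-(1-1/d)^t$ then gives the $1-1/e$ bound on collision-free blocks. Collisions are decoded by letting each $u$ pick a uniformly random label from its list, after discarding the few blocks with $t_v\gg d$ via Markov's inequality.

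\textbf{A fix within your two-prover framework.} Your construction can be rescued without extra provers by using the full strength of random partition systems. Let $g_b$ count the distinct parts a block receives from partition $b$. The uncovered fraction of the block is then $\prod_b(1-g_b/d)\pm\delta'$, and $\prod_b(1-g_b/d)\ge e^{-(1+\gamma)t_v/d}$ whenever $\max_b g_b\le\gamma d\le d/2$. Combined with the averaging above, coverage beyond $1-1/e+\delta$ forces an $\Omega(\delta)$ fraction of blocks in which a single right label is hit by $\Omega(\delta d)$ distinct neighbors. Decoding then satisfies a $\mathrm{poly}(\delta)$ fraction of all edges, rather than the $\delta/d$ your single-collision argument extracts. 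The required label-cover soundness becomes $\mathrm{poly}(\delta)$, independent of $d$.
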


We now define a class of CSPs that captures \textsc{MaxCover}. We are given a set of variables $z_1, \dots, z_d$, each of which is to be assigned a label from a set $|\Sigma|$. We write $z_{i, \sigma} \in \{0, 1\}$ to denote whether $z_i$ was assigned the label $\sigma \in \Sigma$, so that $\sum_{\sigma \in \Sigma} z_{i, \sigma} = 1$. Every clause $C$ of the CSP is of the form $\bigvee_{(i, \sigma) \in C} z_{i, \sigma}$. We let $\mathcal{C}$ denote the set of clauses. The goal is to maximize $\sum_{C \in \mathcal{C}} \bigvee_{(i, \sigma) \in C} z_{i, \sigma}$ subject to $\sum_{\sigma \in \Sigma} z_{i, \sigma} = 1$.

This class of CSPs clearly encompasses \textsc{MaxSAT} when the alphabet is binary. It also encompasses \textsc{MaxCover}. This can be seen by using each of $q$ variables $z_i$ to denote through its label the corresponding subset from $\mathcal{F}$ that was selected (so $|\Sigma| = |\mathcal{F}|$), and each clause corresponding to an element in the universe; this encoding makes allowance for selecting the same subset more than once, but that clearly preserves soundness. This means that this class of CSPs is subject to the hardness result in~\Cref{theorem:Feige}.

\paragraph{Label assignment}

For the label assignment, we rely on the same gadget as the one presented in~\Cref{sec:labelassign}. Using just the fact that the RoS constraint must be satisfied on average across the $T$ rounds (\Cref{def:admissible}, \Cref{item:average-RoS}), implies the following soundness property.
\begin{restatable}{lemma}{learnass}
    \label{lemma:learning-assignment}
    Consider a second-price sequence satisfying the $\beta$-approximate time-average RoS constraints (\Cref{def:admissible}) of the label assignment gadget. Denote $\underline{m}_i^{(t)}$ the second highest multiplier among $(m_{i, \sigma}^{(t)} )_{\sigma \in \Sigma}$. If
    \[
      \Tgood[, i] \defeq \left\{ t\in [T] : \underline{m}_i^{(t)} \leq \lambda \frac{M |\Sigma|}{K} + 1 \right\}
    \]
    for some $\lambda > 1$ and $\beta \leq M$,\footnote{Eventually we will rescale the entries of the label assignment gadget by $\eta$, so $\beta$ needs to be rescaled as well.} then
    $|\Tgood[, i]| \geq T \frac{\lambda - 2}{\lambda}$.
\end{restatable}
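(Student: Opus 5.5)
The plan is to sum the (approximate) time-average RoS constraints over all $|\Sigma|$ autobidders of the gadget for vertex $i$. This gives an aggregate inequality: the total surplus (value minus payment), summed over bidders and rounds, is at least $-|\Sigma|\beta T \ge -|\Sigma| M T$. I will then bound the per-round aggregate surplus from above, in two different ways depending on whether $t \in \Tgood[, i]$. Call the complement of $\Tgood[, i]$ the set of \emph{bad} rounds and write $B$ for its size. The inequality will force $B \le 2T/\lambda$.

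First, the upper bound valid in every round. Every multiplier lies in $[1, M]$, so every bid on item $|\Sigma|+1$ is at least $K$. By the second-price rule (property 2 of \Cref{def:admissible}), whoever receives a positive fraction of that item pays at least $K$ per unit, which is at least its value $K$. Hence the last item contributes nonpositive surplus in aggregate. The remaining items $1,\dots,|\Sigma|$ each have maximum value $M$ and nonnegative prices. So the aggregate surplus in any round is at most $M|\Sigma|$; I use this bound in good rounds.

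Second, the bound in a bad round $t$. There the second-highest multiplier exceeds $1 + \lambda M|\Sigma|/K$, so every winner of item $|\Sigma|+1$ pays at least $K + \lambda M |\Sigma|$ per unit. This holds also under ties, since the price is the maximum bid among the others, and that is at least the second-highest bid. Item $|\Sigma|+1$ is fully allocated, so its aggregate surplus is at most $-\lambda M|\Sigma|$. Adding the at most $M|\Sigma|$ surplus from the other items gives an aggregate surplus of at most $-(\lambda-1)M|\Sigma|$.

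Combining the two bounds with the summed RoS inequality gives
\[
(T-B)\,M|\Sigma| - B(\lambda-1)M|\Sigma| \ge -|\Sigma| M T,
\]
that is, $2T \ge \lambda B$. Therefore $|\Tgood[, i]| = T - B \ge T(\lambda-2)/\lambda$.

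The main point needing care is the per-round payment accounting on item $|\Sigma|+1$ under fractional, tied allocations. I need that every unit of that item is paid for at the price $\max_{i'\ne i}$ bid, which is at least the second-highest bid overall. I expect this to follow directly from properties 1--3 of \Cref{def:admissible}. Apart from that, the argument is a simple averaging bound.
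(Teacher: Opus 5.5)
Your proposal is correct and is essentially the paper's argument. The paper likewise sums the RoS constraints over the gadget's bidders and uses the per-round bounds value $\le M|\Sigma|+K$ and payment $\ge K\underline{m}_i^{(t)}$, which is exactly your per-round surplus bound; it merely first derives $\frac{1}{T}\sum_t \underline{m}_i^{(t)} \le 1 + 2M|\Sigma|/K$ and then applies a Markov-type count, whereas you do the good/bad case split directly inside the summed surplus inequality.
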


(The proofs from this section can be found in~\Cref{appendix:proofs-learning}.) In particular, by a union bound, we obtain the following.

\begin{corollary}
    \label{cor:soundness}
    In the setting of~\Cref{lemma:learning-assignment}, let
    \[
      \Tgood \defeq \left\{ t\in [T] : \underline{m}_i^{(t)} \leq \lambda \frac{M |\Sigma| }{K} + 1 \text{ for all } i \in [d] \right\}.
    \]
    Then $|\Tgood| \geq T ( 1 - \frac{2 d }{\lambda} )$.
\end{corollary}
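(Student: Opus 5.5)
The plan is a union bound over the $d$ label assignment gadgets. First I apply \Cref{lemma:learning-assignment} to each gadget $i \in [d]$ separately, which requires the relevant hypotheses to hold gadget by gadget.

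\textbf{Reducing to a single gadget.} The overall instance is built from conservative extensions. Hence the autobidders of gadget $i$ never win items outside gadget $i$, and outsiders never win items of gadget $i$, for every choice of multipliers. The restriction of the sequence to gadget $i$ is therefore itself a second-price sequence on that gadget. Its time-average RoS constraints are inherited, since each autobidder's value and payment come only from items within its own gadget. This is the per-round analogue of \Cref{lemma:extension} and follows directly from the definitions. Consequently \Cref{lemma:learning-assignment} yields $|\Tgood[, i]| \geq T \frac{\lambda - 2}{\lambda}$ for every $i \in [d]$.

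\textbf{Complement count and union bound.} Let $\mathcal{B}_i \defeq [T] \setminus \Tgood[, i]$. The per-gadget bound gives
\[
|\mathcal{B}_i| \leq T - T\tfrac{\lambda-2}{\lambda} = \tfrac{2T}{\lambda}.
\]
By definition, $\Tgood = \bigcap_{i \in [d]} \Tgood[, i]$, so $[T] \setminus \Tgood = \bigcup_{i} \mathcal{B}_i$. Applying the union bound,
\[
|\Tgood| \geq T - \sum_{i=1}^d |\mathcal{B}_i| \geq T\left(1 - \tfrac{2d}{\lambda}\right).
\]

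The only step requiring care is checking that the restriction to each gadget legitimately satisfies \Cref{def:admissible}, including the scaled requirement $\beta \leq M$, which is stated in the footnote up to the rescaling by $\eta$. Once that is granted, the rest is pure counting.
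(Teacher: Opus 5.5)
Your proposal is correct and follows the paper's argument: apply \Cref{lemma:learning-assignment} to each of the $d$ gadgets, which bounds each complement $[T] \setminus \Tgood[, i]$ by $2T/\lambda$, then take a union bound over $i \in [d]$. Your justification that each gadget's restriction inherits the time-average RoS constraints is also sound, since gadget autobidders never win clause items and outsiders never win gadget items; the paper leaves this point implicit.
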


In other words, we can guarantee soundness \emph{for most rounds}. It is worth pointing out that if $K$ can be set large enough as a function of $T$, satisfying the average RoS constraint implies a bound on the second highest multiplier \emph{for all} rounds, not just on average; however, setting $K$ as a function of $T$ is somewhat restrictive, so we refrain from doing so.

As before, our soundness property does not require having the highest multiplier in each round be close to $M$. \Cref{lemma:learning-assignment} makes allowance for i) setting none of the labels, and ii) half-activating a single label, in the sense that its multiplier will be much larger than $1$, but still far from $M$. Our analysis accounts for those cases because they lead to strictly inferior revenue or welfare.

Unlike the reduction in~\Cref{sec:inapprox-stable}, here we do not have to implement an $\textsf{AND}$ gadget because the starting class of CSPs is simpler. In what follows, we choose $K = \lambda M|\Sigma| /\epsilon$, so that for each $t \in \Tgood$ and $i \in [d]$, $\underline{m}_i^{(t)} \leq 1 + \epsilon$. After the label assignment (\Cref{lemma:learning-assignment}), we proceed directly to the \emph{clause autobidder} interaction, which mirrors~\Cref{tab:edgeautobidder}. Specifically, the clause autobidder $C \in \mathcal{C}$ competes with the autobidders active in that clause, $\{(i,\sigma) \in C\}$, for a single \emph{clause item}. The clause autobidder values that item $1 + \epsilon$, whereas all other autobidders value that item $1/M$. As a result, combining with~\Cref{theorem:Feige}, we are able to show the following hardness result.

\begin{restatable}{theorem}{revenuelearning}
    For any constant $\epsilon > 0$ and a sufficiently small $\beta > 0$, computing a second-price sequence of multipliers that satisfies the $\beta$-approximate time-average RoS constraints  (\Cref{def:admissible}) whose revenue approximates the optimal one within a factor of $e/(e-1) - \epsilon$ is \NP-hard.
\end{restatable}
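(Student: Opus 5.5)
The plan is to reduce from the \textsc{MaxCover}-type CSP of \Cref{theorem:Feige}. Given a CSP with variables $z_1,\dots,z_d$, alphabet $\Sigma$ and clauses $\mathcal{C}$, the instance consists of two pieces. First, for each variable $i\in[d]$ we place one label assignment gadget~\eqref{eq:labelassign}, scaled by $\eta$. Second, for each clause $C$ we add a clause autobidder that values its clause item at $1+\epsilon$; every autobidder $(i,\sigma)\in C$ values that same item at $1/M$. Since $(1+\epsilon)\cdot 1 > M\cdot 1/M$, the clause autobidder always wins its item, so the gadgets and clause interactions form conservative extensions of one another in the sense of \Cref{lemma:extension}. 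The price of the clause item in round $t$ is $\max_{(i,\sigma)\in C} m^{(t)}_{i,\sigma}/M$. We then compare revenue in the two cases of \Cref{theorem:Feige}.

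\textbf{Completeness.} Take a $q$-cover that covers all of $U$, equivalently a labeling satisfying every clause. Use the constant sequence $T=1$ consisting of the equilibrium from \Cref{lemma:assignment}: for each variable $i$ with label $\sigma$, set $m_{i,\sigma}=M$ and all other multipliers to $1$. Every clause item then sells at price $1$, and the clause autobidder's RoS constraint holds because $1\le 1+\epsilon$. The gadget RoS constraints hold exactly. The revenue is therefore at least $|\mathcal{C}|$.

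\textbf{Soundness.} Take any second-price sequence satisfying the $\beta$-approximate time-average RoS constraints, with $\beta$ rescaled by $\eta$ for the gadgets. Set $K=\lambda M|\Sigma|/\epsilon$. By \Cref{cor:soundness}, every round $t\in\Tgood$ has at most one multiplier per variable exceeding $1+\epsilon$, and $|\Tgood|\ge T(1-2d/\lambda)$.

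In such a round, let $\sigma_i$ be the argmax label for variable $i$. The labeling $i\mapsto\sigma_i$ covers at most $(1-1/e+\delta)|\mathcal{C}|$ clauses. A clause not covered by it pays at most $(1+\epsilon)/M$. A covered clause pays at most $1$, and this bound holds even when the top multiplier is only "half-activated", since the price never exceeds $1$. So in good rounds the clause revenue is at most $(1-1/e+\delta)|\mathcal{C}| + |\mathcal{C}|(1+\epsilon)/M$.

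In bad rounds the clause revenue is at most $|\mathcal{C}|$. Bad rounds number at most $2dT/\lambda$, so they contribute at most a $2d/\lambda$ fraction of $|\mathcal{C}|$ to the average.

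The gadget revenue is at most $\eta\, d(|\Sigma|M+K)$ per round. The clause autobidders' average RoS slack $\beta$ does not increase the clause item prices, so it does not affect revenue; it only matters that $\beta\le M\eta$ is permitted inside the gadgets, as \Cref{lemma:learning-assignment} requires.

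\textbf{Parameter choices.} Choose $\lambda = d/\epsilon'$, then $M$ large, then $K=\lambda M|\Sigma|/\epsilon$, and finally $\eta \le \epsilon'|\mathcal{C}|/(d(|\Sigma|M+K))$. All of these are polynomial in the instance size, so the reduction is polynomial. The resulting average revenue is at most $(1-1/e+\delta+O(\epsilon'))|\mathcal{C}|$, against at least $|\mathcal{C}|$ in the completeness case. This gives a gap of $e/(e-1)-\epsilon$ after taking $\delta,\epsilon'$ small. The main obstacle I expect is verifying that the union bound in \Cref{cor:soundness} remains valid with $\lambda$ polynomial in $d$, which it does since $K$ enters only polynomially.
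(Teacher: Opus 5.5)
Your proposal is correct and follows the paper's proof essentially step for step. Both use label-assignment gadgets plus clause autobidders, the good-round guarantee of \Cref{cor:soundness}, and a comparison of good-round clause revenue against \Cref{theorem:Feige}; your argmax labeling in good rounds plays the role of the paper's indicator $m^{(t)}_{i,\sigma}\ge 2$.

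One small fix concerns the gadget revenue. It is not at most $\eta d(|\Sigma|M+K)$ in every round: item $|\Sigma|+1$ of a gadget sells for $\eta MK$ in a round where two of its multipliers equal $M$. However, summing the time-average RoS constraints of the gadget autobidders (who never win clause items) bounds it on average by $\eta d(|\Sigma|M+K)+d|\Sigma|\beta\le 2\eta d(|\Sigma|M+K)$ for $\beta\le\eta M$, which your choice of $\eta$ already absorbs.
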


\subsection{Welfare inapproximability for responsive learning sequences}

To argue about welfare, we introduce a strengthening of~\Cref{def:admissible}. First, it will be convenient to introduce a weak form of undomination, formally defined below.

\begin{definition}
    \label{def:undomination}
    Fix an autobidder $i \in [n]$. We say that the interval $[1, m_i]$ is \emph{pacing dominated} if there exists a multiplier $m_i' > m_i$ such that for any way of resolving ties and all other autobidders' multipliers in $[1, M]^{n-1}$, an item obtained by $i$ through $m_i'$ would also be obtained through $1$. 
\end{definition}
An autobidder has no reason to select a multiplier from a pacing dominated interval. Furthermore, verifying whether a given interval $[1, m_i]$ is pacing dominated by a multiplier $m_i'$ is straightforward using monotonicity: one can first set all other autobidder multipliers' maximally to $M$ and inspect the set of items $i$ can win under a multiplier of $1$. This is to be contrasted with the set of items that can be won under $m_i'$ when all other autobidders bid minimally to 1. A preprocessing step can guarantee that each autobidder only selects pacing undominated multipliers per~\Cref{def:undomination}.

We are now ready to introduce the notion of a \emph{responsive learning sequence}. In what follows, we use the shorthand notation $r_i^{(t)} \defeq \sum_{\tau=1}^t \sum_{j=1}^k x_{i j}^{(\tau)} v_{i j} / \sum_{\tau=1}^t \sum_{j=1}^k x_{i j}^{(\tau)} p_j^{(\tau)}$.

\begin{definition}[Responsive learning sequence]
    \label{def:learning}
    The sequence $(\vm^{(1)}, \vx^{(1)}), \dots, (\vm^{(T)}, \vx^{(T)})$ is an $(\alpha, \beta, \mu)$-\emph{approximate responsive learning sequence} if the following properties hold:
    \begin{enumerate}
        \item It is a second-price sequence satisfying the $\beta$-approximate time-average RoS constraints (\Cref{def:admissible}).
        \item If $ \msafe[,i] = (1 - \mu) \sup_{m_i} m_i$, where the supremum is over all pacing dominated intervals $[1, m_i]$, each autobidder $i \in [n]$ satisfies $m_i^{(t)} \geq \msafe[, i]$ at any time $t \in [T]$ (\Cref{def:undomination}). \label{item:undomination}
        \item There exists a constant $c > 0$ such that for any $s > 0$ the following holds. For each autobidder $i \in [n]$ and times $t_1, t_2 \in [T]$ with the properties\label{item:response}
        \begin{enumerate}
            \item $t_2 - t_1 > \alpha T$ and
            \item $r_i^{(\tau)}  \geq 1+s$ for all $\tau \in [t_1, t_2-1]$,
        \end{enumerate}
        it must be that
        \begin{align}
          \frac{1}{t_2 - t_1} \sum_{\tau={t_1+1}}^{t_2} m_i^{(\tau)} \geq \min\{M, (1 +c s) \msafe[,i] \}. \label{eq:reaction}
        \end{align}
    \end{enumerate}
\end{definition}
The third condition forces the multipliers to respond to a large cumulative surplus, but in a very weak sense: \eqref{eq:reaction} triggers only for long intervals ($> \alpha T$) in which the autobidder had a large surplus throughout. In that case, \eqref{eq:reaction} imposes that the average multiplier in the interval is larger than the safe multiplier (unless the safe multiplier was already approximately maximal). We now describe a class of update rules that produce a responsive learning sequence per~\Cref{def:learning}.

\begin{restatable}{lemma}{psirule}
    \label{lemma:psi}
    Consider the update rule $m_i^{(t+1)} = \psi_i(r_i^{(t)})$ for a nondecreasing function $\psi : (0, \infty) \to [1, M]$ satisfying the following for all $i \in [n]$:
    \begin{itemize}
        \item $\psi_i(s') = \msafe[,i]$ for any $s' < 1$,  and
        \item $\psi_i(1+s) \geq \min\{M, (1 + c s) \msafe \}$ for any $s > 0$, where $c > 0$ is some constant.
    \end{itemize}
    If each autobidder employs this update rule, \Cref{def:learning} is satisfied with $\alpha = 0$, $\beta = O_T(1/T)$, and $\mu$ per the definition of $\msafe[,i]$ in~\Cref{item:undomination}.
\end{restatable}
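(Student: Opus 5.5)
We check the three items of \Cref{def:learning} one at a time. The second-price conditions (items 1--3 of \Cref{def:admissible}) hold by construction: in each round the auctioneer runs a parallel second-price auction on the bids $m_i^{(t)} v_{ij}$, with ties broken arbitrarily. \Cref{item:undomination} is immediate because $\psi_i$ is nondecreasing and equals $\msafe[,i]$ below $1$. Hence $m_i^{(t+1)} = \psi_i(r_i^{(t)}) \geq \msafe[,i]$ for every $t$. We also take $m_i^{(1)} = \msafe[,i]$; when no payments have yet been made, $r_i^{(0)}$ is treated as $+\infty$ or the initial multiplier is fixed to $\msafe[,i]$.

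\Cref{item:response} with $\alpha = 0$ also follows directly from the update rule. Suppose $r_i^{(\tau)} \geq 1+s$ for all $\tau \in [t_1, t_2-1]$. Then for every $\tau+1 \in [t_1+1, t_2]$, monotonicity of $\psi_i$ and the second property give
\[
m_i^{(\tau+1)} = \psi_i(r_i^{(\tau)}) \geq \psi_i(1+s) \geq \min\{M, (1+cs)\msafe[,i]\}.
\]
Averaging over the window yields~\eqref{eq:reaction}. This argument holds pointwise, so no lower bound on the window length is needed.

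The substantive step is the time-average RoS constraint with $\beta = O_T(1/T)$. Write $V_i^{(t)}$ and $P_i^{(t)}$ for the cumulative value and cumulative payment of $i$ up to round $t$, so that $r_i^{(t)} = V_i^{(t)}/P_i^{(t)}$. We want $V_i^{(T)} \geq P_i^{(T)} - O(1)$, with the constant independent of $T$.

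Let $t_0$ be the last round $t \le T$ at which $r_i^{(t-1)} < 1$; if no such round exists, then $r_i^{(T)} \geq 1$ and we are done. Since $r_i^{(t_0-1)} < 1$, the multiplier at round $t_0$ is $\msafe[,i]$. If $r_i^{(t_0)} < 1$, then $t_0 = T$ or round $t_0+1$ also plays $\msafe[,i]$, contradicting maximality of $t_0$. So $r_i^{(t_0)} \geq 1$, and because all later ratios stay at least $1$ by maximality, in fact $r_i^{(T)} \geq 1$ whenever $t_0 < T$. The only case that remains is that the deficit is created in the final round. A single round changes $V_i - P_i$ by at most $\max_{j}\sum_j \max\{0, p_j^{(t)} - v_{ij}\} \le kM\max_{i,j} v_{ij}$, which is a constant. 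This gives $V_i^{(T)} - P_i^{(T)} \geq -C$. Dividing by $T$ gives $\beta = C/T = O_T(1/T)$.

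The expected obstacle is the per-round deficit bound when bidding at $\msafe[,i]$, i.e., why the deficit cannot accumulate across many consecutive rounds. I would avoid needing a per-round bound by the last-crossing argument above, which charges at most one round of deficit. If that argument is off by one, I would fall back on the cruder fact that the deficit can only grow in rounds following a ratio $\geq 1$. Such rounds start from a nonnegative surplus, so the surplus is always at least the negative of a single round's maximum loss. This again gives $\beta = O(1/T)$.
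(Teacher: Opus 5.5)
Your treatment of \Cref{item:undomination} and \Cref{item:response} is fine and matches the paper. The gap is in the time-average RoS part.

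\textbf{The case $t_0 = T$ is not handled.} In that case $r_i^{(T-1)} < 1$, so the cumulative deficit already exists before round $T$; it was not ``created in the final round.'' Nothing in your argument bounds it, because the final stretch of rounds with ratio below $1$ can have any length. (Separately, in the other cases maximality of $t_0$ only gives $r_i^{(T-1)} \geq 1$, not $r_i^{(T)} \geq 1$. That off-by-one is harmless given your one-round loss bound.)

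In fact your argument never uses any property of $\msafe[,i]$ beyond its being the value of $\psi_i$ below $1$, so it cannot be correct. Suppose the multiplier played after a deficit were one at which $i$ wins an item of value $1$ against a competing bid of $1.5$. Then the deficit grows by $0.5$ every round and $\beta = \Omega(1)$. Your fallback (``the deficit can only grow in rounds following a ratio $\geq 1$'') is exactly the missing claim, asserted rather than proved.

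\textbf{The missing idea} is that every round played at $\msafe[,i]$ has nonnegative surplus; this is where pacing domination (\Cref{def:undomination}) enters. Since $\msafe[,i] = (1-\mu)\sup m_i$ with $\mu > 0$, it lies in some pacing-dominated interval $[1, m_i]$, whose witness $m_i'$ exceeds $\msafe[,i]$. Any fraction of an item that $i$ wins at $\msafe[,i]$ it would also win at $m_i'$, hence also at multiplier $1$. So the highest competing bid, which is the price, is at most $v_{ij}$.

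With this fact your fallback becomes a valid induction. If the surplus is nonnegative before a round, it is at least $-C$ after it. If the surplus is negative, the ratio is below $1$, so the bidder plays exactly $\msafe[,i]$ and the surplus does not drop.

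This is essentially the paper's proof. The paper takes $t'$ to be the start of the final run of $\msafe[,i]$ rounds and uses $r_i^{(t'-2)} \geq 1$. It then charges a single round of loss to round $t'-1$ and uses per-round nonnegativity from $t'$ onward.
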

A simple function $\psi_i$ that meets these preconditions is
\begin{equation}
    \label{eq:step}
    \psi_i(r) \defeq
    \begin{cases}
        \msafe[,i] & \text{if } r < 1, \\
        M & \text{otherwise}.
    \end{cases}
\end{equation}
Two continuous alternatives to the above function are defined below.

\noindent
\begin{minipage}{.48\textwidth}
  \begin{equation}
    \label{eq:poly}
    \psi_i(r) \defeq
    \begin{cases}
        \msafe[,i] & \text{if } r<1,\\
        \min(M, \msafe[,i] r^d) & \text{if } r \geq 1,
    \end{cases}
  \end{equation}
\end{minipage}%
\hfill
\begin{minipage}{.50\textwidth}
  \begin{equation}
    \label{eq:exp}
    \psi_i(r) \defeq
    \begin{cases}
        \msafe[,i] & \text{if } r<1,\\
        \min(M, \msafe[,i] e^{\eta (r-1)} ) & \text{if } r \geq 1.
    \end{cases}
  \end{equation}
\end{minipage}
\par\vspace{\belowdisplayskip}\noindent
Here, $d \in \mathbb{N}$ is the degree of the polynomial, while $\eta > 0$ can be thought of as the learning rate. The update induced by~\eqref{eq:exp} is closer to algorithms such as multiplicative weights update. \Cref{claim:psi} shows that all these update rules satisfy the preconditions of~\Cref{lemma:psi}. 

In this context, similarly to~\Cref{theorem:welfare-hard}, we incorporate an \emph{incumbent clause autobidder} for each clause whose interaction with the clause autobidder is defined as in~\Cref{tab:inc-edge}. By carefully selecting the parameters (\Cref{appendix:proofs-learning}), we are able to show the following hardness result.

\begin{restatable}{theorem}{welfarelearning}
    For any constant $\epsilon > 0$ and sufficiently small $\alpha > 0$, $\beta > 0$, and $\mu > 0$, computing an $(\alpha, \beta, \mu)$-approximate responsive learning sequence per~\Cref{def:admissible} whose welfare approximates the optimal one within a factor of $2e/(2e-1) - \epsilon$ is \NP-hard.
\end{restatable}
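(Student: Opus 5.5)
The plan is to reduce from the \textsc{MaxCover}-type CSP of \Cref{theorem:Feige} using the same instance as in the revenue theorem for learning dynamics: label assignment gadgets (scaled by $\eta$), one clause autobidder per clause $C$ that competes for a clause item it values $1+\epsilon$ against the literal autobidders $(i,\sigma)\in C$, which value it $1/M$. We add an incumbent clause autobidder per clause, interacting as in \Cref{tab:inc-edge}. The clause autobidder values the incumbent clause item $(1+\epsilon)/M$, and the incumbent values it $1$. The target quantities are as follows. In the complete case the optimal welfare is about $2|\mathcal{C}|$; a responsive sequence that stays at the equilibrium of \Cref{lemma:assignment,lemma:stealing} witnesses this, with constant multipliers so that the average RoS holds exactly. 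In the sound case every responsive learning sequence has welfare at most about $|\mathcal{C}|(1+(1-1/e)) = |\mathcal{C}|(2e-1)/e$. The ratio is $2e/(2e-1)$.

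Soundness is the main work. First I would invoke \Cref{cor:soundness} with $K=\lambda M|\Sigma|/\epsilon$ and $\lambda \gg d/\epsilon$. Then in all rounds $t\in\Tgood$, all but at most one multiplier per variable is at most $1+\epsilon$, and the exceptional rounds contribute only an $O(\epsilon)$ fraction of welfare. In a good round, define the labeling by taking the largest multiplier of each variable. Clause $C$ is \emph{covered} in round $t$ if some $(i,\sigma)\in C$ is that top label. An uncovered clause sees competing bids at most $(1+\epsilon)/M$ on its clause item, so its clause autobidder pays almost nothing while gaining value $1+\epsilon$. By \Cref{theorem:Feige}, in each good round at most a $(1-1/e+\delta)$ fraction of clauses are covered. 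Allowing a half-activated top label only raises prices, which does not help the incumbent, so this case can be absorbed.

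The crux is showing that, averaged over time, an uncovered clause's incumbent autobidder loses most of its item. I would argue per clause. Let $U\subseteq[T]$ be the good rounds in which $C$ is uncovered. If $|U|$ is a large fraction of $T$, the clause autobidder accumulates surplus $\Omega(1)$ per such round. I would then split $[T]$ into blocks of length $\alpha T$ and show that, wherever the cumulative ratio $r^{(\tau)}$ stays above $1+s$ over a block, \Cref{item:response} forces the average multiplier to be at least $\min\{M,(1+cs)\msafe\}$. Here $\msafe\approx (1-\mu)M/(1+\epsilon)$, since multipliers below roughly $M/(1+\epsilon)$ cannot win the incumbent item against the incumbent's bid $\ge 1$. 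This undomination bound is \Cref{item:undomination}. With $\mu$ small and $s$ a constant, the multiplier must be near $M$, so the clause autobidder bids about $(1+\epsilon)\ge 1$ on the incumbent item and wins it whenever the incumbent bids near $1$.

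The delicate accounting is this. If the clause autobidder wins the incumbent item, it pays roughly the incumbent's bid, about $1$ per round, which drains its surplus. The time-average RoS constraint ($\beta$ small) then caps how many rounds it can fail to steal. The result is that the total value the incumbents lose is at least the total surplus the clause autobidders collect, minus $O(\epsilon+\alpha+\mu+\beta)T$ per clause. Since uncovered clauses generate surplus about $1$ per uncovered round, incumbents capture only about the covered fraction. Summing gives average welfare at most $|\mathcal{C}|(1+\epsilon) + (1-1/e+\delta)|\mathcal{C}| + O(\epsilon')|\mathcal{C}|$, with the gadget welfare negligible by the scaling $\eta$. I expect converting the response condition into this surplus-exhaustion bound to be the hardest step, because \eqref{eq:reaction} only controls averages of multipliers over long intervals. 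My first attempt would be a potential argument: either $r^{(\tau)}$ drops to $1+s$ at some time, in which case cumulative payments are at least value$/(1+s)$, or it stays above $1+s$ on long intervals, in which case the block argument applies. Finally, I would choose $\lambda,K,M,\alpha,\beta,\mu,s,\delta$ so that all errors are below the given $\epsilon$.
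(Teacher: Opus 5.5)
Your overall route is the paper's. You use the same instance: label assignment gadgets scaled by $\eta$, plus clause and incumbent clause autobidders. Completeness comes from the stable equilibrium. You bound the clause-item payments $p_C^{(t)}$ in good rounds via \Cref{cor:soundness} and \Cref{theorem:Feige}. For the crux, your closing ``potential argument'' becomes the paper's once you combine its two branches. Let $t_1$ be the first time after which $r_C^{(\tau)}\ge 1+s$ holds forever. Since $r_C^{(t_1-1)}<1+s$, and the clause autobidder's payments are at most $\sum_\tau p_C^{(\tau)}+(1+\epsilon)\sum_\tau x_C^{(\tau)}$, you get $\sum_{\tau<t_1}x_C^{(\tau)}\gtrsim t_1-\sum_{\tau<t_1}p_C^{(\tau)}-O(s)T$. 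A single application of \Cref{item:response} to $[t_1,T]$ then covers the tail, so the block decomposition is unnecessary. Blocks of length exactly $\alpha T$ would not even trigger the condition, which needs $t_2-t_1>\alpha T$. Also, this pre-$t_1$ bound loses an $O(s)$ fraction, so $s$ cannot be an arbitrary constant: it must be $\Theta(\epsilon)$, tied to the construction parameter.

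The genuine gap is in the tail $[t_1,T]$. You argue that once the clause multiplier is near $M$, the clause autobidder bids about $1+\epsilon$ and so wins the incumbent item ``whenever the incumbent bids near $1$.'' Nothing controls the incumbent's multiplier, though: it can bid $M$ and take the whole item in every round. Two missing ingredients are needed.

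\begin{enumerate}
\item Choose $s$ with $(1+cs)\msafe[,C]\ge M$, where $\msafe[,C]=(1-\mu)M/(1+\epsilon)$. Then \eqref{eq:reaction} forces $m_C^{(\tau)}=M$ exactly for every $\tau\in[t_1+1,T]$. Consequently every unit the incumbent wins in the tail costs it at least $1+\epsilon$, a loss of $\epsilon$ per unit.
\item Use the \emph{incumbent's} time-average RoS constraint, together with a bound on the slack it can carry into the tail. That bound comes from \Cref{item:undomination} applied to the clause autobidder: its bid on the incumbent item is always at least $1-\mu$. Hence the incumbent gains at most $\mu$ per round before $t_1$, which gives $\epsilon\sum_{\tau>t_1}(1-x_C^{(\tau)})\le \mu t_1+\beta T$.
\end{enumerate}

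Without the slack bound, an incumbent that had earlier won the item cheaply could bankroll overbidding throughout the tail. Your sentence that ``the time-average RoS constraint caps how many rounds it can fail to steal'' then has no force. The inequality in the second item also shows the error term is $(\mu+\beta)/\epsilon$, not $O(\mu+\beta)$. So you need $\mu,\beta=O(\epsilon^2)$, as in the paper's choice $\mu=\epsilon^2$ and $\beta\le\epsilon^2$.
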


The key idea in the proof is that so long as the RoS constraint of each clause autobidder is close to binding, soundness is preserved. In the contrary case, if the clause autobidder has collected a large surplus over a long interval, \Cref{item:response} forces its multiplier to be high, so much so that it still inefficiently captures a large fraction of the item from the incumbent clause autobidder.
\section{Conclusions and future research}

In this paper, we established tight inapproximability results for welfare maximization in autobidding systems: while any autobidding equilibrium is guaranteed to attain an approximation factor of $2$ relative to the optimal liquid welfare, we showed that deciding whether there exists one that is even marginally better is \NP-hard. For revenue, we showed a stronger (sub)logarithmic inapproximability result. We further extended our scope to settings where the auctioneer possesses prior information concerning the underlying valuation profile, as well as to the dynamic setting where autobidders employ learning algorithms---which is prevalent in practice.

Our results suggest several interesting avenues for future research. Our lower bounds for learning dynamics leave a gap relative to existing upper bounds. As discussed, this reflects a clear tradeoff: the more permissive the class of learning algorithms one considers, the more challenging it becomes to establish the soundness of the reduction. Closing this gap is a natural next step. Moreover, our focus was on the second-price auction format. Extending our analysis to other formats, most notably to hybrids between first- and second-price auctions, is a promising direction for the future.


\bibliography{refs}

\clearpage
\appendix

\section{Omitted proofs}
\label{appendix:proofs}

This section contains the proofs omitted from the main body. 

\subsection{Proofs from Section~\ref{sec:prels}}
\label{appendix:proofs2}

For completeness, we begin with the simple proof of the fact that the price of anarchy (PoA) of autobidding equilibria is at most $2$.

\poatwo*

\begin{proof}
    Let $(\vec{m}, \vx)$ be an autobidding equilibrium. We partition the set of items in two sets, $S \subseteq [k]$ and $S' = [k] \setminus S$, so that $S$ contains all items for which an autobidder who values that item the most exclusively won that item. Let $j \in S'$. It then follows that $p_j \geq v_{i^*(j) j}$, where $i^*(j)$ is any autobidder who values item $j$ the most; this holds because the winner of that item must pay at least $m_{i^*} v_{i^* j} \geq v_{i^* j}$ (by~\Cref{def:autobidequil}). As a result, we have
    \begin{equation*}
        \welfare(\vx) - \opt \geq - \sum_{j \in S'} v_{i^* j} \geq - \sum_{j \in S'} p_j \geq - \sum_{i=1}^n \sum_{j=1}^k x_{i j} p_j \geq - \welfare(\vx),
    \end{equation*}
    where the last inequality follows from the RoS constraints. Rearranging yields the claim.
\end{proof}

\subsection{Proofs from Section~\ref{sec:inapprox-stable}}
\label{appendix:proofs-main}

We continue with the proofs from~\Cref{sec:inapprox-stable}.

\labass*

\begin{proof}
    Consider any autobidding equilibrium $(\vec{m}, \vec{x})$ of this instance. We divide the analysis into two cases. 
    
    First, suppose $\max_{\sigma \in \Sigma} m_{u,\sigma} < M$. It then follows that for each $j = 1, \dots, |\Sigma|$, the $j$th item is won exclusively by the $j$th autobidder. Moreover, if there is some autobidder whose multiplier is (strictly) smaller than the highest multiplier, that bidder's RoS constraint is not binding. This holds because that autobidder obtains a single item valued $M$ but pays less than $M$ for that item since $\max_{\sigma \in \Sigma} m_{u,\sigma} < M$. However, maximal pacing (\Cref{def:autobidequil}, \Cref{item:max-bid}) would then imply that the autobidder should have a multiplier equal to $M$, which is a contradiction since $\max_{\sigma \in \Sigma} m_{u,\sigma} < M$. We conclude that in the event where $\max_{\sigma \in \Sigma} m_{u,\sigma} < M$, it must be that all autobidders' multipliers are equal. In particular, given that the RoS constraints are binding, we have
    \begin{equation*}
    \eta( M |\Sigma| + K) = \eta (|\Sigma| m_u + K m_u) \implies m_u = \frac{M |\Sigma| + K }{ |\Sigma| + K} \approx 1
    \end{equation*}
    when $K \gg 1$. The first term in the left-hand side above is the total value obtained by the autobidders and the second term is the total payment. This establishes the claim.

    Otherwise, let $\max_{\sigma \in \Sigma} m_{u,\sigma} = M$. By the RoS constraint of the autobidder who obtains item $|\Sigma| + 1$, it follows that the second highest multiplier must be at most $1 + (M + |\Sigma|)/K \approx 1$ when $K \gg 1$. This holds because the total value that can be obtained by that autobidder is at most $M + |\Sigma| + K$ while the payment is at least the second highest multiplier multiplied by $K$.

    What remains to show is that, for any $\sigma \in \Sigma$, the set of multipliers $(m_{u,\sigma'})_{\sigma' \in \Sigma}$ in which $m_{u,\sigma} = M$ and $m_{u,\sigma'} = 1$ for all $\sigma' \neq \sigma$ is part of an autobidding equilibrium $(\vec{m}, \vec{x})$ for some allocation vector $\vx \in \R^{|\Sigma| \times (|\Sigma| + 1)}$. Indeed, any autobidder $\sigma' \neq \sigma$ can only obtain a positive fraction of item $\sigma'$, but that autobidder pays $M$ for that item since $m_{u,\sigma} = 1$. This means that the RoS constraint of $\sigma'$ is binding. Furthermore, $\sigma$ also satisfies maximal pacing since $m_{u,\sigma} = M$.
\end{proof}

To employ the label assignment gadget going forward, our reduction makes use of a simple fact formally justified below.

\consext*

\begin{proof}
    All items in $S$ are obtained exclusively by autobidders in $N$. Given that each autobidder $i \in N$ satisfies the RoS constraint in $I'$, we have $\sum_{j \in S'} x_{i j} v_{i j} \geq \sum_{j \in S'} x_{i j} p_{j}$. Since $x_{i j} = 0$ for all $j \in S' \setminus S$, it follows that $\sum_{j \in S} x_{i j} v_{i j} \geq \sum_{j \in S} x_{i j} p_{j}$, so the RoS constraint is satisfied. Similarly, if $m_i \neq M$, maximal pacing in $I'$ enforces $\sum_{j \in S'} x_{i j} v_{i j} = \sum_{j \in S'} x_{i j} p_{j}$, which in turn implies $\sum_{j \in S} x_{i j} v_{i j} = \sum_{j \in S} x_{i j} p_{j}$, so the RoS constraint is binding.
\end{proof}

With this notion of conservative extension at hand, we proceed by establishing the correctness of the \textsf{NAND} gadget.

\nandgate*

\begin{proof}
    We consider an instance with three autobidders---the two input autobidders and an output autobidder---and four items, as shown in~\Cref{tab:nand}; we will carry out the analysis under a scaling factor $\eta = 1$ since that does not affect the autobidding equilibrium. The output autobidder competes with the first input autobidder on a single item. The output autobidder values that item $1/2 + \epsilon$, whereas the input autobidder values it $1/(2M)$. The output autobidder also competes with the second input autobidder on a second item valued $1/2 + \epsilon$ and $1/(2M)$, respectively. Given that the multipliers are assumed to be in $[1, M]$, this guarantees that the input autobidder always obtains both items. Finally, there is a third item with a reserve price of $1 + 3 \epsilon$ and value $1$ for the output autobidder and a fourth item with reserve price $0.5$ and value $1/(2M)$, again for the output autobidder.

    We divide the analysis into two cases. First, let $m = M$ and $m' = M$, where $m$ is the multiplier of the first input autobidder and $m'$ is the multiplier of the second input autobidder. If the multiplier of the output autobidder $\widebar{m}$ was higher than $1 + 3 \epsilon$, its RoS constraint would be violated. This holds because it would obtain the first three items for a total price of $2 + 3 \epsilon$, and these are valued collectively at $2 + 2 \epsilon$ from the output autobidder.
    
    In the contrary case, suppose $m \in [1, 1 + \epsilon]$ or $m' \in [1, 1 + \epsilon]$. If $\widebar{m} < M$, the RoS constraint of the output autobidder is not binding, thereby violating maximal bidding.
\end{proof}

We next establish correctness of the \textsf{NOT} gadget.

\negation*

\begin{proof}
    We construct an instance with two autobidders---an input autobidder and an output autobidder---and three items, as shown in~\Cref{tab:neg}. We can again assume that the scaling factor satisfies $\eta = 1$. The input autobidder is competing with the output autobidder on a single item. The input autobidder values that item $1/M$ while the output autobidder values it $1 + \epsilon$. This guarantees that the item is always obtained by the output autobidder. Moreover, there is a second item with reserve price $1 + 2 \epsilon$ and value $1$ for the output autobidder and a third item with reserve price $1$ and value $1/M$, again for the output autobidder.

    Now, if $m = M$, the RoS constraint of the output autobidder implies $\widebar{m} \leq 1 + 2 \epsilon$. This holds because the output autobidder obtains the first item at a price of $1$, and so a multiplier higher than $1 + 2 \epsilon$ would also procure the second item for that autobidder; this is a violation of the RoS constraint since the first two items are valued at $2 + \epsilon$ and the total price paid is $2 + 2 \epsilon$. In the contrary case, let $m \in [1, 1 + \epsilon]$. Any multiplier $\widebar{m}$ smaller than $M$ would not bind the RoS constraint of the output autobidder, violating maximal pacing (\Cref{def:autobidequil}, \Cref{item:max-bid}).
\end{proof}

Combining these gadgets, we first argue about approximating the optimal revenue.

\revenuehard*

\begin{proof}
    If there is a labeling $\ell : V \to \Sigma$ that satisfies all the edge constraints, \eqref{eq:revenue-lb} implies that $\revenue \geq |E|$ (by completeness of the reduction). On the other hand, if any labeling $\ell$ satisfies at most a $\delta$ fraction of the edge constraints, \eqref{eq:revenue-ub} implies that $\revenue \leq 3 \delta |E|$ (soundness of the reduction) when $M \geq (1 + \epsilon)/\delta$ and
\begin{equation}
    \label{eq:eta-bound}
    \eta \leq \frac{\delta |E|}{|V| ( |\Sigma| M + K ) + 5 |E| |\Sigma|}.
\end{equation}
In more detail, we define $z_{u, \sigma} = \mathbbm{1} \{m_{u,\sigma} = M \} $ for each $u \in V$ and $\sigma \in \Sigma$. By definition, $z_{u, \sigma} \in \{0, 1\}$. Furthermore, by~\Cref{lemma:assignment}, $\sum_{\sigma \in \Sigma } z_{u, \sigma} \leq 1$ in any autobidding equilibrium. If $z_{e, \sigma} = \mathbbm{1} \{ m_{e, \sigma} = M \}$, \eqref{eq:edge-sigma} implies that $z_{e, \sigma} = 1$ if and only if $z_{u, \sigma} = 1$ and $z_{v, \Pi_e(\sigma)} = 1$; in other words, $z_{e, \sigma} = z_{u, \sigma} \land z_{v,\Pi_e(\sigma)}$. We conclude that, in equilibrium,
\[
    \sum_{e \in E} \bigvee_{\sigma \in \Sigma} ( m_{e, \sigma} = M) = \sum_{e \in E} \bigvee_{\sigma \in \Sigma} z_{e, \sigma} = \sum_{e \in E} \bigvee_{\sigma \in \Sigma} (z_{u, \sigma} \land z_{v,\Pi_e(\sigma)})
\]
subject to $\sum_{\sigma \in \Sigma } z_{u, \sigma} \leq 1$ for all $u \in V$.

In conclusion, we can set the parameters in the construction as follows.
\begin{itemize}
    \item $\epsilon > 0$ and $\delta > 0$ are any sufficiently small absolute constants;
    \item $\Sigma = \Sigma(\delta)$ per~\Cref{theorem:labelcover}.
    \item $M = (1 + \epsilon)/\delta$;
    \item $K = 6 M |\Sigma|/\epsilon$; and
    \item $\eta > 0$ per~\eqref{eq:eta-bound}.
\end{itemize}
\end{proof}

To extend our reduction to welfare, we first prove a crucial lemma concerning the interaction of the edge autobidder with the incumbent edge autobidder.

\stealing*

\begin{proof}
    Let us assume first that $m_{e, \sigma} = M$ for some label $\sigma \in \Sigma$. Suppose further that the edge autobidder obtains a nonzero fraction of the item in question. This would imply that the payment for that item would be at least $1$. For the edge autobidder to satisfy its RoS constraint, it must be the case that it is allocated at most an $2 \epsilon$ fraction of the item. Indeed, if $x$ is the fraction, the RoS constraint yields 
    \[
      1+ \epsilon + \frac{x(1+\epsilon)}{M} \geq 1 + x  \implies x \leq 2 \epsilon,
    \]
    where we used the fact that $m_{e, \sigma} = M$ for some label $\sigma \in \Sigma$.
    
    We now analyze the contrary case, that is, $m_{e, \sigma} \leq 1 + \epsilon$ for all $\sigma \in \Sigma$. We first claim that the edge autobidder bids at least as high as the incumbent edge autobidder for the item. If not, the incumbent edge autobidder would exclusively win the item, which would in turn mean that the edge autobidder would have to bid maximally since its RoS constraint is not binding. But this is only possible if the payment for the item is strictly more than $1$, violating the RoS constraint of the incumbent edge autobidder. We have thus established that the edge autobidder bids at least as high as the incumbent edge autobidder. If the edge autobidder bids strictly higher than the incumbent, it exclusively wins the item. So it remains to analyze the case where both bid the same amount, and that amount is equal to $1$. In that case, since the edge autobidder is not bidding maximally, it follows that its RoS constraint is binding. If the edge autobidder is winning an $x$ fraction of the item, we have
    \[
        1 + \epsilon + x \frac{1 + \epsilon}{M} = \frac{1 + \epsilon}{M} + x,
    \]
    which is impossible. In other words, in an autobidding equilibrium, the edge autobidder must be bidding higher than the incumbent edge autobidder.
\end{proof}

We continue with the proofs from~\Cref{sec:MLadvise}.

\gammarevenue*

\begin{proof}
    As before, if there is a labeling $\ell : V \to \Sigma$ that satisfies all the edge constraints, the completeness of the reduction guarantees that $\revenue \geq |E|$. We now analyze the contrary case, where any labeling $\ell$ satisfies at most a $\delta$ fraction of the edge constraints. It then follows that for any autobidding equilibrium $(\vm, \vx)$,
    \[
        \revenue(\vm) \leq \eta (|V| ( |\Sigma| M + K ) + 5 |E| |\Sigma|) + \gamma |E| + (1-\gamma) \sum_{e \in E} \bigvee_{\sigma \in \Sigma} (m_{e, \sigma} = M) + \frac{(1+\epsilon)|E|}{M}.
    \]
    In particular, the price paid by the edge autobidder $e \in E$ is either at most $\gamma$ (assuming $\gamma \geq \frac{1+\epsilon}{M}$) or $1$, and the latter only happens if there exists $\sigma \in \Sigma$ such that $m_{e, \sigma} = M$. So, if at most a $\delta$ fraction of the edge constraints are satisfied,
    \[
        \revenue(\vm) \leq \eta (|V| ( |\Sigma| M + K ) + 5 |E| |\Sigma|) + \gamma |E| + (1-\gamma) \delta |E| + \frac{(1+\epsilon)|E|}{M}.
    \]
    We set the parameters as follows.
\begin{itemize}
    \item $\epsilon > 0$ and $\delta > 0$ are any sufficiently small absolute constants;
    \item $\Sigma = \Sigma(\delta)$ per~\Cref{theorem:labelcover}.
    \item $M = (1 + \epsilon)/\delta$;
    \item $K = 6 M |\Sigma|/\epsilon$; and
    \item $\eta > 0$ per~\eqref{eq:eta-bound}.
\end{itemize}
    As a result,
    \[
        \revenue(\vm) \leq 3 \delta |E| + \gamma (1 - \delta) |E| \leq 3 \delta |E| + \gamma |E|,
    \]
    and the claim follows.
\end{proof}

We next adapt~\Cref{lemma:stealing} in the presence of reserve prices set to $\gamma$-approximate signals.

\refinedstealing*

\begin{proof}
    The first part of the analysis of~\Cref{lemma:stealing}, when $m_e^{(\sigma)}$ for some $\sigma \in \Sigma$, remains the same. So we analyze the case where $m_e^{(\sigma)} \leq 1 + \epsilon$ for all $\sigma \in \Sigma$. Specifically, it suffices to analyze the case where both autobidders bid $1$ for the item. Since the edge autobidder is not bidding maximally, it follows that its RoS constraint must be binding. In particular, if the edge autobidder obtains an $x$ fraction of that item, we have
    \begin{equation*}
        1 + \epsilon - \gamma (1 + \epsilon) + x \frac{1 + \epsilon}{M} - x = 0,
    \end{equation*}
    which implies that $x \geq 1 - \gamma$, as claimed.
\end{proof}

\gammawelfare*

\begin{proof}
    As before, \Cref{lemma:refined-stealing} shows that there exists an autobidding equilibrium $(\vm, \vx)$ such that
    \[
        \welfare(\vx) \geq (1 +\epsilon) |E| + ( 1 - 2 \epsilon) \sum_{e \in E}  \bigvee_{\sigma \in \Sigma} ( m_{e, \sigma} = M ).
    \]
    Turning to the soundness of the reduction, we can again use~\Cref{lemma:refined-stealing} to obtain
    \[
        \welfare \leq \eta \left( |V| ( |\Sigma| M + K ) + 5 |E| |\Sigma| \right) + (1+\epsilon) |E| + \sum_{e \in E}  \bigvee_{\sigma \in \Sigma} ( m_{e, \sigma} = M ) + \gamma |E| + \frac{|E| ( 1 + \epsilon) }{M}.
    \]
    As a result, if there exists a labeling that satisfies all the edge constraints, there exists an autobidding equilibrium $(\vm, \vx)$ such that
    \[
        \welfare(\vx) \geq (2 - \epsilon) |E|.
    \]
    Setting the parameters as before, if at most $\delta$ of the edge constraints can be satisfied, then any autobidding equilibrium $(\vm, \vx)$ satisfies
    \[
        \welfare(\vx) \leq 2 \delta |E| + \gamma |E| + (1+\epsilon)|E|,
    \]
    and the claim follows.
\end{proof}

\subsection{Proofs from Section~\ref{sec:learning}}
\label{appendix:proofs-learning}

Finally, we conclude with the proofs from~\Cref{sec:learning}, starting from~\Cref{lemma:learning-assignment}.

\learnass*

\begin{proof}
    Let $\underline{m}_i^{(t)}$ be the second highest multiplier in each round $t \in [T]$. The total value obtained by the autobidders in each round is at most $M|\Sigma| + K$, while the total payment in each round is at least $K \underline{m}_i^{(t)}$. Combining the RoS constraints of all autobidders,
    \begin{equation}
        \label{eq:av-RoS}
        \frac{1}{T} \sum_{t=1}^T \underline{m}_i^{(t)} \leq \frac{M|\Sigma|}{K} + 1 + \frac{\beta |\Sigma|}{K} \leq 1 + 2\frac{M |\Sigma|}{K},
    \end{equation}
    where we used the bound on $\beta$. Using the definition of $\Tgood[, i]$ and combining with~\eqref{eq:av-RoS},
    \begin{align*}
        \left( \lambda \frac{M ( |\Sigma| - 1 ) }{K} + 1 \right) \left( 1 - \frac{|\Tgood[, i]|}{T} \right) + \frac{|\Tgood[, i]|}{T} &\leq \frac{1}{T} \sum_{t \in [T] \setminus \Tgood[, i]} \underline{m}_i^{(t)} + \frac{1}{T} \sum_{t \in \Tgood[, i]} \underline{m}_i^{(t)} \\
        &\leq 2 \frac{M |\Sigma|}{K} + 1. 
    \end{align*}
    This implies
    \[
        |\Tgood[, i]| \geq T \frac{\lambda-2}{\lambda} .
    \]
\end{proof}

We use this soundness property to argue about revenue under any second-price sequence satisfying the time-average RoS constraints.

\revenuelearning*

\begin{proof}
The average revenue $\frac{1}{T} \sum_{t=1}^T \revenue(\vm^{(t)})$ can be upper bounded by
\begin{equation}
    \label{eq:avg-revenue-ub}
    \eta ( d ( |\Sigma| M + K )) + \left( 1 - \frac{|\Tgood|}{T} \right) |\mathcal{C}| + \frac{1}{T} \sum_{t \in \Tgood} \sum_{C \in \mathcal{C}}  \bigvee_{(i, \sigma) \in C} (m_{i, \sigma}^{(t)} \geq 2) + \frac{2|\mathcal{C}|}{M}.
\end{equation}
Let $\opt(\mathcal{C})$ denote the maximum number of clauses that can be satisfied in the CSP instance. By the soundness property of $\Tgood$ (\Cref{cor:soundness}), it follows that for each $t \in \Tgood$,
\[
    \sum_{C \in \mathcal{C}}  \bigvee_{(i, \sigma) \in C} (m_{i, \sigma}^{(t)} \geq 2) \leq \opt(\mathcal{C}).
\]
Combining with the bound on $|\Tgood|$ established in~\Cref{cor:soundness}, we can upper bound~\eqref{eq:avg-revenue-ub} by
\[
    \eta ( d ( |\Sigma| M + K )) + \frac{2d |\mathcal{C}| }{\lambda} + \frac{2 |\mathcal{C}|}{M} + \opt(\mathcal{C}).
\]
We set the parameters as follows.
\begin{itemize}
    \item $\epsilon > 0$ and $\delta > 0$ are sufficiently small constants;
    \item $\Sigma$ is determined by the original CSP;
    \item $M = 2/\delta$;
    \item $\lambda = 2d/\delta$;
    \item $K = \lambda M (|\Sigma| - 1)/\epsilon$;
    \item $\eta = \delta |\mathcal{C}|/(d (|\Sigma| M + K) )$; and
    \item $\beta = \eta M$. (Compared to~\Cref{lemma:learning-assignment}, we have rescaled the values in the label assignment gadget by $\eta$, which is reflected in the value of $\beta$.)
\end{itemize}
This choice guarantees
\[
    \frac{1}{T} \sum_{t=1}^T \revenue(\vm^{(t)}) \leq 3 \delta |\mathcal{C}| + \opt (\mathcal{C})
\]
for any second-price sequence satisfying the time-averate RoS constraints. Moreover, by completeness of the reduction, we also know that there is an autobidding equilibrium $(\vm, \vx)$---which in particular satisfies the preconditions of~\Cref{def:admissible}---such that
\[
    \revenue(\vm) \geq \opt(\mathcal{C}).
\]
Combining with~\Cref{theorem:Feige}, the claim follows.
\end{proof}

\psirule*

\begin{proof}
The fact that each autobidder $i \in [n]$ is $\mu$-pacing undominated follows by definition. We next verify~\Cref{item:response} of~\Cref{def:learning}. Consider any $s > 0$ and an interval $[t_1, t_2]$ satisfying the conditions of~\Cref{item:response}. Since $m_i^{(\tau)} = r_i^{(\tau-1)}$, it follows that $m_i^{(\tau)} \geq \psi(1 + s)$ for all $\tau \in [t_1 + 1, t_2]$ since $\psi$ is nondecreasing. As a result, $m_i^{(\tau)} \geq \min \{M, (1 + c s)\msafe \}$ for all $\tau \in [t_1 + 1, t_2]$, and~\eqref{eq:reaction} follows.

We finally argue that the update rule satisfies the time-average RoS constraint. Let $t' \in [T]$ be the earliest time such that $m_i^{(\tau)} = \msafe$ for all $\tau \geq t'$. If $t' \in \{1, 2\}$ the claim follows trivially, so let us assume $t' \geq 3$. Since $t'$ is the earliest such time, it follows that $m_i^{(t' - 1)} > \msafe$, which in turn implies $r_i^{(t'-2)} \geq 1 \iff \sum_{\tau = 1}^{t' - 2} \sum_{j=1}^k ( x_{i j}^{(\tau)} v_{i j} - x_{i j}^{(\tau)} p_j^{(\tau)} ) \geq 0$. Moreover, we also have $\sum_{j=1}^k ( x_{i j}^{(\tau)} v_{i j} - x_{i j}^{(\tau)} p_j^{(\tau)} ) \geq 0$ for all $\tau \geq t'$. This holds because for all these rounds the multiplier $\msafe$ was selected, which belongs to a pacing dominated interval. This implies that the price of any fraction of an item obtained under $\msafe$ is at most its value for it, since the item would also be won under a multiplier $1$. Combining,
\begin{align*}
    \frac{1}{T} \sum_{t = 1}^{T} \sum_{j=1}^k ( x_{i j}^{(t)} v_{i j} - x_{i j}^{(t)} p_j^{(t)} ) &= \frac{1}{T} \sum_{j=1}^k ( x_{i j}^{(t' - 1)} v_{i j} - x_{i j}^{(t' - 1)} p_j^{(t - 1)} )  +  \frac{1}{T} \sum_{t \neq t'-1 }^{T} \sum_{j=1}^k ( x_{i j}^{(t)} v_{i j} - x_{i j}^{(t)} p_j^{(t)} ) \\
    &\geq \frac{1}{T} \sum_{j=1}^k ( x_{i j}^{(t' - 1)} v_{i j} - x_{i j}^{(t' - 1)} p_j^{(t - 1)} ).
\end{align*}
So, the time-average RoS constraint is satisfied with an $O_T(1/T)$ additive slackness, as claimed.
\end{proof}

\begin{claim}[Update rules that produce responsive learning sequences]
    \label{claim:psi}
    Each of the functions introduced in~\eqref{eq:step}, \eqref{eq:poly}, and~\eqref{eq:exp} satisfies the preconditions of~\Cref{lemma:psi}.
\end{claim}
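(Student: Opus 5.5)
We verify the two preconditions of~\Cref{lemma:psi} separately for each of the three rules~\eqref{eq:step}, \eqref{eq:poly}, and~\eqref{eq:exp}. The first precondition, $\psi_i(s') = \msafe[,i]$ for $s' < 1$, holds by the first branch of each definition. For monotonicity, the step function jumps from $\msafe[,i]$ to $M \geq \msafe[,i]$ at $r = 1$. The polynomial and exponential rules are continuous at $r = 1$, since $\msafe[,i] \cdot 1^d = \msafe[,i] e^{0} = \msafe[,i]$. For $r \geq 1$ they are nondecreasing because $r \mapsto r^d$ and $r \mapsto e^{\eta(r-1)}$ are nondecreasing, and taking $\min$ with the constant $M$ preserves monotonicity. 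For the range, we use $1 \leq \msafe[,i] \leq M$ (the safe multiplier is at most $(1-\mu)$ times a multiplier in $[1,M]$, and we assume the pacing-dominated supremum is at least $1/(1-\mu)$ or else clip to $1$), so every rule maps into $[1, M]$.

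The substantive step is the growth condition $\psi_i(1+s) \geq \min\{M, (1+cs)\msafe[,i]\}$ for all $s > 0$, with a constant $c$ independent of $s$. For~\eqref{eq:step}, $\psi_i(1+s) = M$, so it holds with any $c > 0$, say $c = 1$. For~\eqref{eq:poly}, Bernoulli's inequality $(1+s)^d \geq 1 + ds$ for $d \in \N$ gives $\psi_i(1+s) = \min\{M, \msafe[,i](1+s)^d\} \geq \min\{M, (1+ds)\msafe[,i]\}$, so $c = d$ works. For~\eqref{eq:exp}, the inequality $e^{x} \geq 1 + x$ gives $e^{\eta s} \geq 1 + \eta s$, so $c = \eta$ works. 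In both cases the outer $\min$ with $M$ passes through because $\min\{M, \cdot\}$ is monotone.

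The only step requiring any care is ensuring that $c$ is a genuine constant independent of $s$ and $i$; both elementary inequalities hold for all $s > 0$. Everything else is a direct check of definitions.
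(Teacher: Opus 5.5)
Your proof is correct and matches the paper's. You use the same constants, $c = 1$ for the step rule, $c = d$ via Bernoulli's inequality for the polynomial rule, and $c = \eta$ via $e^{x} \geq 1 + x$ for the exponential rule, and your extra checks of monotonicity and of the range $[1, M]$ cover details the paper leaves implicit.
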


\begin{proof}
    The step function~\eqref{eq:step} satisfies the conditions of~\Cref{lemma:psi} with constant $c = 1$. The polynomial function~\eqref{eq:poly} satisfies the conditions with constants $c = d$ since $(1+s)^d \geq (1 + s d)$ (Bernoulli's inequality). Finally, the exponential function~\eqref{eq:exp} satisfies the conditions with constants $c = \eta$ since $e^{\eta s} \geq 1 + \eta s$.
\end{proof}

We conclude by extending our reduction to welfare under responsive learning sequences.

\welfarelearning*

\begin{proof}
We observe that~\Cref{item:undomination} of~\Cref{def:learning} implies that the clause autobidder always bids at least $1 - \mu$ for the corresponding incumbent clause item. In particular, $\msafe[,C] = (1 -\mu) \frac{M}{1+\epsilon}$. Let $x_C^{(\tau)}$ be the fraction of the incumbent clause item won by the clause autobidder $C \in \mathcal{C}$ at time $\tau$ and $p_C^{(\tau)}$ be the price paid by the clause autobidder for the corresponding clause item. The total value obtained by the clause autobidder after $t \in [T]$ rounds reads
\begin{equation}
    \label{eq:total-value}
    (1 + \epsilon) t + \frac{1 + \epsilon}{M} \sum_{\tau=1}^{t} x^{(\tau)}_C 
\end{equation}
while the total payment is upper bounded by
\begin{equation}
    \label{eq:totalpay}
    \sum_{\tau=1}^{t} p_C^{(\tau)} + ( 1+\epsilon) \sum_{\tau=1}^{t} x_C^{(\tau)}.
\end{equation}
This holds because the clause autobidder never bids higher than $1+\epsilon$ for the incumbent clause item, so when it wins a positive fraction of that item its price is at most $1+\epsilon$. We will set the parameters of the reduction as follows.
\begin{itemize}
    \item $\epsilon > 0$ is an arbitrarily small constant;
    \item $\mu = \epsilon^2$;
    \item $s > 0$ is such that $(1 + c s) = (1+\epsilon)/(1 - \mu)$ (given that $\mu = \epsilon^2$, $s = \Theta(\epsilon)$);
    \item $M = 1/\epsilon$;
    \item $\alpha = \epsilon$; 
    \item $\lambda = 2d/\epsilon$;
    \item $K = \lambda M (|\Sigma| - 1)/\epsilon$;
    \item $\eta = \epsilon |\mathcal{C}|/(d (|\Sigma| M + K) )$; and
    \item $\beta = \min \{ \epsilon^2, \eta M\}$.
\end{itemize}
Our goal is to prove that
\begin{equation}
    \label{eq:goal}
    \frac{1}{T} \sum_{t=1}^T x_C^{(t)} \geq 1 - \frac{1}{T} \sum_{t=1}^T  p_C^{(t)} - \Theta(\epsilon).
\end{equation}
Suppose that $t_1$ was the earliest time such that $r_C^{(\tau)} \geq 1 +s$ for all $\tau \geq t_1$; we can assume that such a time exists and is different from $1$ (the contrary case can be treated similarly). Since $t_1$ is the earliest such time, $r_C^{(t_1 - 1)} < 1 + s$. As a result, combining~\eqref{eq:total-value} and~\eqref{eq:totalpay},
\begin{equation}
    \label{eq:1+s}
    (1 + \epsilon) (t_1 - 1) + \frac{1 + \epsilon}{M} \sum_{\tau=1}^{t_1 - 1} x^{(\tau)}_C < (1+s) \left( \sum_{\tau=1}^{t_1 - 1} p_C^{(\tau)} + ( 1+\epsilon) \sum_{\tau=1}^{t_1 - 1} x_C^{(\tau)} \right).
\end{equation}
We now analyze the interval $[t_1, T]$. By construction, we have $r_C^{(\tau)} \geq 1 + s$ for all $\tau$ in this interval. If $T - t_1 \leq \alpha T$, then again~\eqref{eq:goal} follows since the interval $[t_1, T]$ has negligible impact on the average. Specifically,
\[
    (1 + \epsilon) + \frac{1 + \epsilon}{M} \frac{1}{T} \sum_{t=1}^{T} x^{(t)}_C < (1+s) \left( \frac{1}{T} \sum_{t=1}^{T} p_C^{(t)} + ( 1+\epsilon) \frac{1}{T} \sum_{t=1}^{T} x_C^{(t)} \right) + 2 \alpha T
\]
since
\[
    (1 + \epsilon) (T - t_1+1) + \frac{1 + \epsilon}{M} \sum_{\tau=t_1}^{T} x^{(\tau)}_C - (1+s) \left( \sum_{\tau=t_1}^{T} p_C^{(\tau)} - ( 1+\epsilon)  \sum_{\tau=t_1}^{T} x_C^{(\tau)} \right) \leq 2 \alpha.
\]
In the contrary case, we can use~\eqref{eq:reaction} to obtain
\[
    \frac{1}{T - t_1} \sum_{\tau = t_1+1}^T m_C^{(\tau)} \geq \min\{M, (1 + c s) \msafe[,C] \} \geq M,
\]
by our choice of $s$, where $m_C^{(\tau)}$ denotes the multiplier of the clause autobidder at time $\tau$. In particular, this means that $m_C^{(\tau)} \geq M$ for all $\tau \in [t_1 + 1, T]$. Now, the incumbent clause autobidder obtains a $1 - x_C^{(t)}$ fraction of the incumbent clause item at each round $t$. The RoS constraint of the incumbent clause autobidder yields
\[
    \mu t_1 + \sum_{\tau=t_1 + 1}^T (1 - x_C^{(t)}) \geq \sum_{\tau = t_1+1}^T (1 - x_C^{(t)}) \left( \frac{1+\epsilon}{M} m_C^{(t)} \right) - \beta T.
\]
Here we used the fact that in the first $t_1$ rounds the incumbent clause autobidder pays at least $1 - \mu$ for the incumbent clause item. Using the fact that $m_C^{(\tau)} \geq M$ for all $\tau \in [t_1 + 1, T]$, we get
\[
    \sum_{\tau = t_1 + 1}^T (1 - x_C^{(t)}) \leq \frac{1}{\epsilon} ( \mu t_1 + \beta T).
\]
By our choice of parameters,
\[
    \sum_{\tau = t_1 + 1}^T (1 - x_C^{(t)}) \leq 2 \epsilon T.
\]
Combining with~\eqref{eq:1+s}, \eqref{eq:goal} follows. Now, if $t \in \Tgood$, \Cref{cor:soundness} implies that $p_C^{(t)} \leq 2/M$ or at most $1$ only if there exists $(i, \sigma) \in C$ such that $m_{i, \sigma}^{(t)} \geq 2$. That is, we have shown that for any clause autobidder $C \in \mathcal{C}$,
\[
    \frac{1}{T} \sum_{t=1}^T x_C^{(t)} \geq 1 - \frac{1}{T} \sum_{t \in \Tgood} \bigvee_{(i, \sigma) \in C} (m_{i, \sigma}^{(t)} \geq 2) - \Theta ( \epsilon ),
\]
where we also used the fact that $\lambda = 2d/\epsilon$ together with~\Cref{cor:soundness} that guarantees $|\Tgood| \geq T \left(1 - \frac{2d}{\lambda} \right)$. As a result,
\begin{align*}
    \frac{1}{T} \sum_{t=1}^T \welfare(\vx^{(t)}) &\leq \eta ( d (|\Sigma| M + K ) ) + (1+\epsilon) |\mathcal{C}| + \frac{1}{T} \sum_{t=1}^T (1-x_C^{(t)}) + \frac{1}{T} \sum_{t=1}^T x_C^{(t)} \frac{1+\epsilon}{M} \\
    &\leq |\mathcal{C}| + \frac{1}{T} \sum_{t \in \Tgood} \sum_{C \in \mathcal{C}}  \bigvee_{(i, \sigma) \in C} (m_{i, \sigma}^{(t)} \geq 2) + |\mathcal{C}| \Theta(\epsilon),
\end{align*}
and the claim follows from~\Cref{theorem:Feige}.
\end{proof}
\section{Simulating reserve prices}
\label{appendix:reserveprices}

\citet{Li24:Vulnerabilities} gave a basic gadget with two auxiliary autobidders and two auxiliary items that simulates the presence of reserve prices (\Cref{tab:reserve}). For completeness, we include the simple proof below.\footnote{If one incorporates these auxiliary autobidders in the reduction of~\Cref{sec:inapprox-stable}, the welfare and revenue changes by an additive factor $O(\eta)$ since they only affect the low-stake gadgets, so all of the consequences are preserved.}

\begin{table}[t]
    \centering
    \caption{The reserve price gadget of~\citet{Li24:Vulnerabilities}. }
    \begin{tabular}{c c c c}
     & Item $1$ & Item $2$ & Target item \\ 
     Autobidder 1 & $0$ & $2r$ & $r$ \\
     Autobidder 2 & $r/2$ & $r$ & $r/2$ \\
     Target autobidder & $0$ & $0$ & $v$
\end{tabular}
    \label{tab:reserve}
\end{table}

\begin{lemma}
    In any autobidding equilibrium of the instance described in~\Cref{tab:reserve}, the multiplier of Autobidder 1 is 1 and the multiplier of Autobidder 2 is 2. This is equivalent to having a reserve price of $r$ attached to the target item.
\end{lemma}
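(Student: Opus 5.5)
The plan is a direct case analysis on the ratio $m_2/m_1$ of the two auxiliary multipliers, writing $m_1,m_2$ for the multipliers of Autobidders 1 and 2 and $m_t$ for that of the target autobidder. I assume $M \geq 2$, so the claimed profile is admissible. The structural fact that drives everything is that, on both Item 2 and the target item, the ratio of Autobidder 1's bid to Autobidder 2's bid is the same, namely $2m_1/m_2$. Indeed, on Item 2 the bids are $2r m_1$ versus $r m_2$, and on the target item they are $r m_1$ versus $r m_2/2$. The other ingredient is Item 1, which only Autobidder 2 values. It always wins Item 1 at price $0$, which gives it a guaranteed surplus of $r/2$.

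\emph{Case $m_2 > 2m_1$.} Autobidder 2 beats Autobidder 1 on Item 2, so it wins Item 2 outright. On Item 2 it pays at least $2r m_1 \geq 2r$ for value $r$, a loss of at least $r$. On any fraction of the target item it wins, it pays at least $r m_1 \geq r$ for value $r/2$. Since the only surplus available is $r/2$ from Item 1, its RoS constraint (\Cref{def:autobidequil}, \Cref{item:RoS}) fails.

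\emph{Case $m_2 < 2m_1$.} Now Autobidder 2 wins no fraction of Item 2 or of the target item, so it holds only Item 1 with strict surplus $r/2$. Maximal pacing (\Cref{item:max-bid}) then forces $m_2 = M$, hence $m_1 > M/2$. Autobidder 1 wins all of Item 2 at price $Mr$ against value $2r$. On any target fraction it pays at least $\max\{m_1 r, \ldots\} \geq r$ for value $r$. For $M > 2$ this violates Autobidder 1's RoS constraint. For $M = 2$ the inequality $m_1 > M/2 = 1$ still puts its target price above its value while Item 2 merely breaks even, so the RoS constraint fails again.

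\emph{Case $m_2 = 2m_1$.} This leaves the tie. Let $y$ be Autobidder 2's fraction of Item 2 and $z$ its fraction of the target item. Autobidder 1 pays $2r m_1$ per unit of Item 2, which it values at $2r$. It pays at least $r m_1$ per unit of the target item, which it values at $r$. So every fraction it wins has nonpositive surplus, and strictly negative surplus if $m_1 > 1$. Suppose $m_1 > 1$. Then Autobidder 1 must win nothing, so $y = 1$. Autobidder 2's RoS then reads $r/2 + r - 2r m_1 \geq 0$, that is, $m_1 \leq 3/4$, a contradiction. Hence $m_1 = 1$ and $m_2 = 2$.

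Finally, I will check that this profile is indeed part of an equilibrium. Autobidder 1 breaks even on anything it wins, so its RoS constraint binds. For Autobidder 2, I pick fractions so that its losses on Item 2 and the target item, which are $r$ and $r/2$ per unit respectively, exactly absorb its $r/2$ surplus from Item 1. Then its RoS constraint also binds, so maximal pacing holds for both even though $m_2 < M$ is allowed. In every equilibrium, both auxiliary bids on the target item equal $r$, while neither auxiliary bidder's multiplier moves. So the target autobidder faces exactly a competing bid of $r$: it wins at price $r$ whenever $m_t v > r$, and loses otherwise. This is the reserve-price behaviour.

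The step I expect to be most delicate is the tie case, together with the boundary subcase $m_2 = M$ when $M = 2$. Both depend on handling fractional tie-breaking on two items at once. They also depend on showing that the target autobidder's own bid can only raise the prices the auxiliaries pay, never lower them, so the inequalities above survive whatever the target autobidder does.
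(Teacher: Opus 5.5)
Your trichotomy is correct for $M>2$, but the boundary subcase $M=2$ inside the case $m_2<2m_1$ is wrong, and it cannot be repaired.

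\textbf{The $M=2$ subcase.} You lower-bound Autobidder 1's price on the target item by $\max\{m_1 r,\ldots\}$. A second-price winner pays the highest \emph{other} bid, which here is $\max\{m_2 r/2,\, m_t v\}=\max\{Mr/2,\, m_t v\}$; Autobidder 1's own bid $m_1 r$ does not enter.
\begin{itemize}
\item For $M>2$ this price is still at least $Mr/2>r$. Together with the strict loss $(M-2)r$ on Item 2, your contradiction survives.
\item For $M=2$ the price can be exactly $r$, so Autobidder 1 simply breaks even on both items.
\end{itemize}
Concretely, take $M=2$, $v\le r$, $m_t=1$, $m_1=3/2$, $m_2=2$.
\begin{itemize}
\item Autobidder 2 takes Item 1 at price $0$. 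Its slack RoS is allowed because $m_2=M$.
\item Autobidder 1 takes Item 2 at price $2r$ and the target item at price $\max\{r,v\}=r$. Its value is $3r$ against payment $3r$, so its RoS binds.
\item The target autobidder wins nothing, so its RoS holds with equality $0=0$.
\end{itemize}
This is an autobidding equilibrium with $m_1\neq 1$, so the lemma is false at $M=2$.

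The right standing assumption is $M>2$, not $M\ge 2$. The paper's proof uses it implicitly too: its step ``Autobidder 2 would necessarily select a multiplier strictly higher than $2$, otherwise its RoS constraint is again not binding'' requires ruling out $m_2=M=2$. The assumption also holds in the reduction, where $M$ is large.

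\textbf{Comparison with the paper.} With $M>2$, your proof is correct and is organized differently from the paper's, which argues sequentially:
\begin{itemize}
\item first, $m_2\ge 2$ by maximal pacing;
\item then, if $m_1>1$, maximal pacing forces $m_2>2$, so both prices exceed Autobidder 1's values, it wins nothing, and Autobidder 2 wins Item 2 outright at a loss;
\item finally, once $m_1=1$, any $m_2>2$ again makes Autobidder 2 win Item 2 at a loss.
\end{itemize}
Your observation that the bid ratio $2m_1/m_2$ is the same on Item 2 and the target item makes the case split visibly exhaustive. You also do two things the paper leaves implicit: you track the target autobidder's bid, which can only raise the auxiliaries' prices, and you check that $(m_1,m_2)=(1,2)$ is actually supported by an equilibrium.

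One small omission in your tie case: Autobidder 2's RoS also contains a term for any fraction $z$ of the target item it wins. That term is nonpositive (price at least $rm_1$ against value $r/2$), so your inequality $r/2+r-2rm_1\ge 0$ still follows.
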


\begin{proof}
    Autobidder 2 always selects a multiplier of at least 2, otherwise its RoS constraint is not binding. Moreover, we claim that there is no autobidding equilibrium in which Autobidder 1 selects a multiplier higher than 1. Indeed, in the contrary case, Autobidder 2 would necessarily select a multiplier strictly higher than 2, otherwise its RoS constraint is again not binding. However, this would imply that the price of the second item is higher than $2r$ and the price of the target item is higher than $r$. By the RoS constraint of Autobidder 1, that autobidder cannot win any positive fraction of either item. As a result, Autobidder 1 exclusively wins the second item, violating its RoS constraint. We conclude that in any autobidding equilibrium Autobidder 1 selects a multiplier 1. This also implies that the multiplier of Autobidder 2 cannot be higher than 2, for otherwise it would violate its RoS constraint by exclusively winning the second item.
\end{proof}
\section{Inapproximability under ML advice}
\label{appendix:MLadvise}

This section contains the simple extensions of~\Cref{lemma:nand,lemma:negation} under approximate signals per~\Cref{def:approx-signal}; \Cref{lemma:assignment} is essentially the same.

\begin{lemma}[Refinement of~\Cref{lemma:nand}]
    Let $m \in [1, 1+\epsilon] \cup \{M\}$ and $m' \in [1, 1 + \epsilon] \cup \{M\}$ be the multipliers of two autobidders for some $\epsilon > 0$. There is a conservative extension that produces an output autobidder whose multiplier $\widebar{m}$ satisfies, in equilibrium,
    \begin{equation*}
        \widebar{m} \in 
        \begin{cases}
            [1, 1 + 3 \epsilon] & \text{if } m = M \text{ and } m' = M, \\
            \{M\} & \text{otherwise}.
        \end{cases}
    \end{equation*}
    This holds even under reserve prices set to some $\gamma$-approximate signals per~\Cref{def:approx-signal}.
\end{lemma}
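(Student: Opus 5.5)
We reuse the construction of Table~\ref{tab:nand}, with $\eta=1$ by Claim~\ref{claim:scale-inv}. On every item we now impose reserve price $\max\{\text{old reserve}, \signal_{ij}\}$, where $\signal_{ij}\in[\gamma v_{ij}, v_{ij}]$. Because the signal is at most the value, the new reserves can only bind on items whose existing price is already close to the value. So the plan is to rerun the two-case analysis of Lemma~\ref{lemma:nand} and check that no reserve changes who wins an item or upsets the RoS accounting.

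\textbf{Conservative extension.} On items 1 and 2 the output autobidder bids at least $1/2+\epsilon$. The input autobidders bid at most $M\cdot 1/(2M)=1/2$, and their reserve is at most $1/2$. Hence the output autobidder still wins both items for every multiplier profile, and the premise of Lemma~\ref{lemma:extension} is unaffected.

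The reserves on items 3 and 4 stay $1+3\epsilon$ and $1/2$. Any signal there is at most the output autobidder's value, namely $1$ and $1/(2M)$, which is below the existing reserve. Item 3 is therefore won only if $\widebar m\ge 1+3\epsilon$. Item 4 is won only if $\widebar m\cdot\frac{1}{2M}\ge 1/2$, i.e.\ $\widebar m=M$.

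\textbf{Case $m=m'=M$.} The prices of items 1 and 2 are at least $\max\{1/2,\gamma(1/2+\epsilon)\}$. The signal on the output autobidder's own value never exceeds its bid once $\widebar m\ge1$, so this price is at least $1/2$ as before. If $\widebar m>1+3\epsilon$, the output autobidder also wins item 3 at price $1+3\epsilon$, since the reserve dominates the signal. Its payment is then at least $2+3\epsilon$ against value at most $2+2\epsilon+1/(2M)$. This is a violation as long as $M>1/(2\epsilon)$; this is the only new inequality to verify, since the original proof silently ignored item 4. Hence $\widebar m\le 1+3\epsilon$.

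\textbf{Other case.} Suppose without loss of generality $m\le 1+\epsilon$. The price of item 1 is then at most
\[
\max\Bigl\{\tfrac{1+\epsilon}{2M},\ \signal\Bigr\}\le \max\Bigl\{\tfrac{1+\epsilon}{2M},\ \tfrac12+\epsilon\Bigr\}.
\]
\textbf{Main obstacle.} The signal reserve on item 1 may be as large as $\tfrac12+\epsilon$, wiping out the surplus there. To recover it, argue directly that for any $\widebar m<M$ the output autobidder's payment is strictly below its value:
\begin{itemize}
\item[(i)] Item 2 costs at most its value $1/2+\epsilon$, since both the input bid $\le 1/2$ and the reserve $\le$ value.
\item[(ii)] Item 3 costs at most $1+3\epsilon$ versus value $1$. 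This deficit must be offset, so the surplus has to come from elsewhere.
\end{itemize}
Here the signal is problematic. If it is chosen adversarially (e.g.\ $\gamma$ close to $1$), item 1's surplus vanishes. I would first try to show that the autobidder then simply avoids item 3, lowering $\widebar m$ below $1+3\epsilon$. At that point its RoS constraint is slack or tight only through signal-priced items. To force maximal pacing, strengthen to: payment $\le$ value with equality only if every won item is priced exactly at its value. This contradicts the $\tfrac{1+\epsilon}{2M}$ input bid only when the signal is strict, so I would treat the degenerate case $\signal=v$ by noting that the edge reduction only needs $\gamma<1$ with the signal taken to be $\gamma v$. This is the step I expect to require adjusting constants, for instance by scaling item 1's value or item 3's reserve.
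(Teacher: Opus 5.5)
\textbf{Verdict: there is a genuine gap.} Your construction, the conservative-extension check, and the case $m=m'=M$ are fine. Item 4 is priced at $1/2$ against value $1/(2M)$, so it can only hurt the output autobidder; omitting it in the original proof was harmless, and your extra condition $M>1/(2\epsilon)$ is not needed.

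The gap is the remaining case. It is the only place the signals matter, and you leave it unfinished. Your worry there is justified: for an arbitrary $\gamma$-approximate signal the lemma is false. Take $\signal = v$ on item 1, $\signal=\gamma v\le 1/2$ on item 2, $m\le 1+\epsilon$, $m'=M$, and $\widebar{m}=1+3\epsilon$. Then:
\begin{itemize}
\item the output autobidder has zero surplus on item 1;
\item it has surplus $\epsilon$ on item 2;
\item it ties with the reserve on item 3.
\end{itemize}
Taking a $1/3$ fraction of item 3 makes its RoS constraint exactly tight with $\widebar{m}<M$. This is an equilibrium violating the claim. That is why the statement says \emph{some} signals. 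The paper fixes $\signal=\gamma v$, i.e., reserve $\gamma(1/2+\epsilon)$ on items 1 and 2, and leaves items 3 and 4 unchanged.

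Your proposed repair does not exclude this kind of equilibrium. Item 3 is always priced strictly above its value, so tightness can be manufactured by a fractional tie on item 3 at $\widebar{m}=1+3\epsilon$. A statement like ``payment equals value only if every won item is priced at its value'' is therefore false in this gadget, and mere strict positivity of the surplus on item 1 buys nothing. Likewise, having the autobidder ``avoid item 3'' by lowering $\widebar{m}$ only makes its constraint slack, which maximal pacing forbids.

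What is missing is a quantitative margin. Assume $\epsilon<(1-\gamma)/(2(1+\gamma))$, equivalently $\gamma(1/2+\epsilon)<1/2-\epsilon$. With reserves $\gamma(1/2+\epsilon)$ on items 1 and 2 and $M$ large:
\begin{itemize}
\item The item of a low input costs at most $\max\{(1+\epsilon)/(2M),\gamma(1/2+\epsilon)\}$, leaving surplus strictly more than $2\epsilon$.
\item The other of items 1 and 2 costs at most $\max\{1/2,\gamma(1/2+\epsilon)\}=1/2$, leaving surplus at least $\epsilon$.
\item Item 3 loses at most $3\epsilon$, and item 4 is not won when $\widebar{m}<M$.
\end{itemize}
Hence the RoS constraint is strictly slack for every $\widebar{m}<M$, and maximal pacing forces $\widebar{m}=M$. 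This is exactly the paper's inequality $1/2+\gamma(1/2+\epsilon)+1+3\epsilon<2+2\epsilon$, which also implies the two-low-inputs version $1+\gamma+2\epsilon\gamma+3\epsilon<2+2\epsilon$. No rescaling of item 1 or of item 3's reserve is needed. The only new requirement is that $\epsilon$ be chosen after, and small relative to, the constant $1-\gamma$.
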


\begin{proof}
    The construction is the same as in~\Cref{lemma:nand} (\Cref{tab:nand}), except that now the first and second items have a reserve price of $\gamma (1/2 + \epsilon)$; for the third and fourth items the existing reserve prices subsume the reserve prices coming from the approximate signals. (As before, we are working under the assumption that $\eta = 1$.)

    If $m = M$ and $m' = M$, where $m$ is the multiplier of the first input autobidder and $m'$ is the multiplier of the second input autobidder, it follows that the multiplier of the output autobidder satisfies $\widebar{m} \leq 1 + 3\epsilon $; otherwise, its RoS constraint would be violated.
    
    In the contrary case, suppose $m \in [1, 1 + \epsilon]$ or $m' \in [1, 1 + \epsilon]$. If both are in $[1, 1 + \epsilon]$, when $M$ is large enough the output autobidder pays $\gamma (1/2 + \epsilon)$ for each of the first two items, for a total of $\gamma + 2 \epsilon \gamma$. Accounting for the third item as well, the total payment is $1 + \gamma + 2 \epsilon \gamma + 3 \epsilon $, whereas the total value is $2 + 2 \epsilon$. As a result, the output autobidder will select a multiplier of $M$, otherwise maximal pacing is violated. Similarly, if only one of the input multipliers is in $[1, 1 + \epsilon]$, the total payment for the first three items is $1/2 + \gamma (1/2 + \epsilon) + 1 + 3 \epsilon $, whereas the total value is again $2 + 2 \epsilon$. When $\epsilon$ is small enough, $1/2 + \gamma (1/2 + \epsilon) + 1 + 3 \epsilon < 2 + 2 \epsilon$, so the output autobidder will bid maximally.
\end{proof}

\begin{lemma}[Refinement of~\Cref{lemma:negation}]
    Let $m \in [1, 1 + \epsilon] \cup \{M\}$ be the multiplier of an input autobidder. There is a conservative extension that produces an output autobidder whose multiplier satisfies, in equilibrium,
    \begin{equation*}
        \widebar{m} \in 
        \begin{cases}
            [1, 1 + 2 \epsilon] & \text{if } $m = M$,\\
            \{M\} & \text{otherwise}.
        \end{cases}
    \end{equation*}
    This holds even under reserve prices set to some $\gamma$-approximate signals per~\Cref{def:approx-signal}.
\end{lemma}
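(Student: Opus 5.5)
We reuse the construction of \Cref{lemma:negation} (\Cref{tab:neg}) with scaling $\eta=1$. On top of it we impose the signal-induced reserve prices, and on each item the effective reserve is the maximum of the existing reserve and the signal. The input autobidder values item 1 at $1/M$ and the output autobidder values it at $1+\epsilon$. Item 1 therefore now carries a reserve of at most $\gamma(1+\epsilon)$. On items 2 and 3 the signals are at most $1$ and $1/M$ respectively, so the existing reserves $1+2\epsilon$ and $1$ subsume them. The conservative-extension property (\Cref{lemma:extension}) is unchanged: the input autobidder bids at most $M\cdot (1/M)=1<1+\epsilon\le \widebar{m}(1+\epsilon)$ on item 1, so it never wins item 1, and it has no value for the other items. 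The only change is the price of item 1, which becomes $\max\{m\cdot (1/M),\ \text{reserve}\}$.

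\emph{Case $m=M$.} The price of item 1 is $\max\{1,\gamma(1+\epsilon)\}=1$, since $\gamma<1$ and $\epsilon$ is small. The argument from \Cref{lemma:negation} then goes through verbatim. If $\widebar{m}>1+2\epsilon$, the output autobidder also wins item 2 at price $1+2\epsilon$, and possibly item 3 at price $1$ for value $1/M$. Its total payment is then at least $2+2\epsilon$, while its value is at most $2+\epsilon+1/M$. This is a violation of RoS once $M>1/\epsilon$. Hence $\widebar{m}\le 1+2\epsilon$.

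\emph{Case $m\in[1,1+\epsilon]$.} The price of item 1 is at most $\max\{(1+\epsilon)/M,\gamma(1+\epsilon)\}\le \gamma(1+\epsilon)+(1+\epsilon)/M$. Suppose $\widebar{m}<M$. The output autobidder then bids $\widebar{m}/M<1$ on item 3, below its reserve of $1$, so it does not win item 3. It may win item 2, at price $1+2\epsilon$ for value $1$. Its total payment is at most $\gamma(1+\epsilon)+(1+\epsilon)/M+1+2\epsilon$, against a value of $2+\epsilon$. For $\gamma<1$ constant and $\epsilon$ and $1/M$ small enough in terms of $1-\gamma$, the payment is strictly below the value. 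The RoS constraint is therefore not binding, which contradicts maximal pacing. So $\widebar{m}=M$.

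The main obstacle is this second case, since the slack must absorb the new reserve on item 1. It requires $(1-\gamma)(1+\epsilon)>3\epsilon$ up to $O(1/M)$, which holds because $\epsilon$ is chosen small relative to the constant $1-\gamma$. If the underlying paper's parameters instead fix $\epsilon$ first, we reduce $\epsilon$ to, say, $\epsilon\le(1-\gamma)/8$ and take $M\ge 8/(1-\gamma)$. We also verify in the $m=M$ case that when $\widebar{m}=M$ the autobidder wins item 3, at price $1$ for value $1/M$, which only strengthens the RoS violation.
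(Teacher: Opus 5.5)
Your proof is correct and follows the paper's argument. You use the same gadget as in \Cref{lemma:negation} with a reserve of $\gamma(1+\epsilon)$ on item 1, handle the case $m=M$ unchanged, and settle the case $m\in[1,1+\epsilon]$ by showing that the output autobidder's RoS slack, $(1-\gamma)(1+\epsilon)-2\epsilon-O(1/M)$, is positive, so maximal pacing forces $M$. You are in fact more careful than the paper about item 3 and about the $m/M$ term in the price of item 1.

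You should state explicitly that you choose the signals $\signal_{ij}=\gamma v_{ij}$, as the paper implicitly does. For a general $\gamma$-approximate signal, the reserve on item 1 is at least $\gamma(1+\epsilon)$, not at most, and a reserve near $1+\epsilon$ would break your second case. Also, the exact condition in that case is $(1-\gamma)(1+\epsilon)>2\epsilon$ up to $O(1/M)$; your $3\epsilon$ version is sufficient but not required.
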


\begin{proof}
    We consider the same instance as in~\Cref{lemma:negation} (\Cref{tab:neg}), except that the first item has a reserve price of $\gamma (1 + \epsilon)$; for the second and third items the existing reserve prices subsume the reserve price stemming from the approximate signals.

    As before, if $m = M$, the RoS constraint of the output autobidder implies $\widebar{m} \leq 1 + 2 \epsilon$. In the contrary case, let $m \in [1, 1 + \epsilon]$. The total price for the fist two items is $\gamma(1 + \epsilon) + 1 + 2 \epsilon$, whereas the total value is $2 + \epsilon$. When $\epsilon$ is small enough, $\gamma(1 + \epsilon) + 1 + 2 \epsilon < 2 + \epsilon$, so maximal pacing forces the output autobidder to select $\widebar{m} = M$.
\end{proof}

\end{document}